\newtheorem{theorem}{Theorem}
\newtheorem{lemma}[theorem]{Lemma}
\newtheorem{proposition}[theorem]{Proposition}
\newtheorem{conj}{Conjecture}
\newcommand{\tr}{{\mathrm{Tr}}}
\newcommand{\gf}{{\mathrm{GF}}}
\newcommand{\GA}{{\mathrm{GA}}}
\newcommand{\bS}{{\mathbb{S}}}
\newcommand{\cP}{{\mathcal{P}}}
\newcommand{\cB}{{\mathcal{B}}}
\newcommand{\bD}{{\mathbb{D}}}
\newcommand{\Segre}{ {{\mathrm{Segre}}} }
\newcommand{\Glynnone}{ {{\mathrm{Glynni}}} }
\newcommand{\Glynntwo}{ {{\mathrm{Glynnii}}} }
\begin{document}

\begin{frontmatter}



\title{Infinite families of  $3$-designs from APN functions
\tnotetext[fn1]{The research of C. Tang was supported by National Natural Science Foundation of China (Grant No.
11871058) and China West Normal University (14E013, CXTD2014-4 and the Meritocracy Research
Funds).}
}

\author{Chunming Tang}
\ead{tangchunmingmath@163.com}

\address{School of Mathematics and Information, China West Normal University, Nanchong, Sichuan,  637002, China, and
  Department of Mathematics, The Hong Kong University of Science and Technology, Clear Water Bay, Kowloon, Hong Kong}

\begin{abstract}
Combinatorial $t$-designs have nice applications in coding theory, finite geometries
and several engineering areas.
The objective of this paper is to study how to obtain $3$-designs with $2$-transitive permutation groups.  The incidence structure formed by the orbits of a base block under the action of the general
affine groups, which are $2$-transitive, is   considered. A characterization of  such   incidence structure  to be a  $3$-design  is presented, and a sufficient condition for the stabilizer of a base block to be trivial  is given.
With these general results, infinite families of $3$-designs are constructed by employing APN functions.
Some $3$-designs presented in this paper give rise to self-dual binary codes or linear codes with optimal or best parameters known. Several conjectures on $3$-designs and binary codes are also presented.
\end{abstract}

\begin{keyword}
 APN function \sep  $t$-design \sep linear code \sep the general affine group.

\MSC  51E21 \sep 05B05 \sep 12E10

\end{keyword}

\end{frontmatter}

\section{Introduction}

Let $\cP$ be a set of $v \ge 1$ elements, and let $\cB$ be a set of $k$-subsets of $\cP$, where $k$ is
a positive integer with $1 \leq k \leq v$. Let $t$ be a positive integer with $t \leq k$. The pair $\bD = (\cP, \cB)$
is called an \emph{incidence structure}.
The incidence structure
$\bD = (\cP, \cB)$  is said     to be a $t$-$(v, k, \lambda)$ {\em design\index{design}}, or simply {\em $t$-design\index{$t$-design}}, if every $t$-subset of $\cP$ is contained in exactly $\lambda$ elements of
$\cB$. 

Let $q$ be a  power of $2$ and $\mathrm{GF}(q)$ be the finite field of order $q$.
The general affine group $\mathrm{GA}_1(q)$ of degree one consists of all the following permutations of the finite field $\mathrm{GF}(q)$:
\[\pi_{a,b}(x)=ax+b,\]
where $(a,b)\in \mathrm{GF}(q)^*\times \mathrm{GF}(q)$.
Let $B$ be a $k$-subset of   $\mathrm{GF}(q)$ and $\pi(B)=\{\pi(x): x\in B\}$, where
  $\pi \in \mathrm{GA}_1(q)$.
The orbit of $B$ under the action of $\mathrm{GA}_1(q)$ is $\mathrm{GA}_1(q)(B)=\{\pi(B): \pi \in \mathrm{GA}_1(q)\}$,
and the stabilizer of $B$ under the action of $\mathrm{GA}_1(q)$ is $\mathrm{GA}_1(q)_B=\{\pi\in \mathrm{GA}_1(q): \pi(B)=B \}$.
The incidence structure $\bS(B):=(\gf(q), \mathrm{GA}_1(q)(B))$ may be a $t$-$(q, k, \lambda)$ design for some $\lambda$,
where $\gf(q)$ is the point set, and the incidence relation is the set membership. In this case, we say that the base block  $B$ \emph{supports} a $t$-design and $(\mathrm{GF}(q), \mathrm{GA}_1(q)(B))$
is called the \emph{orbit design} of $B$.

The following theorem shows that the
incidence structure $(\gf(q), \mathrm{GA}_1(q)(B))$ is always a $2$-design (see \cite[Proposition 4.6]{BJL} or \cite{LD}).
\begin{theorem}
Let $\mathcal P=\mathrm{GF}(q)$ and $\mathcal B=\mathrm{GA}_1(q)(B)$,
where $B$ is any $k$-subset of $\mathrm{GF}(q)$ with $k\geq 2$. Then $(\mathcal P, \mathcal B)$
is a $2$-$(q,k,\lambda)$ design, where
\[\lambda= \frac{k(k-1)}{|\mathrm{GA}_1(q)_B|}.\]

\end{theorem}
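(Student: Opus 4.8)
The plan is to exploit the fact that $\mathrm{GA}_1(q)$ acts (sharply) $2$-transitively on $\mathrm{GF}(q)$, and then to invoke the classical principle that the orbit of a $k$-subset under a $2$-transitive group is a $2$-design; the exact value of $\lambda$ then drops out of a routine double count combined with the orbit--stabilizer theorem.

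First I would record the $2$-transitivity. Given two ordered pairs of distinct elements $(x_1,x_2)$ and $(y_1,y_2)$ of $\mathrm{GF}(q)$, the linear system $ax_1+b=y_1$, $ax_2+b=y_2$ has the unique solution $a=(y_1+y_2)(x_1+x_2)^{-1}\in\mathrm{GF}(q)^*$ and $b=y_1+ax_1$ (I use that $q$ is even, so signs are immaterial), so $\pi_{a,b}\in\mathrm{GA}_1(q)$ carries one pair to the other. Hence $\mathrm{GA}_1(q)$ is $2$-transitive, and in particular transitive on the $2$-subsets of $\mathrm{GF}(q)$.

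Next, fix two $2$-subsets $T_1,T_2\subseteq\mathrm{GF}(q)$ and let $\lambda_{T_i}$ be the number of blocks of $\cB=\mathrm{GA}_1(q)(B)$ containing $T_i$. Pick $g\in\mathrm{GA}_1(q)$ with $g(T_1)=T_2$; since $\cB$ is closed under $\mathrm{GA}_1(q)$, the map $C\mapsto g(C)$ is a bijection from $\{C\in\cB:\,T_1\subseteq C\}$ onto $\{C\in\cB:\,T_2\subseteq C\}$, so $\lambda_{T_1}=\lambda_{T_2}$. Calling this common value $\lambda$ shows $(\cP,\cB)$ is a $2$-$(q,k,\lambda)$ design. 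To pin down $\lambda$ I would count in two ways the incident pairs $(T,C)$ with $T$ a $2$-subset of $\mathrm{GF}(q)$, $C\in\cB$, and $T\subseteq C$: summing over $T$ gives $\binom{q}{2}\lambda$, while summing over $C$ gives $|\cB|\binom{k}{2}$. By orbit--stabilizer, $|\cB|=|\mathrm{GA}_1(q)|/|\mathrm{GA}_1(q)_B|=q(q-1)/|\mathrm{GA}_1(q)_B|$, and equating the two counts and cancelling $\binom{q}{2}=q(q-1)/2$ yields $\lambda=k(k-1)/|\mathrm{GA}_1(q)_B|$.

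There is no genuine obstacle here; the only point worth a sentence is consistency of the formula, namely that $|\mathrm{GA}_1(q)_B|$ divides $k(k-1)$ so that $\lambda$ is a positive integer. This is automatic: $\lambda$ has already been exhibited as a count of blocks (hence a nonnegative integer), and it is positive because some translate of $B$ contains any prescribed $2$-subset by $2$-transitivity. So the proof is entirely formal once the $2$-transitivity of $\mathrm{GA}_1(q)$ is in hand, and the hypothesis $k\ge 2$ is exactly what is needed for the $\binom{k}{2}$ count to be meaningful.
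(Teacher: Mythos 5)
Your proof is correct, and it is essentially the standard argument: the paper itself offers no proof of this theorem but cites \cite[Proposition 4.6]{BJL} and \cite{LD}, where the result is established in exactly this way (sharp $2$-transitivity of $\mathrm{GA}_1(q)$, invariance of the block orbit, then double counting of incident pairs together with the orbit--stabilizer theorem to get $\lambda=k(k-1)/|\mathrm{GA}_1(q)_B|$). Nothing further is needed.
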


The main motivation of this paper is to study how to choose a base block
$B\subseteq  \mathrm{GF}(q)$ properly so that
$(\mathrm{GF}(q), \mathrm{GA}_1(q)(B))$ is a $3$-design. We first  give a characterization of $3$-designs $(\gf(q), \GA_1(q)(B))$ by means of
their characteristic functions, the Walsh transforms of their characteristic functions, and  the number of solutions of some
special equations associated with the base block $B$. Next, we present a sufficient condition for the stabilizer of
a base block under the action of the general affine group to be trivial.
Finally, we introduce two constructions of $3$-designs from APN functions.
Specifically,  using the first construction,
we obtain $\frac{\phi(n)}{2}$ different $3$-designs with parameters $(2^n, 2^{n-1},2^{n-3}(2^n-4))$
from Kassami APN functions. Magma programs show that
many new $3$-designs can be obtained from the second construction. Finally, we show that
some of the linear codes from the $3$-designs of this paper
are optimal or self-dual.

The rest of this paper is arranged as follows. Section \ref{sec:characterization} gives a characterization of those base blocks supporting
$3$-designs. Section \ref{sec:stabi} presents a sufficient condition for the stabilizer of a base block to be
trivial. Section \ref{sec-construct}  presents two constructions of $3$-designs from APN functions.
Section \ref{sec:contribution} concludes this paper and makes some remarks.

\section{The characterization of the base blocks supporting  $3$-designs}\label{sec:characterization}

Let $n$ be a positive  integer and $q=2^n$.
For any Boolean function $f$ from $\mathrm{GF}(2^n)$ to $\mathrm{GF}(2)$, the \emph{Walsh transform} of $f$ at $ \mu \in  \mathrm{GF}(2^n)$  is defined as
\begin{align*}
\hat{f}(\mu )=\sum_{x\in \mathrm{GF}(2^n)} (-1)^{f(x)+ \mathrm{Tr}\left(\mu x\right) },
\end{align*}
where $ \mathrm{Tr}(\cdot)$   is the absolute trace function from $\mathrm{GF}(2^n)$ to $\mathrm{GF}(2)$.
All the values $\hat{f}(\mu )$ are also called the  \emph{Walsh coefficients} of $f$.
The Boolean function
$f$ is said to be \emph{semi-bent} if $\{\hat{f}(\mu ): \mu \in \mathrm{GF}(2^n)\}=\{0, \pm 2^{\frac{n+1}{2}}\}$.
Hence semi-bent functions over $\mathrm{GF}(2^n)$ exist only for odd $n$.

Let $B$ be a subset of $\mathrm{GF}(q)$. Then, the characteristic  function $f_{B}(x)$ of $B$ is given by
\begin{align*}
f_{B}(x)=
\left\{
  \begin{array}{ll}
1, & x \in B,\\
 0, & \text{otherwise}.
  \end{array}
\right.
\end{align*}

Some results on characteristic functions are given in the following lemmas.

\begin{lemma}\label{lem:one-two=0}
Let $a, b$  be two distinct elements in $ \mathrm{GF}(q)$.  Let $B$ be a $k$-subset of  $\mathrm{GF}(q)$.
Then

(1) $\hat{f}_B(0)=q-2k$.

(2) $\sum_{x ,y\in \mathrm{GF}(q)} (-1)^{f_{B}(ax+y)}= q(q-2k)$.

(3) $\sum_{x, y \in \mathrm{GF}(q)} (-1)^{f_{B}(a x+y)+f_{B}(b x+y)}=(q-2k)^2$.
\end{lemma}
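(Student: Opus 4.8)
The plan is to evaluate each of the three sums directly, reducing parts (2) and (3) to part (1) by observing that the relevant linear substitutions in the summation variables are bijections of $\mathrm{GF}(q)$, respectively of $\mathrm{GF}(q)^2$.

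For (1), I would just unfold the definition of the Walsh transform at $\mu=0$: $\hat f_B(0)=\sum_{x\in \mathrm{GF}(q)}(-1)^{f_B(x)}$. Since $f_B$ equals $1$ on the $k$ points of $B$ and $0$ on the remaining $q-k$ points, this sum is $(q-k)-k=q-2k$. For (2), fix $x\in\mathrm{GF}(q)$ and note that $y\mapsto ax+y$ is a bijection of $\mathrm{GF}(q)$ (a translation), so $\sum_{y\in\mathrm{GF}(q)}(-1)^{f_B(ax+y)}=\sum_{z\in\mathrm{GF}(q)}(-1)^{f_B(z)}=\hat f_B(0)=q-2k$ by (1). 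Summing this over the $q$ choices of $x$ yields $q(q-2k)$.

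For (3), I would perform the change of variables $u=ax+y$, $v=bx+y$. Because $a\neq b$, we have $u-v=(a-b)x$ with $a-b\neq 0$, so $x=(a-b)^{-1}(u-v)$ and then $y=u-ax$; hence $(x,y)\mapsto(u,v)$ is a bijection of $\mathrm{GF}(q)^2$ onto itself. Thus $\sum_{x,y\in\mathrm{GF}(q)}(-1)^{f_B(ax+y)+f_B(bx+y)}=\sum_{u,v\in\mathrm{GF}(q)}(-1)^{f_B(u)+f_B(v)}=\bigl(\sum_{u\in\mathrm{GF}(q)}(-1)^{f_B(u)}\bigr)\bigl(\sum_{v\in\mathrm{GF}(q)}(-1)^{f_B(v)}\bigr)=(q-2k)^2$, again by (1). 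The only point needing care — and it is scarcely an obstacle — is verifying that the substitution in (3) is invertible, which is exactly where the hypothesis $a\neq b$ is used; without it the arguments $ax+y$ and $bx+y$ coincide and the double sum degenerates to the one in (2) rather than factoring.
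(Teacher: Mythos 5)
Your proposal is correct and follows essentially the same route as the paper: a direct count for (1), the translation $y\mapsto ax+y$ for (2), and the bijection $(x,y)\mapsto(ax+y,bx+y)$ of $\mathrm{GF}(q)^2$ (valid exactly because $a\neq b$) to factor the sum in (3) as $(\hat f_B(0))^2$.
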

\begin{proof}
By the definition of Walsh transform, one has
\begin{align*}
\hat{f}_B(0)=& \sum_{x\in \mathrm{GF}(q)} (-1)^{f_B(x)}\\
=& q-|B|-|B|\\
=& q-2k.
\end{align*}
The conclusion of Part (1) then follows.

For $a \in \mathrm{GF}(q)$, one gets
\begin{align*}
\sum_{x ,y\in \mathrm{GF}(q)} (-1)^{f_{B}(ax+y)}=& \sum_{x \in \mathrm{GF}(q)}  \sum_{y \in \mathrm{GF}(q)}  (-1)^{f_{B}(ax+y)}\\
=& \sum_{x \in \mathrm{GF}(q)}  \sum_{y \in \mathrm{GF}(q)}  (-1)^{f_{B}(y)}\\
=& q\hat{f}_B(0)\\
=& q(q-2k).
\end{align*}

Since $a\neq b$,  $(x,y) \mapsto (ax+y, bx+y)$ is a bijection over $\mathrm{GF}(q)^2$. Then
\begin{align*}
\sum_{x, y \in \mathrm{GF}(q)} (-1)^{f_{B}(a x+y)+f_{B}(b x+y)}=& \sum_{x, y \in \mathrm{GF}(q)} (-1)^{f_{B}(x)+f_{B}(y)}\\
=& \left (\sum_{x \in \mathrm{GF}(q)} (-1)^{f_{B}(x)} \right )^2\\
=& (\hat{f}_B(0))^2\\
=& (q-2k)^2.
\end{align*}
This completes the proof.
\end{proof}

\begin{lemma}\label{lem:add-mult}
Let $a,b \in \mathrm{GF}(q)$ and $E$ be any $k$-subset of $\mathrm{GF}(q)$. Then
\[ \sum_{x,y\in \mathrm{GF}(q)} (-1)^{f_{E}(x)+f_{E}(y)+f_{E}(ax+by)}   =\frac{1}{q}\sum_{\alpha\in \mathrm{GF}(q)} \hat{f}_{E}(a \alpha) \hat{f}_{E}(b \alpha) \hat{f}_{E}( \alpha).\]
\end{lemma}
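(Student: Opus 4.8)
The plan is to expand each $(-1)^{f_E(\cdot)}$ via Fourier inversion on $(\mathrm{GF}(2^n),+)$ and then collapse the resulting character sums. Recall that for any function $g\colon \mathrm{GF}(q)\to\bR$ we have the inversion formula $(-1)^{g(z)} = \frac{1}{q}\sum_{\gamma\in\mathrm{GF}(q)} \widehat{g}(\gamma)(-1)^{\mathrm{Tr}(\gamma z)}$ when $g$ is Boolean and $\widehat{g}$ is the Walsh transform as defined above; more precisely, writing $\chi(u) = (-1)^{\mathrm{Tr}(u)}$, one has $(-1)^{g(z)} = \frac{1}{q}\sum_{\gamma} \widehat{g}(\gamma)\chi(\gamma z)$. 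I would apply this to each of the three factors $(-1)^{f_E(x)}$, $(-1)^{f_E(y)}$, $(-1)^{f_E(ax+by)}$, introducing three independent summation variables $\alpha,\beta,\gamma\in\mathrm{GF}(q)$.

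After substituting, the left-hand side becomes
\[
\frac{1}{q^3}\sum_{x,y}\sum_{\alpha,\beta,\gamma} \widehat{f}_E(\alpha)\widehat{f}_E(\beta)\widehat{f}_E(\gamma)\,\chi(\alpha x)\chi(\beta y)\chi(\gamma(ax+by)).
\]
Then I would group the $x$-dependence and the $y$-dependence: the exponent of $\chi$ involving $x$ is $\mathrm{Tr}((\alpha + a\gamma)x)$ and the one involving $y$ is $\mathrm{Tr}((\beta + b\gamma)y)$. Summing over $x\in\mathrm{GF}(q)$ gives $q$ if $\alpha = a\gamma$ and $0$ otherwise, and likewise summing over $y$ gives $q$ if $\beta = b\gamma$ and $0$ otherwise (using the standard orthogonality relation $\sum_{x}\chi(ux) = q\,[u=0]$). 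This kills two of the three outer sums, forcing $\alpha = a\gamma$ and $\beta = b\gamma$, and the two factors of $q$ together with the $\frac{1}{q^3}$ leave $\frac{1}{q}\sum_{\gamma}\widehat{f}_E(a\gamma)\widehat{f}_E(b\gamma)\widehat{f}_E(\gamma)$, which is exactly the right-hand side after renaming $\gamma$ to $\alpha$.

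This argument is essentially a routine Fourier manipulation, so I do not anticipate a genuine obstacle; the only point requiring a little care is the bookkeeping of which variable pairs with which linear form, and making sure the substitution $(-1)^{f_E(ax+by)} = \frac{1}{q}\sum_\gamma \widehat{f}_E(\gamma)\chi(\gamma(ax+by))$ is applied with the argument $ax+by$ treated as a single field element before expanding $\mathrm{Tr}(\gamma(ax+by)) = \mathrm{Tr}(a\gamma x) + \mathrm{Tr}(b\gamma y)$ by linearity of the trace. One should also note that the formula holds for all $a,b\in\mathrm{GF}(q)$ including degenerate choices (e.g. $a=0$ or $b=0$), since the orthogonality step does not require $a,b$ to be nonzero or distinct — in those cases the corresponding constraint simply becomes $\alpha = 0$ or $\beta = 0$ and the identity still goes through.
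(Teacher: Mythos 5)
Your proof is correct and is essentially the same character-sum computation as the paper's: the paper inserts the indicator $\frac{1}{q}\sum_{\alpha\in\mathrm{GF}(q)}(-1)^{\mathrm{Tr}(\alpha(ax+by+z))}$ to replace $f_E(ax+by)$ by $f_E(z)$ and then reads off the three Walsh transforms, while you expand all three factors by Walsh inversion and collapse two of the dual variables via orthogonality over $x$ and $y$ --- the same argument organized in the dual order. No gaps, and your remark that $a$ or $b$ may be zero is consistent with the lemma's hypotheses.
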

\begin{proof}
Let $S=\sum_{x,y\in \mathrm{GF}(q)} (-1)^{f_{E}(x)+ f_{E}(y)+f_{E}(ax+by)  }$.
Then
\begin{align*}
S=&\frac{1}{q} \sum_{x,y,z\in \mathrm{GF}(q)} (-1)^{f_{E}(x)+ f_{E}(y)+f_{E}(z)}\sum_{\alpha\in \mathrm{GF}(q)} (-1)^{\mathrm{Tr}(\alpha(ax+by+z))}\\
=&\frac{1}{q} \sum_{\alpha\in \mathrm{GF}(q)} \hat{f}_{E}(a \alpha) \hat{f}_{E}(b \alpha) \hat{f}_{E}( \alpha).
\end{align*}
This completes the proof.
\end{proof}

Let $E$ be any subset of $\mathrm{GF}(q)$ and $a,b,c\in \mathrm{GF}(q)$. Define
\begin{align*}
N_E(a,b,c)=|\{ax+by+cz=0: x,y,z\in E\}|.
\end{align*}
The following lemma gives the relation between characteristic functions and $N_E(a,b,c)$.
\begin{lemma}\label{lem:prodof3spec}
Let $a,b, c \in \mathrm{GF}(q)^*$ and $E$ be any $k$-subset of $\mathrm{GF}(q)$. Then
\[\frac{1}{q}\sum_{\alpha\in \mathrm{GF}(q)} \hat{f}_{E}(a \alpha) \hat{f}_{E}(b \alpha) \hat{f}_{E}( c \alpha) =q^2-6kq+12k^2- 8 N_E(a, b, c).\]
\end{lemma}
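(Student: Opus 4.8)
The plan is to expand the product of Walsh transforms directly and count. First I would write
\[
\hat f_E(a\alpha)=\sum_{u\in \gf(q)}(-1)^{f_E(u)+\Tr(a\alpha u)},
\]
and similarly for the other two factors, so that
\[
\frac1q\sum_{\alpha\in\gf(q)}\hat f_E(a\alpha)\hat f_E(b\alpha)\hat f_E(c\alpha)
=\frac1q\sum_{u,v,w\in\gf(q)}(-1)^{f_E(u)+f_E(v)+f_E(w)}\sum_{\alpha\in\gf(q)}(-1)^{\Tr(\alpha(au+bv+cw))}.
\]
The inner sum over $\alpha$ is $q$ when $au+bv+cw=0$ and $0$ otherwise, so the whole expression collapses to
\[
\sum_{\substack{u,v,w\in\gf(q)\\ au+bv+cw=0}}(-1)^{f_E(u)+f_E(v)+f_E(w)}.
\]
(Equivalently, one could invoke Lemma~\ref{lem:add-mult} after the substitution $x=a^{-1}u$ etc., using that $a,b,c\in\gf(q)^*$; but the direct route is just as short.)

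Next I would evaluate this signed count by splitting on how many of $u,v,w$ lie in $E$. Write $g(u)=(-1)^{f_E(u)}$, so $g(u)=-1$ on $E$ and $+1$ off $E$, i.e. $g=1-2f_E$ as integer-valued functions. Then $(-1)^{f_E(u)+f_E(v)+f_E(w)}=g(u)g(v)g(w)=(1-2f_E(u))(1-2f_E(v))(1-2f_E(w))$. Expanding the product gives
\[
1-2\bigl(f_E(u)+f_E(v)+f_E(w)\bigr)+4\bigl(f_E(u)f_E(v)+f_E(u)f_E(w)+f_E(v)f_E(w)\bigr)-8f_E(u)f_E(v)f_E(w).
\]
Now sum each term over the set $\{(u,v,w):au+bv+cw=0\}$, which has exactly $q^2$ triples since $c\neq0$ determines $w$ from $u,v$.

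The bookkeeping is then routine: the constant term contributes $q^2$; by symmetry each of the three linear terms contributes $-2$ times $|\{(u,v,w): au+bv+cw=0,\ u\in E\}|$, and since fixing $u\in E$ and $v\in\gf(q)$ arbitrarily determines $w$, that count is $kq$, giving $-6kq$ total; each of the three quadratic terms contributes $4$ times $|\{(u,v,w): au+bv+cw=0,\ u,v\in E\}|=4k^2$ (fixing $u,v\in E$ determines $w$), giving $+12k^2$; and the cubic term contributes $-8$ times $|\{(u,v,w): au+bv+cw=0,\ u,v,w\in E\}|=-8N_E(a,b,c)$. Adding these yields $q^2-6kq+12k^2-8N_E(a,b,c)$, as claimed. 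There is no real obstacle here; the only point requiring a little care is confirming that in each of the linear and quadratic sums the unconstrained variables are genuinely free so that the counts are $kq$ and $k^2$ respectively — this uses $a,b,c\in\gf(q)^*$ (so that whichever variable is solved for is uniquely determined), which is exactly the hypothesis of the lemma.
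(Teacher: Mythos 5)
Your proof is correct: the orthogonality reduction is valid, the solution counts $q^2$, $kq$, $k^2$ are right (and you correctly note that each count uses the nonvanishing of whichever of $a,b,c$ multiplies the variable being solved for), and the inclusion--exclusion via $(-1)^{f_E}=1-2f_E$ assembles to the stated identity. The route, however, differs from the paper's. The paper never forms the signed sum over the plane $\{au+bv+cw=0\}$; instead it splits off the $\alpha=0$ term, uses the identity $\hat f_E(\alpha)=-2\sum_{x\in E}(-1)^{\Tr(\alpha x)}$ valid for $\alpha\neq 0$ (which needs $a,b,c\neq 0$ so that $a\alpha,b\alpha,c\alpha\neq 0$), applies orthogonality only to triples ranging over $E^3$, and lands on $-8N_E(a,b,c)+\tfrac{8}{q}|E|^3+\tfrac{1}{q}\hat f_E(0)^3$, which simplifies via $\hat f_E(0)=q-2k$ and the expansion of $(q-2k)^3$. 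Your version instead applies orthogonality over all of $\gf(q)^3$ first (essentially the argument of Lemma~\ref{lem:add-mult}, as you note) and then replaces the algebraic cube expansion by an elementary counting argument; the trade-off is that you must verify three linear and three quadratic counts, while the paper must handle the $\alpha=0$ term separately and do a short polynomial simplification. Both are complete and of comparable length; your counting version makes more transparent exactly where each hypothesis $a\neq0$, $b\neq0$, $c\neq0$ enters.
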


\begin{proof}
Let $\alpha\neq 0$, then
\begin{align*}
\hat{f}_E(\alpha)=& \sum_{x\in \mathrm{GF}(q)}  (-1)^{f_E(x)} (-1)^{\mathrm{Tr}(\alpha x)}\\
=& \sum_{x\in \mathrm{GF}(q)}  \left ( (-1)^{f_E(x)} -1 \right ) (-1)^{\mathrm{Tr}(\alpha x)} +\sum_{x\in \mathrm{GF}(q)}(-1)^{\mathrm{Tr}(\alpha x)}\\
=& -2 \sum_{x\in E} (-1)^{\mathrm{Tr}(\alpha x)} +\sum_{x\in \mathrm{GF}(q)}(-1)^{\mathrm{Tr}(\alpha x)}\\
=& -2 \sum_{x\in E} (-1)^{\mathrm{Tr}(\alpha x)}.
\end{align*}
Denote $\frac{1}{q}\sum_{\alpha\in \mathrm{GF}(q)} \hat{f}_{E}(a \alpha) \hat{f}_{E}(b \alpha) \hat{f}_{E}( c \alpha)$ by $S$, then

\begin{align*}
S=& \frac{1}{q} \sum_{\alpha\in \mathrm{GF}(q)^*} \hat{f}_{E}(a \alpha) \hat{f}_{E}(b \alpha) \hat{f}_{E}( c \alpha) +  \frac{1}{q}  \hat{f}_{E}(0)^3\\
=& \frac{-8}{q} \sum_{\alpha\in \mathrm{GF}(q)^*} \left ( \sum_{x\in E} (-1)^{\mathrm{Tr}(a \alpha x)} \right ) \left ( \sum_{x\in E} (-1)^{\mathrm{Tr}(b \alpha x)} \right )
\left ( \sum_{x\in E} (-1)^{\mathrm{Tr}(c \alpha x)} \right )+  \frac{1}{q}  \hat{f}_{E}(0)^3\\
=& \frac{-8}{q} \sum_{\alpha\in \mathrm{GF}(q)^*} \sum_{x, y, z \in E } (-1)^{\mathrm{Tr}\left ( a  \alpha x+ b \alpha y +  c \alpha z \right )}+  \frac{1}{q}  \hat{f}_{E}(0)^3\\
=& \frac{-8}{q} \sum_{x, y, z \in E }  \sum_{\alpha\in \mathrm{GF}(q)^*} (-1)^{\mathrm{Tr}\left ( \left ( a x+ b y + c z \right )\alpha \right)}+  \frac{1}{q}  \hat{f}_{E}(0)^3\\
=&\frac{-8}{q} \sum_{x, y, z \in E }  \sum_{\alpha\in \mathrm{GF}(q)} (-1)^{\mathrm{Tr}\left ( \left ( a x+ b y + c z \right )\alpha \right)}+ \frac{8}{q} |E|^3+ \frac{1}{q}  \hat{f}_{E}(0)^3\\
=& - 8 N_{E}(a, b, c)+\frac{8}{q} |E|^3+  \frac{1}{q}  \hat{f}_{E}(0)^3.
\end{align*}
The desired conclusion then follows from Part (1) of Lemma \ref{lem:one-two=0}.
\end{proof}

In order to characterize those base blocks supporting
$3$-designs, we need the next lemmas.

\begin{lemma}\label{lem:3design-equation}
Let $B$ be a $k$-subset of $\mathrm{GF}(q)$.
Let $E$ be a subset of $\mathrm{GF}(q)$ such that $\hat{f}_B(\mu)=\hat{f}_E(\mu^{d})$ for any $\mu \in \mathrm{GF}(q)$,
  where   $\mathrm{gcd}(d,q-1)=1$. Let $a, b\in \mathrm{GF}(q)^*$.
Then
\begin{align*}
\sum_{x,y\in \mathrm{GF}(q)} (-1)^{f_{B}(x)+ f_{B}(y)+f_{B}(ax+by)  }= q^2-6qk+12k^2- 8 N_E(a^{d}, b^{d}, 1).
\end{align*}
  In particular, $N_B(a, b, 1)=N_E(a^{d}, b^{d}, 1)$.
\end{lemma}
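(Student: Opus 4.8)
The plan is to glue together the three computational lemmas already established. First I would extract from the hypothesis, by setting $\mu=0$, that $E$ has the same size as $B$: since $\hat f_B(0)=\hat f_E(0^{d})=\hat f_E(0)$, Part~(1) of Lemma~\ref{lem:one-two=0} gives $q-2k=q-2|E|$, so $E$ is a $k$-subset of $\mathrm{GF}(q)$. This is needed because Lemma~\ref{lem:prodof3spec} will be applied to $E$ with the very parameter $k$.

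Next I would apply Lemma~\ref{lem:add-mult} to $B$ to write
\[
\sum_{x,y\in\mathrm{GF}(q)}(-1)^{f_{B}(x)+f_{B}(y)+f_{B}(ax+by)}=\frac{1}{q}\sum_{\alpha\in\mathrm{GF}(q)}\hat f_{B}(a\alpha)\,\hat f_{B}(b\alpha)\,\hat f_{B}(\alpha).
\]
Substituting $\hat f_{B}(\mu)=\hat f_{E}(\mu^{d})$ at $\mu=a\alpha,\,b\alpha,\,\alpha$ and using $(\lambda\alpha)^{d}=\lambda^{d}\alpha^{d}$ turns the right-hand side into $\frac1q\sum_{\alpha}\hat f_{E}(a^{d}\alpha^{d})\,\hat f_{E}(b^{d}\alpha^{d})\,\hat f_{E}(\alpha^{d})$. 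Since $\gcd(d,q-1)=1$, the map $\alpha\mapsto\alpha^{d}$ is a permutation of $\mathrm{GF}(q)$ (it fixes $0$ and is a bijection on the cyclic group $\mathrm{GF}(q)^{*}$), so the change of variables $\beta=\alpha^{d}$ yields $\frac1q\sum_{\beta}\hat f_{E}(a^{d}\beta)\,\hat f_{E}(b^{d}\beta)\,\hat f_{E}(\beta)$. Now Lemma~\ref{lem:prodof3spec}, applied to the $k$-subset $E$ with the triple $(a^{d},b^{d},1)$ --- all three lie in $\mathrm{GF}(q)^{*}$ because $a,b\in\mathrm{GF}(q)^{*}$ --- evaluates this to $q^{2}-6kq+12k^{2}-8N_{E}(a^{d},b^{d},1)$, which is exactly the claimed identity.

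For the ``in particular'' clause I would apply Lemma~\ref{lem:add-mult} and Lemma~\ref{lem:prodof3spec} directly to $B$ with the triple $(a,b,1)$, obtaining $\sum_{x,y}(-1)^{f_{B}(x)+f_{B}(y)+f_{B}(ax+by)}=q^{2}-6kq+12k^{2}-8N_{B}(a,b,1)$; comparing with the identity just proved and cancelling $q^{2}-6kq+12k^{2}$ gives $N_{B}(a,b,1)=N_{E}(a^{d},b^{d},1)$.

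I do not anticipate a genuine obstacle here: the argument is bookkeeping that chains the earlier lemmas. The only two points needing a moment's care are (i) recording that $|E|=k$ before invoking Lemma~\ref{lem:prodof3spec} on $E$, and (ii) justifying that $\alpha\mapsto\alpha^{d}$ is a bijection of all of $\mathrm{GF}(q)$, including at $0$ --- this is precisely where the hypothesis $\gcd(d,q-1)=1$ is used.
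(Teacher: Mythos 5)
Your proposal is correct and follows essentially the same route as the paper: apply Lemma~\ref{lem:add-mult} to $B$, substitute $\hat f_B(\mu)=\hat f_E(\mu^d)$ and reindex via the permutation $\alpha\mapsto\alpha^d$, then invoke Lemma~\ref{lem:prodof3spec} for $E$ with the triple $(a^d,b^d,1)$, using $\hat f_E(0)=\hat f_B(0)$ to get $|E|=k$. Your explicit derivation of the ``in particular'' clause (applying the same two lemmas directly to $B$ with $(a,b,1)$ and cancelling) is exactly the intended, though unstated, argument in the paper, and you correctly supply the factor $8$ that the paper's intermediate display accidentally drops.
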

\begin{proof}
Let $S=\sum_{x,y\in \mathrm{GF}(q)} (-1)^{f_{B}(x)+ f_{B}(y)+f_{B}(ax+by)  }$.
Using Lemma \ref{lem:add-mult}, one obtains
\begin{align*}
S=\frac{1}{q} \sum_{\alpha\in \mathrm{GF}(q)} \hat{f}_{B}(a \alpha) \hat{f}_{B}(b \alpha) \hat{f}_{B}( \alpha).
\end{align*}
 From $\hat{f}_B(\mu)=\hat{f}_E(\mu^{d})$, one has
\begin{align*}
S=& \frac{1}{q} \sum_{\alpha\in \mathrm{GF}(q)} \hat{f}_{E}((a \alpha)^d) \hat{f}_{E}((b \alpha)^d) \hat{f}_{E}( \alpha^d)\\
=& \frac{1}{q} \sum_{\alpha\in \mathrm{GF}(q)} \hat{f}_{E}(a^d \alpha) \hat{f}_{E}(b^d \alpha) \hat{f}_{E}( \alpha).
\end{align*}
From Lemma \ref{lem:prodof3spec},  one has
\[S=q^2-6q|E|+12|E|^2- N_E(a^d,b^d,1).\]
From Part (1) of Lemma \ref{lem:one-two=0} and $\hat{f}_E(0)=\hat{f}_B(0)$, one obtains   $|E|=|B|=k$.
The desired conclusion then follows.
\end{proof}

\begin{lemma}\label{lem:I-N}
Let $B$ be a $k$-subset of $\mathrm{GF}(q)$.
Let $E$ be a subset of $\mathrm{GF}(q)$ such that $\hat{f}_B(\mu)=\hat{f}_E(\mu^{d})$ for any $\mu \in \mathrm{GF}(q)$,
  where   $\mathrm{gcd}(d,q-1)=1$.
Let $I_B(u_1,u_2,u_3)=|\left \{(x,y)\in \mathrm{GF}(q)^2: u_i x+y \in B \ (i=1,2,3)\right \}|$, where $u_1, u_2, u_3$ are three  pairwise distinct elements in $\mathrm{GF}(q)$.
Then
\begin{align*}
I_B(u_1,u_2,u_3)=N_{E}\left ( (u_2+u_3)^d, (u_3+u_1)^d, (u_1+u_2)^d \right ).
\end{align*}

\end{lemma}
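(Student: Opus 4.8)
The plan is to expand the indicator count $I_B(u_1,u_2,u_3)$ into sign sums and recognize each piece through the lemmas already established. Writing $f_B(t)=\frac{1-(-1)^{f_B(t)}}{2}$ and abbreviating $g_i=(-1)^{f_B(u_i x+y)}$, we get
\begin{align*}
I_B(u_1,u_2,u_3)=\sum_{x,y\in \mathrm{GF}(q)}\prod_{i=1}^{3} f_B(u_i x+y)=\frac{1}{8}\sum_{x,y\in \mathrm{GF}(q)}(1-g_1)(1-g_2)(1-g_3).
\end{align*}
Expanding the product as $1-\sum_i g_i+\sum_{i<j}g_i g_j-g_1g_2g_3$ and summing over $x,y$, the constant term contributes $q^2$, each single-variable sum $\sum_{x,y}g_i$ contributes $q(q-2k)$ by Part (2) of Lemma \ref{lem:one-two=0}, and each pair sum $\sum_{x,y}g_i g_j$ contributes $(q-2k)^2$ by Part (3) of Lemma \ref{lem:one-two=0}, which applies because the $u_i$ are pairwise distinct.

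The remaining ingredient is the triple sum $T:=\sum_{x,y}(-1)^{f_B(u_1x+y)+f_B(u_2x+y)+f_B(u_3x+y)}$, which I would reduce to the form handled by Lemma \ref{lem:3design-equation} via the linear substitution $x'=u_1x+y$, $y'=u_2x+y$. This is a bijection of $\mathrm{GF}(q)^2$ since $u_1\neq u_2$ gives $u_1+u_2\neq 0$, and a short characteristic-$2$ computation yields $u_3x+y=ax'+by'$ with $a=\frac{u_2+u_3}{u_1+u_2}$ and $b=\frac{u_1+u_3}{u_1+u_2}$; both are nonzero because $u_1,u_2,u_3$ are pairwise distinct. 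Hence $T=\sum_{x',y'}(-1)^{f_B(x')+f_B(y')+f_B(ax'+by')}$, and Lemma \ref{lem:3design-equation} gives $T=q^2-6qk+12k^2-8N_E(a^d,b^d,1)$.

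Substituting back,
\begin{align*}
I_B(u_1,u_2,u_3)=\frac{1}{8}\Bigl[q^2-3q(q-2k)+3(q-2k)^2-\bigl(q^2-6qk+12k^2-8N_E(a^d,b^d,1)\bigr)\Bigr]=N_E(a^d,b^d,1),
\end{align*}
since the polynomial in $q$ and $k$ cancels identically. To finish I would put $N_E(a^d,b^d,1)$ into the claimed symmetric shape: as $t\mapsto t^d$ is multiplicative, $a^d=(u_2+u_3)^d/(u_1+u_2)^d$ and $b^d=(u_3+u_1)^d/(u_1+u_2)^d$, so multiplying the defining equation $a^dx+b^dy+z=0$ by the nonzero constant $(u_1+u_2)^d$ shows $N_E(a^d,b^d,1)=N_E\bigl((u_2+u_3)^d,(u_3+u_1)^d,(u_1+u_2)^d\bigr)$, which is the assertion. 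The only steps needing care are the characteristic-$2$ bookkeeping in the change of variables (in particular verifying the new coefficients are units) and the final polynomial cancellation; neither presents a genuine obstacle.
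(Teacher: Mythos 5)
Your proposal is correct and follows essentially the same route as the paper: expand the product of indicators, evaluate the constant, single and pair sums via Lemma \ref{lem:one-two=0}, reduce the triple sum by a linear change of variables (yours is just the inverse of the paper's substitution) to the form covered by Lemma \ref{lem:3design-equation}, and finish with the homogeneity $N_E(a,b,c)=N_E(\lambda a,\lambda b,\lambda c)$. No gaps.
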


\begin{proof}
From the definition of the function $f_{B}$, one has
\begin{align*}
8I_B(u_1,u_2,u_3)=& \sum_{x, y \in \mathrm{GF}(q)} \prod_{i=1}^3\left ( 1-(-1)^{f_{B}(u_i x+y)} \right )\\
=& q^2 - \sum_{i=1}^3 \sum_{x, y \in \mathrm{GF}(q)} (-1)^{f_{B}(u_i x+y)}\\
& + \sum_{1\le i<j \le 3} \sum_{x, y \in \mathrm{GF}(q)} (-1)^{f_{B}(u_i x+y)+f_{B}(u_j x+y)}\\
&-  \sum_{x, y \in \mathrm{GF}(q)} (-1)^{\sum_{i=1}^3 f_{B}(u_i x+y)}.
\end{align*}
By Lemma \ref{lem:one-two=0}, one gets
\begin{align*}
8I_B(u_1,u_2,u_3)=q^2-6kq+12k^2 -  \sum_{z, w \in \mathrm{GF}(q)} (-1)^{\sum_{i=1}^3 f_{B}(u_i z+w)}.
\end{align*}
With the substitution $z=\frac{x+y}{u_1+u_2}$, $w=u_1 \frac{x+y}{u_1+u_2}+x$, one obtains
\begin{align*}
8I_B(u_1,u_2,u_3)=& q^2-6kq+12k^2-  \sum_{z, w \in \mathrm{GF}(q)} (-1)^{\sum_{i=1}^3 f_{B}(u_i z+w)}\\
=& q^2-6kq+12k^2- \sum_{x, y \in \mathrm{GF}(q) } (-1)^{ f_{B}(x)+ f_{B}(y)+f_{B}\left ( \frac{(u_2+u_3)x+ (u_3+u_1)y}{u_1+u_2} \right )}.
\end{align*}
By Lemma \ref{lem:3design-equation}, one gets
\begin{align*}
I_B(u_1,u_2,u_3)=N_{E}\left ( \left (\frac{u_2+u_3}{u_1+u_2} \right )^d, \left (\frac{u_3+u_1}{u_1+u_2} \right )^d ,1\right ).
\end{align*}
Since $N_{E}(a,b,c)=N_{E}(\lambda a, \lambda b, \lambda c)$ for $a,b,c\in \mathrm{GF}(q)$ and $\lambda \in \mathrm{GF}(q)^*$, the desired conclusion then follows.
\end{proof}

The following lemma characterizes the
$3$-design from the base block $B$ by
$I_B(u_1,u_2,u_3)$.
\begin{lemma}\label{lem:lam-I}
Let $B$ be a $k$-subset of $\mathrm{GF}(q)$ with $k\ge 3$ and $\mathcal B= \mathrm{GA}_1(q)(B)$.
Define $I_B(u_1,u_2,u_3)=|\left \{(x,y)\in \mathrm{GF}(q)^2: u_i x+y \in B \  (i=1,2,3)\right \}|$,
where $u_1, u_2, u_3$ are three  pairwise distinct elements in $\mathrm{GF}(q)$.
Then, $(\mathrm{GF}(q), \mathcal B)$ is a $3$-design, if and only if, $I_B(u_1,u_2,u_3)$
is independent of the specific choice of $u_1, u_2 $ and $u_2$.
\end{lemma}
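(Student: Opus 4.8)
The plan is to connect the defining property of a $3$-design directly with the quantities $I_B(u_1,u_2,u_3)$ by a double-counting argument, so that the combinatorial condition ``$(\mathrm{GF}(q),\mathcal B)$ is a $3$-design'' becomes literally the statement ``$I_B(u_1,u_2,u_3)$ is constant''. First I would fix a $3$-subset $T=\{t_1,t_2,t_3\}$ of $\mathrm{GF}(q)$ and count the pairs $(\pi, \text{nothing})$, i.e., the number
\[
\lambda_T := \bigl|\{\pi \in \mathrm{GA}_1(q) : T \subseteq \pi(B)\}\bigr| \big/ |\mathrm{GA}_1(q)_B|,
\]
which is the number of blocks of $\mathcal B$ containing $T$ (here I use that every block in the orbit is hit $|\mathrm{GA}_1(q)_B|$ times, exactly as in the cited $2$-design theorem). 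So $(\mathrm{GF}(q),\mathcal B)$ is a $3$-design if and only if $\lambda_T$ is independent of $T$, equivalently if and only if $M_T := |\{\pi \in \mathrm{GA}_1(q): T\subseteq \pi(B)\}|$ is independent of $T$.

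Next I would rewrite $M_T$ in terms of $I_B$. Writing $\pi = \pi_{a,b}$ with $a\in\mathrm{GF}(q)^*$, $b\in\mathrm{GF}(q)$, we have $t_i\in\pi_{a,b}(B)$ iff $\pi_{a,b}^{-1}(t_i) = a^{-1}(t_i+b)\in B$, i.e., iff $a^{-1}t_i + a^{-1}b \in B$. Setting $x=a^{-1}$ (ranging over $\mathrm{GF}(q)^*$) and $y=a^{-1}b$ (ranging over $\mathrm{GF}(q)$ as $b$ does, once $a$ is fixed), the condition $T\subseteq\pi_{a,b}(B)$ becomes $x t_i + y \in B$ for $i=1,2,3$ with the single caveat $x\ne 0$. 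Hence
\[
M_T = \bigl|\{(x,y)\in\mathrm{GF}(q)^*\times\mathrm{GF}(q): x t_i + y\in B,\ i=1,2,3\}\bigr| = I_B(t_1,t_2,t_3) - \epsilon_T,
\]
where $\epsilon_T$ is the contribution of $x=0$, namely $\epsilon_T = |\{y: y\in B\}| = k$ if we require all three membership conditions at $x=0$, which collapse to the single condition $y\in B$. Since $t_1,t_2,t_3$ are pairwise distinct this correction term $\epsilon_T = k$ does not depend on $T$ at all. Therefore $M_T$ is constant in $T$ if and only if $I_B(t_1,t_2,t_3)$ is constant in $T$, and combining with the previous paragraph gives the claim. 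I would also remark that the hypothesis $k\ge 3$ is what guarantees $\mathcal B$ consists of genuine $3$-subsets-containing blocks and keeps $I_B$ well behaved (e.g. the orbit is nonempty and the $2$-design machinery applies).

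The main obstacle, though a mild one, is bookkeeping the passage between the orbit $\mathcal B$ (a set of blocks, each counted once) and the group $\mathrm{GA}_1(q)$ acting on $B$ (where the stabilizer $\mathrm{GA}_1(q)_B$ causes each block to be produced $|\mathrm{GA}_1(q)_B|$ times): one must verify that this overcounting factor is the same for every block and cancels uniformly, so that constancy of $M_T$ is equivalent to constancy of $\lambda_T$. This is the orbit--stabilizer argument already invoked in the $2$-design theorem quoted in the introduction, so I would simply reuse it. A second minor point is to check that the substitution $(a,b)\mapsto(x,y)=(a^{-1},a^{-1}b)$ is a bijection of $\mathrm{GF}(q)^*\times\mathrm{GF}(q)$ onto itself, which is immediate, and to isolate the $x=0$ locus cleanly; once that is done the equivalence is purely formal and no computation with Walsh transforms is needed for this particular lemma (those enter only in the later, more quantitative results).
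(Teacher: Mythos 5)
Your proposal is correct and follows essentially the same route as the paper: relate the number of blocks through a $3$-subset to the number of group elements via the stabilizer factor $|\mathrm{GA}_1(q)_B|$, rewrite that group count as pairs $(x,y)\in\mathrm{GF}(q)^*\times\mathrm{GF}(q)$ with $u_ix+y\in B$, and observe that the $x=0$ locus contributes the constant $k$, giving $\lambda_{\{u_1,u_2,u_3\}}=\bigl(I_B(u_1,u_2,u_3)-k\bigr)/|\mathrm{GA}_1(q)_B|$, which is exactly the paper's key identity. Your explicit handling of the inversion $\pi\mapsto\pi^{-1}$ via the substitution $(a,b)\mapsto(a^{-1},a^{-1}b)$ is, if anything, slightly more careful than the paper's wording, but it is the same argument.
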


\begin{proof}
Let $\mathrm{GA}_1(q)_B=s$. Then, $\mathcal B=\{\pi_1(B), \ldots, \pi_{b} (B)\}$,
where $b=\frac{q(q-1)}{s}$ and $\pi_i\in \mathrm{GA}_1(q)$ ($1\le i \le b$),
and, $\mathrm{GA}_1(q)=\{\pi_i \pi: 1\le i \le b, \pi \in  \mathrm{GA}_1(q)_B \}$.

For any $3$-subset $\{u_1,u_2,u_3\}$ of $\mathrm{GF}(q)$, define
\[\lambda_{\{u_1,u_2,u_3\}}=|\overline{M}(u_1,u_2,u_3)|,\]
where $\overline{M}(u_1,u_2,u_3)=\left \{\pi_i(B) :1\le i \le b , \{u_1,u_2,u_3\} \subseteq \pi_i(B)\right \}$.
Note that
\[|\overline{M}(u_1,u_2,u_3)|= \frac{1}{s} M(u_1,u_2,u_3),\]
where $M(u_1,u_2,u_3)=\left\{\pi \in  \mathrm{GA}_1(q)  : \{u_1,u_2,u_3\} \subseteq \pi(B)  \right \}$.
Thus,
\begin{align*}
\lambda_{\{u_1,u_2,u_3\}}=&\frac{1}{s} |\left\{\pi \in  \mathrm{GA}_1(q)  : \pi(u_i) \in  B (i=1,2,3)  \right \}|\\
=& \frac{1}{s} |\left\{(x,y)\in \mathrm{GF}(q)^*\times \mathrm{GF}(q)   : (u_i x+y)\in  B (i=1,2,3)  \right \}|\\
=& \frac{1}{s} |\left\{(x,y)\in \mathrm{GF}(q)\times \mathrm{GF}(q)   : (u_i x+y)\in  B (i=1,2,3)  \right \}|\\
&-\frac{1}{s} |\left\{ y\in  \mathrm{GF}(q)   : y \in  B \right \}|.
\end{align*}
Thus,
\begin{align}\label{eq:repeat-norep}
\lambda_{\{u_1,u_2,u_3\}}=\frac{I_B(u_1,u_2,u_3)-k}{s}.
\end{align}
The desired conclusion then follows from Equation (\ref{eq:repeat-norep}).
\end{proof}

The following theorem presents
a characterization
of base blocks supporting 3-designs.
\begin{theorem}\label{thm:3designs-pracchar}
Let $B$ be a $k$-subset of $\mathrm{GF}(q)$ with $k\ge 3$ and $\mathcal B= \mathrm{GA}_1(q)(B)$.
Let $E$ be a subset of $\mathrm{GF}(q)$ such that $\hat{f}_B(\mu)=\hat{f}_E(\mu^{d})$ for any $\mu \in \mathrm{GF}(q)$,
  where   $\mathrm{gcd}(d,q-1)=1$.
Then, the following are equivalent:

(1) $(\mathrm{GF}(q), \mathcal B)$ is a $3$-design.

(2) $\sum_{x,y\in \mathrm{GF}(q)} (-1)^{f_{E}(x)+ f_{E}(y)+f_{E}(u^dx+(1+u)^dy)  }$ is independent of $u$, where $u\in \mathrm{GF}(q)\setminus \mathrm{GF}(2)$.

(3) $\sum_{\alpha\in \mathrm{GF}(q)} \hat{f}_{E}( \alpha) \hat{f}_{E}( u^d \alpha) \hat{f}_{E}( (1+u)^d  \alpha)$ is independent of $u$,
where $u\in \mathrm{GF}(q)\setminus \mathrm{GF}(2)$.

(4) $N_E(u^d, (1+u)^d, 1)$ is independent of $u$, where $u\in \mathrm{GF}(q)\setminus \mathrm{GF}(2)$.
\end{theorem}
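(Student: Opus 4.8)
The plan is to establish the chain of equivalences $(1)\Leftrightarrow(4)$, $(4)\Leftrightarrow(3)$, $(3)\Leftrightarrow(2)$ by assembling the lemmas already proved, and to handle the bookkeeping about \emph{which} parameters one ranges over. First I would prove $(1)\Leftrightarrow(4)$. By Lemma \ref{lem:lam-I}, $(\mathrm{GF}(q),\mathcal B)$ is a $3$-design if and only if $I_B(u_1,u_2,u_3)$ is constant over all triples of pairwise distinct $u_1,u_2,u_3\in\mathrm{GF}(q)$. By Lemma \ref{lem:I-N},
\[
I_B(u_1,u_2,u_3)=N_E\!\left((u_2+u_3)^d,(u_3+u_1)^d,(u_1+u_2)^d\right),
\]
and since $N_E(a,b,c)=N_E(\lambda a,\lambda b,\lambda c)$ for $\lambda\in\mathrm{GF}(q)^*$, after dividing through by $(u_1+u_2)^d$ (legitimate as $u_1\neq u_2$) this equals $N_E\!\left(w^d,(1+w)^d,1\right)$ where $w=(u_2+u_3)/(u_1+u_2)$. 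The key combinatorial point is that, as $\{u_1,u_2,u_3\}$ ranges over all unordered triples of pairwise distinct elements, the quantity $w=(u_2+u_3)/(u_1+u_2)$ ranges exactly over $\mathrm{GF}(q)\setminus\mathrm{GF}(2)$: one checks $w\neq 0$ since $u_2\neq u_3$, $w\neq 1$ since $u_1\neq u_3$, and conversely any such $w$ is realized; moreover the relabelling $u_1\leftrightarrow u_2$ sends $w\mapsto 1/w$ (using characteristic $2$, where $u_2+u_3=(u_1+u_2)+(u_1+u_3)$ behaves symmetrically), so the value of $N_E$ on $w$ already incorporates all the triples. Thus constancy of $I_B$ over triples is equivalent to constancy of $N_E(u^d,(1+u)^d,1)$ over $u\in\mathrm{GF}(q)\setminus\mathrm{GF}(2)$, which is $(4)$.

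Next, $(4)\Leftrightarrow(3)$. Taking $E$ in place of $B$ and $d=1$ in Lemma \ref{lem:prodof3spec}, or rather applying Lemma \ref{lem:prodof3spec} directly with the arguments $(u^d,(1+u)^d,1)$, gives
\[
\frac{1}{q}\sum_{\alpha\in\mathrm{GF}(q)}\hat f_E(\alpha)\hat f_E(u^d\alpha)\hat f_E((1+u)^d\alpha)=q^2-6kq+12k^2-8N_E(u^d,(1+u)^d,1),
\]
where I use $|E|=k$ (which follows from $\hat f_E(0)=\hat f_B(0)$ and Lemma \ref{lem:one-two=0}(1), exactly as in the proof of Lemma \ref{lem:3design-equation}). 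Since $q,k$ are fixed, the left side is independent of $u$ precisely when $N_E(u^d,(1+u)^d,1)$ is, giving $(4)\Leftrightarrow(3)$. Finally $(3)\Leftrightarrow(2)$ is immediate from Lemma \ref{lem:add-mult} applied with $a=u^d$, $b=(1+u)^d$: the sum $\sum_{x,y}(-1)^{f_E(x)+f_E(y)+f_E(u^dx+(1+u)^dy)}$ equals $\frac1q\sum_\alpha \hat f_E(u^d\alpha)\hat f_E((1+u)^d\alpha)\hat f_E(\alpha)$, so one expression is constant in $u$ iff the other is.

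One small subtlety to address carefully: in $(2)$ and $(3)$ the statement should really be read as $u^d$ and $(1+u)^d$ being the coefficients, where $d$ is the fixed exponent with $\gcd(d,q-1)=1$; since $x\mapsto x^d$ permutes $\mathrm{GF}(q)^*$ and fixes $0$ and $1$, it permutes $\mathrm{GF}(q)\setminus\mathrm{GF}(2)$, so ``independent of $u$ over $\mathrm{GF}(q)\setminus\mathrm{GF}(2)$'' is unambiguous and the equivalences above go through verbatim. I expect the main obstacle to be the first step, specifically the verification that the parametrization $\{u_1,u_2,u_3\}\mapsto w$ is the right surjection onto $\mathrm{GF}(q)\setminus\mathrm{GF}(2)$ and that no triples are lost when passing from the three-variable description to the one-variable one — everything else is a direct substitution of the earlier lemmas.
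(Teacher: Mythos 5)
Your proposal is correct and takes essentially the same route as the paper: the paper also proves $(2)\Leftrightarrow(3)\Leftrightarrow(4)$ in one stroke from Lemmas \ref{lem:add-mult} and \ref{lem:prodof3spec} (with $|E|=k$ from $\hat{f}_E(0)=\hat{f}_B(0)$), and $(1)\Leftrightarrow(4)$ from Lemmas \ref{lem:lam-I} and \ref{lem:I-N} plus the homogeneity of $N_E$, realizing each $u$ via the triple $\{0,1,u\}$ exactly as your surjectivity check does. One cosmetic slip in a side remark: in characteristic $2$ the swap $u_1\leftrightarrow u_2$ sends $w=(u_2+u_3)/(u_1+u_2)$ to $1+w$ (it is $u_1\leftrightarrow u_3$ that gives $1/w$), but this plays no role since $N_E$ is symmetric in its arguments and only your points (a) every triple yields some $w\in\mathrm{GF}(q)\setminus\mathrm{GF}(2)$ and (b) every such $w$ is realized are needed.
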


\begin{proof}
From Lemmas \ref{lem:add-mult} and \ref{lem:prodof3spec}, one has
\begin{align*}
\lefteqn{ \sum_{x,y\in \mathrm{GF}(q)} (-1)^{f_{E}(x)+f_{E}(y)+f_{E}(u^d x+(1+u)^dy)} } \\
 &=
\frac{1}{q}\sum_{\alpha\in \mathrm{GF}(q)} \hat{f}_{E}(u^d \alpha) \hat{f}_{E}((1+u)^d \alpha) \hat{f}_{E}( \alpha)\\
&= q^2-6|E|q+12|E|^2- 8 N_E(u^d, (1+u)^d, 1).
\end{align*}
Hence, Parts (2), (3) and (4) are equivalent.

Assume that $(\mathrm{GF}(q), \mathcal B)$ is a $3$-design. By Lemma \ref{lem:lam-I},
$I_B(u_1,u_2,u_3)$
is independent of the specific choice of the $3$-subset $\{u_1,u_2,u_3\}$ in $ \mathrm{GF}(q)$.
By Lemma \ref{lem:I-N}, one has $N_{E}\left ( (u_2+u_3)^d, (u_3+u_1)^d, (u_1+u_2)^d \right )$ is also independent of the specific choice of the $3$-subset $\{u_1,u_2,u_3\}$ in $ \mathrm{GF}(q)$.
In particular, by choosing $u_1=0$, $u_2=1$ and $u_3=u\in  \mathrm{GF}(q)\setminus \mathrm{GF}(2)$,  $N_E(u^d, (1+u)^d, 1)$ is independent of $u$.

Conversely,  assume that $N_E(u^d, (1+u)^d, 1)$ is independent of $u$, where $u\in \mathrm{GF}(q)\setminus \mathrm{GF}(2)$.
By Lemma \ref{lem:I-N}, $I_B(u_1,u_2,u_3)$
is independent of the specific choice of the $3$-subset $\{u_1,u_2,u_3\}$ in $ \mathrm{GF}(q)$.
Thus, $(\mathrm{GF}(q), \mathcal B)$ is a $3$-design from Lemma \ref{lem:lam-I}.

It completes the proof.
\end{proof}

Choosing $E=B$ and $d=1$ in Theorem \ref{thm:3designs-pracchar},  we have the following results.
\begin{theorem}
Let $B$ be a $k$-subset of $\mathrm{GF}(q)$ with $k\ge 3$ and $\mathcal B= \mathrm{GA}_1(q)(B)$.
Then, the following are equivalent:

(1) $(\mathrm{GF}(q), \mathcal B)$ is a $3$-design.

(2) $\sum_{x,y\in \mathrm{GF}(q)} (-1)^{f_{B}(x)+ f_{B}(y)+f_{B}(ux+(1+u)y)  }$ is independent of $u$, where $u\in \mathrm{GF}(q)\setminus \mathrm{GF}(2)$.

(3) $\sum_{\alpha\in \mathrm{GF}(q)} \hat{f}_{B}( \alpha) \hat{f}_{B}( u \alpha) \hat{f}_{B}( (1+u)  \alpha)$ is independent of $u$,
where $u\in \mathrm{GF}(q)\setminus \mathrm{GF}(2)$.

(4) $N_B(u, 1+u, 1)$ is independent of $u$, where $u\in \mathrm{GF}(q)\setminus \mathrm{GF}(2)$.
\end{theorem}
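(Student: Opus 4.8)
The plan is to obtain this theorem as a direct specialization of Theorem \ref{thm:3designs-pracchar}. That theorem requires an auxiliary set $E \subseteq \mathrm{GF}(q)$ together with an integer $d$ satisfying $\mathrm{gcd}(d,q-1)=1$ such that $\hat{f}_B(\mu)=\hat{f}_E(\mu^{d})$ for every $\mu\in\mathrm{GF}(q)$. The first step is to observe that the choice $E=B$ and $d=1$ always meets these requirements: we have $\mathrm{gcd}(1,q-1)=1$, and the identity $\hat{f}_B(\mu)=\hat{f}_B(\mu^{1})$ holds trivially for all $\mu$. Since $k\ge 3$ is assumed in both statements, no further hypothesis needs checking.

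The second step is simply to substitute $E=B$ and $d=1$ into each of the four equivalent conditions (1)--(4) of Theorem \ref{thm:3designs-pracchar}. Condition (1) is left unchanged. In condition (2) the argument $u^{d}x+(1+u)^{d}y$ collapses to $ux+(1+u)y$ and $f_E$ becomes $f_B$; in condition (3) the product $\hat{f}_{E}(\alpha)\hat{f}_{E}(u^{d}\alpha)\hat{f}_{E}((1+u)^{d}\alpha)$ becomes $\hat{f}_{B}(\alpha)\hat{f}_{B}(u\alpha)\hat{f}_{B}((1+u)\alpha)$; and in condition (4) the quantity $N_E(u^{d},(1+u)^{d},1)$ becomes $N_B(u,1+u,1)$. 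These are verbatim the statements (1)--(4) of the present theorem, so their mutual equivalence is inherited from Theorem \ref{thm:3designs-pracchar}.

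There is essentially no obstacle here, since all the substantive work has already been carried out in the proof of Theorem \ref{thm:3designs-pracchar} and the chain of Lemmas \ref{lem:one-two=0}--\ref{lem:lam-I} supporting it. The only minor point worth verifying is that $d=1$ is genuinely permitted in Theorem \ref{thm:3designs-pracchar}, i.e.\ that nothing in its proof implicitly assumes $d>1$ or that the map $x\mapsto x^{d}$ is nontrivial; inspecting that proof, the only property of $d$ used is $\mathrm{gcd}(d,q-1)=1$, so the specialization is legitimate and the proof is complete.
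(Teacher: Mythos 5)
Your proposal is correct and is exactly how the paper obtains this theorem: the paper states it as an immediate consequence of Theorem \ref{thm:3designs-pracchar} by "choosing $E=B$ and $d=1$," which is precisely your specialization. Your additional check that the hypothesis $\hat{f}_B(\mu)=\hat{f}_E(\mu^{d})$ with $\mathrm{gcd}(d,q-1)=1$ holds trivially for this choice is the only verification needed, so nothing is missing.
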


The following theorem gives a characterization of a $3$-design $(\mathrm{GF}(q), \GA_1(q)(B))$
from a base block $B$ in terms of the number of solutions of some associated equations.

\begin{theorem}\label{thm:3designs-equations}
Let $B$ be a $k$-subset of $\mathrm{GF}(q)$ with $k\ge 3$ and $\mathcal B= \mathrm{GA}_1(q)(B)$.
Suppose that  $\hat{f}_B(\mu)=\sum_{x\in \mathrm{GF}(q)} (-1)^{\mathrm{Tr}(x^t+ \mu^d x)}$ for any $\mu \in \mathrm{GF}(q)$, where  $\mathrm{gcd}(td,q-1)=1$.
Then,  $(\mathrm{GF}(q), \mathcal B)$ is a $3$-design, if and only if, $|\{x \in \mathrm{GF}(q): \left (u^d x+(1+u)^d\right )^t+ x^t+1 =0 \}|$
is independent of $u$, where $u\in \mathrm{GF}(q)\setminus \mathrm{GF}(2)$.
\end{theorem}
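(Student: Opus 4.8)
The plan is to reduce the statement to part (4) of Theorem~\ref{thm:3designs-pracchar}, i.e.\ to show that $N_E(u^d,(1+u)^d,1)$ is independent of $u$ on $\mathrm{GF}(q)\setminus\mathrm{GF}(2)$, where $E$ is the subset of $\mathrm{GF}(q)$ whose characteristic function matches the given Walsh profile. First I would identify $E$ explicitly: the hypothesis $\hat f_B(\mu)=\sum_{x}(-1)^{\mathrm{Tr}(x^t+\mu^d x)}$ says that, after the substitution $\mu\mapsto\mu^{d}$, the Walsh spectrum of $B$ coincides with the Walsh spectrum at $\mu$ of the Boolean function $g(x)=\mathrm{Tr}(x^t)$; but since $\gcd(t,q-1)=1$, the map $x\mapsto x^t$ is a permutation of $\mathrm{GF}(q)$, so $g$ is an affine-equivalent copy of $\mathrm{Tr}(x)$ and its ``support'' $E:=\{x: \mathrm{Tr}(x^t)=1\}$ is a $k$-set with $k=q/2$. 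The point is that with this $E$ we have, for $\alpha\neq 0$, $\hat f_E(\alpha)=-2\sum_{x\in E}(-1)^{\mathrm{Tr}(\alpha x)}=\sum_{x}(-1)^{\mathrm{Tr}(x^t)}(-1)^{\mathrm{Tr}(\alpha x)}$ (using the computation at the start of the proof of Lemma~\ref{lem:prodof3spec}), so indeed $\hat f_B(\mu)=\hat f_E(\mu^d)$ and Theorem~\ref{thm:3designs-pracchar} applies with this $d$.

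Next I would unwind $N_E(u^d,(1+u)^d,1)$ using the definition $N_E(a,b,c)=|\{(x,y,z)\in E^3: ax+by+cz=0\}|$ and the description $E=\{x:\mathrm{Tr}(x^t)=1\}$. Because $x\mapsto x^t$ permutes $\mathrm{GF}(q)$, I substitute $x=X^{1/t}$, etc.; more cleanly, I count triples $(x,y,z)$ with $\mathrm{Tr}(x^t)=\mathrm{Tr}(y^t)=\mathrm{Tr}(z^t)=1$ and $u^dx+(1+u)^dy+z=0$. Solving the linear relation for $z$, this is $|\{(x,y): \mathrm{Tr}(x^t)=\mathrm{Tr}(y^t)=1,\ \mathrm{Tr}((u^dx+(1+u)^dy)^t)=1\}|$. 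Expanding the three Boolean conditions via $\frac{1}{2}(1-(-1)^{\mathrm{Tr}(\cdot)})$ and multiplying out yields a constant term plus single, double, and triple character sums; the single and double sums are handled by Lemma~\ref{lem:one-two=0} (applied to $E$) and contribute quantities depending only on $k$, while the triple sum is exactly $\sum_{x,y}(-1)^{\mathrm{Tr}(x^t)+\mathrm{Tr}(y^t)+\mathrm{Tr}((u^dx+(1+u)^dy)^t)}$. So $N_E(u^d,(1+u)^d,1)$ is an affine function (with $u$-independent coefficients) of this last sum, and the design condition (4) becomes: this triple sum is independent of $u$.

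Finally I would massage the triple sum into the solution-count in the statement. Since $t$ is invertible mod $q-1$, I change variables $x\to x^{1/t}$, $y\to y^{1/t}$; but the cross term $(u^dx+(1+u)^dy)^t$ does not simplify under that substitution, so instead I keep $x,y$ and write $S(u)=\sum_{x,y}(-1)^{\mathrm{Tr}(x^t+y^t+(u^dx+(1+u)^dy)^t)}$, then substitute $y=cx$ with $c$ ranging over $\mathrm{GF}(q)$ together with $x=0$ separately. After pulling out $x$ and applying $\sum_{x}(-1)^{\mathrm{Tr}(x^t(1+c^t+(u^d+(1+u)^dc)^t))}$ --- and here using once more that $x\mapsto x^t$ is a bijection so this inner sum is $q$ if the bracket vanishes and $0$ otherwise --- I get $S(u)$ expressed, up to $u$-independent corrections, via $q\cdot|\{c\in\mathrm{GF}(q): (u^d+(1+u)^dc)^t + c^t + 1 = 0\}|$. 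Renaming $c$ as $x$ and noting the bracket is the stated polynomial $(u^dx+(1+u)^d)^t+x^t+1$ (after scaling by the unit $(1+u)^{dt}$, which uses $\gcd(td,q-1)=1$ to keep it a bijective rescaling that does not change the solution count), one sees $S(u)$ is constant in $u$ iff $|\{x: (u^dx+(1+u)^d)^t+x^t+1=0\}|$ is constant in $u$. Combining with the reduction of step two and Theorem~\ref{thm:3designs-pracchar} gives the claim.

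The main obstacle I anticipate is bookkeeping in steps two and three: carefully tracking which variable substitutions are legitimate bijections (they all rely on $\gcd(t,q-1)=1$ or $\gcd(td,q-1)=1$), correctly isolating the ``degenerate'' cases ($x=0$, or $u\in\mathrm{GF}(2)$ excluded, or the cross-coefficient $u^d+(1+u)^dc$ vanishing) so that the inner sum over $x$ genuinely collapses to a $0/q$ dichotomy, and verifying that the single- and double-sum contributions from Lemma~\ref{lem:one-two=0} really are $u$-independent --- which they are, since they depend only on $k=|E|$ and on the fact that $u^d,(1+u)^d$ and $1$ are pairwise distinct for $u\notin\mathrm{GF}(2)$. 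No genuinely hard estimate is needed; it is an exact identity.
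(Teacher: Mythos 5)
Your route is essentially the paper's: take $E=\{x\in\mathrm{GF}(q):\mathrm{Tr}(x^t)=1\}$, observe $\hat f_B(\mu)=\hat f_E(\mu^d)$ so that Theorem~\ref{thm:3designs-pracchar} applies, and then turn the triple character sum $S(u)=\sum_{x,y}(-1)^{\mathrm{Tr}(x^t+y^t+(u^dx+(1+u)^dy)^t)}$ into a solution count using the bijectivity of $x\mapsto x^t$. (Two small remarks: your step two is redundant, since the passage between $S(u)$ and $N_E(u^d,(1+u)^d,1)$ is exactly the equivalence of parts (2) and (4) of Theorem~\ref{thm:3designs-pracchar}, already proved there via Lemmas~\ref{lem:add-mult} and~\ref{lem:prodof3spec}; and ``affine-equivalent copy of $\mathrm{Tr}(x)$'' is inaccurate, $x\mapsto x^t$ being a permutation but not an affine map --- though the only fact you use, $|E|=q/2$, is correct.)

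The one step that does not go through as written is your final identification. Parametrizing by $y=cx$ you correctly reach $S(u)=q\,\bigl|\{c\in\mathrm{GF}(q):(u^d+(1+u)^dc)^t+c^t+1=0\}\bigr|-q$, which is the \emph{transposed} equation, with the roles of $u^d$ and $(1+u)^d$ swapped relative to the statement. Your proposed repair, ``scaling by the unit $(1+u)^{dt}$'', is not a valid move: multiplying the equation by a nonzero constant destroys the constant term $1$, and a substitution $c=\lambda x$ forces $\lambda^t=1$, hence $\lambda=1$, so no rescaling carries one equation to the other. The two counts are nonetheless equal for every $u\notin\mathrm{GF}(2)$: neither equation admits the solution $0$ (that would force $u^{dt}=1$, resp.\ $(1+u)^{dt}=1$, i.e.\ $u\in\mathrm{GF}(2)$), and for $c\neq 0$ multiplying by $c^{-t}$ and setting $x=c^{-1}$ transforms $(u^d+(1+u)^dc)^t+c^t+1=0$ into $(u^dx+(1+u)^d)^t+x^t+1=0$. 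Alternatively, the transposed count at $u$ equals the stated count at $1+u$, and $u\mapsto 1+u$ permutes $\mathrm{GF}(q)\setminus\mathrm{GF}(2)$, so constancy of one is equivalent to constancy of the other. The paper avoids this wrinkle by normalizing with respect to $y$ rather than $x$: it splits off $y=0$ (that row vanishes because $1+u^{dt}\neq 0$), factors $y^t$ out of the trace, replaces $x$ by $xy$, and then replaces $y^t$ by $y$ using $\gcd(t,q-1)=1$, which yields directly $S(u)=q\,\bigl|\{x:(u^dx+(1+u)^d)^t+x^t+1=0\}\bigr|-q$. With either fix your argument is complete and coincides in substance with the paper's proof.
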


\begin{proof}
Let $E=\{x\in \mathrm{GF}(q):  \mathrm{Tr}(x^t)=1 \}$. Then  $f_E(x)=\mathrm{Tr}(x^t)$ and $\hat{f}_B(\mu)=\hat{f}_E(\mu^d)$.
Let $S_u=\sum_{x,y\in \mathrm{GF}(q)} (-1)^{f_{E}(x)+ f_{E}(y)+f_{E}(u^dx+(1+u)^dy)  }$, where $u\in \mathrm{GF}(q)\setminus \mathrm{GF}(2)$.
Then,
\begin{align*}
S_u=& \sum_{x,y\in \mathrm{GF}(q)} (-1)^{\mathrm{Tr}\left ( x^t+y^t+\left (u^d x+(1+u)^d y \right )^t \right )}\\
=& \sum_{y\in \mathrm{GF}(q)^* } \sum_{x\in \mathrm{GF}(q)} (-1)^{\mathrm{Tr}\left ( x^t+y^t+\left (u^d x+(1+u)^d y \right )^t \right )} +
\sum_{x\in \mathrm{GF}(q)} (-1)^{\mathrm{Tr}\left ( (1+u^{dt})x^t \right )}\\
=& \sum_{y\in \mathrm{GF}(q)^* } \sum_{x\in \mathrm{GF}(q)} (-1)^{\mathrm{Tr}\left ( y^t \left ( (xy^{-1})^t+1+\left (u^d xy^{-1}+(1+u)^d  \right )^t \right ) \right )}\\
=& \sum_{y\in \mathrm{GF}(q)^* } \sum_{x\in \mathrm{GF}(q)} (-1)^{\mathrm{Tr}\left ( y^t \left ( x^t+1+\left (u^d x+(1+u)^d  \right )^t \right ) \right )}\\
=& \sum_{x\in \mathrm{GF}(q)} \sum_{y\in \mathrm{GF}(q)^* }  (-1)^{\mathrm{Tr}\left ( y \left ( x^t+1+\left (u^d x+(1+u)^d  \right )^t \right ) \right )}\\
=& \sum_{x\in \mathrm{GF}(q)} \sum_{y\in \mathrm{GF}(q)}  (-1)^{\mathrm{Tr}\left ( y \left ( x^t+1+\left (u^d x+(1+u)^d  \right )^t \right ) \right )}-q\\
=& q |\{x \in \mathrm{GF}(q): x^t+1+\left (u^d x+(1+u)^d\right )^t =0 \}|-q.
\end{align*}
The desired conclusion  then follows  from Theorem \ref{thm:3designs-pracchar}.
\end{proof}

\section{The stabilizer of the base block }\label{sec:stabi}

For any function $h$ from $\mathrm{GF}(q)$ to the finite field $\mathrm{GF}(2)$ or the field of real numbers, let $\mathrm{Supp}(h)$
denote the set $\{x\in \mathrm{GF}(q): h(x)\neq 0\}$. We have the following theorem for  the stabilizer of a base block.

\begin{theorem}\label{thm:intersection-blocks}
Let $B, E$ be two subsets of $\mathrm{GF}(q)$ such that $f_E$ is a semi-bent function from $\mathrm{GF}(q)$ to $\mathrm{GF}(2)$.
Suppose that  $\hat{f}_B(\mu)=\hat{f}_E(\mu^d)$ for any $\mu \in \mathrm{GF}(q)$
and $\mathrm{Supp}(\hat{f}_E)\neq b\cdot \mathrm{Supp}(\hat{f}_E)$
for any $b\in \mathrm{GF}(q)\setminus \mathrm{GF}(2)$,  where  $\mathrm{gcd}(d,q-1)=1$.
 Then
$$
\mathrm{GA}_1(q)_B=\{x\}.
$$
Hence, $|\mathrm{GA}_1(q)(B)|=q(q-1)$.
\end{theorem}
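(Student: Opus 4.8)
The plan is to encode the statement ``$\pi_{a,b}$ stabilises $B$'' as a symmetry of the characteristic function $f_B$, transport it to the Walsh side, and then strip off the multiplicative parameter $a$ and the additive parameter $b$ in turn. Write $q=2^n$ and $\sigma(\mu)=\mu^{d}$; since $\gcd(d,q-1)=1$, $\sigma$ is a bijection of $\mathrm{GF}(q)$ fixing $0$, injective on $\mathrm{GF}(q)^{*}$, and satisfying $\sigma(a\mu)=a^{d}\sigma(\mu)$.

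First I would observe that, since $\pi_{a,b}$ is a bijection of $\mathrm{GF}(q)$, the condition $\pi_{a,b}(B)=B$ is equivalent to $f_B(ax+b)=f_B(x)$ for all $x$. Substituting $y=ax+b$ in $\hat f_B(\mu)=\sum_{x}(-1)^{f_B(x)+\mathrm{Tr}(\mu x)}$ and using additivity of $\mathrm{Tr}$ in characteristic $2$ gives the functional equation
\[
\hat f_B(\mu)=(-1)^{\mathrm{Tr}(a^{-1}b\mu)}\,\hat f_B(a^{-1}\mu)\qquad(\mu\in\mathrm{GF}(q)),
\]
which drives everything that follows.

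Next I would pin down $a$. Taking absolute values above gives $|\hat f_B(\mu)|=|\hat f_B(a^{-1}\mu)|$, so $\mathrm{Supp}(\hat f_B)$ is invariant under multiplication by $a$. The relation $\hat f_B(\mu)=\hat f_E(\mu^{d})$ yields $\mathrm{Supp}(\hat f_E)=\sigma(\mathrm{Supp}(\hat f_B))$, and applying $\sigma$ to $a\cdot\mathrm{Supp}(\hat f_B)=\mathrm{Supp}(\hat f_B)$ (using $\sigma(a\mu)=a^{d}\sigma(\mu)$) gives $a^{d}\cdot\mathrm{Supp}(\hat f_E)=\mathrm{Supp}(\hat f_E)$. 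By the hypothesis on $\mathrm{Supp}(\hat f_E)$ this forces $a^{d}\in\mathrm{GF}(2)$, hence $a^{d}=1$ (as $a\ne 0$), and then $a=1$ by injectivity of $\sigma$ on $\mathrm{GF}(q)^{*}$.

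The remaining task, $b=0$, is the step I expect to be the main obstacle, and the only place the semi-bentness of $f_E$ is used. With $a=1$ the functional equation becomes $\hat f_B(\mu)=(-1)^{\mathrm{Tr}(b\mu)}\hat f_B(\mu)$, so $\mathrm{Tr}(b\mu)=0$ for every $\mu\in\mathrm{Supp}(\hat f_B)$, i.e. $\mathrm{Supp}(\hat f_B)\subseteq\{\mu:\mathrm{Tr}(b\mu)=0\}$. Semi-bentness gives $\hat f_E(\mu)^{2}\in\{0,2^{n+1}\}$, so Parseval's identity $\sum_{\mu}\hat f_E(\mu)^{2}=2^{2n}$ forces $|\mathrm{Supp}(\hat f_E)|=2^{n-1}$, hence $|\mathrm{Supp}(\hat f_B)|=2^{n-1}$ as well. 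If $b\ne 0$, then $\{\mu:\mathrm{Tr}(b\mu)=0\}$ is a hyperplane of exactly this cardinality, so $\mathrm{Supp}(\hat f_B)$ would have to be this whole hyperplane $b^{\perp}$, and $\mathrm{Supp}(\hat f_E)=\sigma(b^{\perp})$. The plan is to derive from this very rigid description a scalar $c\in\mathrm{GF}(q)\setminus\mathrm{GF}(2)$ with $c\cdot\mathrm{Supp}(\hat f_E)=\mathrm{Supp}(\hat f_E)$, contradicting the hypothesis; making this contradiction airtight — in particular understanding how the self-symmetries of the hyperplane $b^{\perp}$ behave under the $d$-th power map $\sigma$ — is the delicate point I expect to absorb most of the effort. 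Granting it, $b=0$, so $\mathrm{GA}_1(q)_B=\{\mathrm{id}\}$, and the orbit--stabiliser theorem gives $|\mathrm{GA}_1(q)(B)|=|\mathrm{GA}_1(q)|/|\mathrm{GA}_1(q)_B|=q(q-1)$.
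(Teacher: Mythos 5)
Your reduction of the multiplicative part is correct, and in fact cleaner than the paper's: from $f_B(ax+b)=f_B(x)$ you get the functional equation $\hat{f}_B(\mu)=(-1)^{\mathrm{Tr}(a^{-1}b\mu)}\hat{f}_B(a^{-1}\mu)$, hence $a\cdot\mathrm{Supp}(\hat{f}_B)=\mathrm{Supp}(\hat{f}_B)$, hence $a^{d}\cdot\mathrm{Supp}(\hat{f}_E)=\mathrm{Supp}(\hat{f}_E)$ and $a=1$ by the support hypothesis and $\gcd(d,q-1)=1$; the paper reaches the same conclusion less directly, via the identity $q^{2}=\sum_{\mu}\hat{f}_B(\mu)\hat{f}_B(\mu/a)(-1)^{\mathrm{Tr}(c\mu/a)}$ and a magnitude bound that needs semi-bentness. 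The problem is the additive part, which you explicitly leave open (``Granting it, $b=0$''), and the route you sketch for it points in the wrong direction. You want to turn $\mathrm{Supp}(\hat{f}_B)=b^{\perp}$ into a nontrivial multiplicative symmetry of $\mathrm{Supp}(\hat{f}_E)$ and contradict the hypothesis, but a hyperplane, or its image under $\sigma(\mu)=\mu^{d}$, has no reason to admit such a symmetry: already for $d=1$ one has $c\cdot b^{\perp}=(c^{-1}b)^{\perp}\neq b^{\perp}$ for every $c\in\mathrm{GF}(q)\setminus\mathrm{GF}(2)$, so the hypothesis $\mathrm{Supp}(\hat{f}_E)\neq c\cdot\mathrm{Supp}(\hat{f}_E)$ is simply not violated and no contradiction can be extracted this way. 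The delicate step you deferred is therefore not just unfinished; the tool you plan to use for it does not apply.

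The paper closes this step by counting rather than by symmetry. From $a=1$ one gets $\mathrm{Tr}(b\mu)=0$ for every $\mu\in\mathrm{Supp}(\hat{f}_B)$, a set of size $2^{n-1}$ by Parseval (exactly as you computed); since in the paper's setting these $\mu$ are nonzero (in the applications $\hat{f}_E(0)=0$, e.g. $f_E(x)=\mathrm{Tr}(x^{3})$, so $0\notin\mathrm{Supp}(\hat{f}_B)$), adding the trivial solution $\mu=0$ gives at least $2^{n-1}+1$ zeros of the linear form $x\mapsto\mathrm{Tr}(bx)$, which for $b\neq 0$ has only $2^{n-1}$ zeros; hence $b=0$. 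Equivalently, in your language: the containment $\mathrm{Supp}(\hat{f}_B)\subseteq b^{\perp}$ together with $|\mathrm{Supp}(\hat{f}_B)|=|b^{\perp}|=2^{n-1}$ forces $\mathrm{Supp}(\hat{f}_B)=b^{\perp}\ni 0$, contradicting $\hat{f}_B(0)=\hat{f}_E(0)=0$. You had every ingredient of this count in hand (the size $2^{n-1}$, the containment in the hyperplane) but did not bring in the element $0$, which is the one extra point that makes the argument close; note also that this finish quietly uses $0\notin\mathrm{Supp}(\hat{f}_E)$, an assumption the paper makes implicitly and which holds in all its applications.
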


\begin{proof}
 Let $(b, c) \in \gf(q)^* \times \gf(q)$ such that $f_B(bx+c)=f_B(x)$. The desired conclusion is the same
as that $(b,c)=(1,0)$.

Let $S=\sum_{x\in \mathrm{GF}(q)} (-1)^{f_{B}(x)+f_{B}(bx+c)}$. Since $f_{B}(bx+c)=f_{B}(x)$, we have $S=q$.
We now compute $A$ in a different way. Note that
\begin{eqnarray*}
\sum_{\mu  \in \mathrm{GF}(q)} (-1)^{\mathrm{Tr}(\mu  (x+y))}
=\left\{
\begin{array}{ll}
q & \mbox{ if }  x=y, \\
0 & \mbox{ if }   x \neq y.
\end{array}
\right.
\end{eqnarray*}
We have then
\begin{align*}
q S=& \sum_{x,y\in \mathrm{GF}(q)} (-1)^{f_{B}(x)+f_{B}(by+c)} \sum_{\mu  \in \mathrm{GF}(q)} (-1)^{\mathrm{Tr}(\mu  (x+y))}\\
=& \sum_{\mu  \in \mathrm{GF}(q)} \sum_{x\in \mathrm{GF}(q)} (-1)^{f_{B}(x)+\mathrm{Tr}(\mu  x)} \sum_{y\in \mathrm{GF}(q)} (-1)^{f_{B}(by+c)+\mathrm{Tr}(\mu  y)}\\
=& \sum_{\mu  \in \mathrm{GF}(q)} \hat{f}_{B}(\mu ) \sum_{y\in \mathrm{GF}(q)} (-1)^{f_{B}(by+c)+\mathrm{Tr}\left (\frac{\mu }{b} (by+c)+\frac{c\mu}{b} \right )}\\
=& \sum_{\mu  \in \mathrm{GF}(q)} \hat{f}_{B}(\mu ) \hat{f}_{B}(\frac{\mu }{b}) (-1)^{\mathrm{Tr}\left (\frac{c\mu }{b}\right )}.
\end{align*}
Since $S=q$, from $\hat{f}_B(\mu)=\hat{f}_E(\mu^d)$, we then deduce that
\begin{eqnarray}\label{eq:semi-bent-blocks}
q^2=\sum_{\beta \in \mathrm{GF}(q)} \hat{f}_{E}\left (\mu^{d} \right ) \hat{f}_{E}\left ( \left (\frac{\mu}{b} \right )^{d}\right )
 (-1)^{\mathrm{Tr}\left (\frac{c\mu }{b}\right )}.
\end{eqnarray}
Using this equation,   we will prove that $(b,c)=(1,0)$.

Let $q=2^n$. Since $f_E$ is semi-bent, then $|\mathrm{Supp}(\hat{f}_E)|=2^{n-1}$ and
\begin{eqnarray*}
q^2=\sum_{\mu  \in \mathrm{GF}(q)} \hat{f}_{E}\left (\mu ^{d} \right ) \hat{f}_{E}\left ( \left (\frac{\mu }{b} \right )^{d}\right )
 (-1)^{\mathrm{Tr}\left (\frac{c\mu }{b}\right )}\le&   \sum_{\mu ^{d} \in T} 2^{\frac{n+1}{2}} 2^{\frac{n+1}{2}}\\
\le & |T| 2^{n+1}\\
\le & 2^{n-1} 2^{n+1},
\end{eqnarray*}
where $T= \mathrm{Supp}(\hat{f}_E) \cap  b^d \mathrm{Supp}(\hat{f}_{E})$.
Thus, $|T|=2^{n-1}$.
Hence, $\mathrm{Supp}(\hat{f}_E) =  b^d \mathrm{Supp}(\hat{f}_{E})$.
From the assumption of this lemma, we must have $b=1$.

Since $b=1$,   Equation (\ref{eq:semi-bent-blocks})  becomes
\begin{eqnarray}
q^2=2^{n+1} \sum_{ \mu^{d}\in  \mathrm{Supp}(\hat{f}_E)}  (-1)^{\mathrm{Tr}\left (c\mu \right )}.\nonumber
\end{eqnarray}
This equation forces $\tr(c\mu )=0$ for all the $2^{n-1}$ nonzero elements $\mu  \in \gf(q)$
such that $ \mu ^{d}\in  \mathrm{Supp}(\hat{f}_E)$. Note that $\tr(c \times 0)=0$. Thus, $\tr(cx)=0$ has at least $2^{n-1}+1$
solutions, which holds only if $c=0$. This completes the proof.
\end{proof}

\section{Two constructions   of $3$-designs from  APN functions}\label{sec-construct}
The objective of this section
 is to construct  $3$-designs from APN functions over $\mathrm{GF}(q)$, where $q=2^n$.

Recall that a function $F$ from $\mathrm{GF}(q)$ to itself is called \emph{almost perfect nonlinear (APN)},
if $F(x+a)+F(x)=b$ has at most two solutions in $\mathrm{GF}(q)$ for every pair $(a,b)\in \mathrm{GF}(q)^*\times \mathrm{GF}(q)$.
$F$ is said to be \emph{almost bent (AB)} if $\mathcal W_{F}(a,b)=0$, or $\pm 2^{\frac{n+1}{2}}$
 for every pair $(a,b)$ with $a\neq 0$, where
 \begin{align*}
 \mathcal W_{F}(a,b)=\sum_{x\in \mathrm{GF}(q)} (-1)^{\mathrm{Tr}(aF(x)+bx)}.
 \end{align*}

The following is a list of known APN power functions
over $\mathrm{GF}(2^n)$ of the form $F(x)=x^s$:
\begin{itemize}
\item $s=2^i+1$ with $\mathrm{gcd}(i,n)=1$ (Gold functions);
\item $s=2^{2i}-2^i+1$ with $\mathrm{gcd}(i,n)=1$ (Kasami functions);
\item $s=2^{\frac{n-1}{2}}+3$ with $n$ odd ( Welch functions);
\item $s=2^{\frac{n-1}{2}}+2^{\frac{n-1}{4}}-1$ with $n\equiv 1 \pmod{4}$,\\
 $s=2^{\frac{n-1}{2}}+2^{\frac{3n-1}{4}}-1$ with $n\equiv 3 \pmod{4}$ (Niho functions);
\item $s=2^n-2$ with $n$ odd (inverse functions);
\item $s=2^{\frac{4n}{5}}+ 2^{\frac{3n}{5}}+ 2^{\frac{2n}{5}}+ 2^{\frac{n}{5}}-1$ with $n\equiv 0 \pmod{5}$  (Dobbertin functions).
\end{itemize}
When $n$ is odd, Gold functions, Kassami functions, Welch functions and Niho functions over $\mathrm{GF}(2^n)$ are AB functions.
We will present two constructions of
$3$-designs from APN functions.
 \subsection{The first construction of $3$-designs from Kassami APN functions}\label{subsec:des-Kassami}
 Let $q=2^n$ with $n$ odd. We follow the convention that if $e$  is an exponent of a power function over $\gf{(q)}$, then $\frac{1}{e}$  is interpreted as the inverse of $e$  modulo $(q-1)$. Thus $\frac{1}{e}$ exists if and only if $\mathrm{gcd}(e,q-1)=1$.
 Let $x^{2^{2i}-2^i+1}$ be the Kassmi power function over $\gf{(q)}$ with $\mathrm{gcd}(i,n)=1$.
 Since $\mathrm{gcd}(2,n)=1$ and $\mathrm{gcd}(i,n)=1$, one has  $\mathrm{gcd}(3,q-1)=\mathrm{gcd}(2^i+1,q-1)=1$.
 Thus $\frac{1}{3}$ and $\frac{1}{2^i+1}$ exist.
 Define
 \begin{align}\label{eq:B-K}
 B=\mathrm{GF}(q)\setminus \left \{ \left ((x+1)^s+x^s+1 \right )^{\frac{1}{2^i+1}}: x\in \mathrm{GF}(q) \right \},
 \end{align}
where $s=2^{2i}-2^i+1$. In this case, we also denote the base block $B$ by $KA_{n,i}$.
We shall study the incidence  structure
\[\mathbb{KA}_{n,i}=\left (\gf{(2^n)},  \mathrm{GA}_1(2^n)(KA_{n,i}) \right ).\]

 The Walsh coefficients of $f_B$ is given in the following lemma \cite[Lemma A1]{DD04}.

 \begin{lemma}\label{lem:Dil-Dob}
 Let  $n$ be an odd integer and let $i$ be a positive integer with $\mathrm{gcd}(i,n)=1$. Let $B$ be the subset of $\mathrm{GF}(q)$ given by (\ref{eq:B-K}).
 Then for all $\mu \in \gf{(q)}$ we have
\[\hat{f}_B(\mu)=\hat{f}_E\left (\mu^{\frac{2^i+1}{3}} \right )=\sum_{x \in \gf{(q)}} (-1)^{\mathrm{Tr}\left (x^3+\mu^{\frac{2^i+1}{3}} x\right )},\]
where $E=\{x \in \gf{(q)}: \mathrm{Tr}(x^3)=1\}$.
 \end{lemma}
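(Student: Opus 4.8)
The plan is to identify the Walsh transform of $f_B$ with that of the quadratic-like function $x^3$, via a change of variable on the frequency side, and to recognize the resulting exponential sum as a Weil-type character sum whose value is controlled by the Kasami property of the exponent $s = 2^{2i}-2^i+1$. First I would set $E = \{x \in \gf(q) : \Tr(x^3) = 1\}$, so that $f_E(x) = \Tr(x^3)$ by definition, and compute $\hat{f}_E(\nu) = \sum_{x}(-1)^{f_E(x) + \Tr(\nu x)} = \sum_x (-1)^{\Tr(x^3 + \nu x)}$ directly from the definition of the Walsh transform. This gives the second equality in the statement for free; the content is the first equality $\hat{f}_B(\mu) = \hat{f}_E(\mu^{(2^i+1)/3})$.

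To prove that equality, I would start from the definition of $B$ in (\ref{eq:B-K}). Writing $s = 2^{2i}-2^i+1$ and $g(x) = \left((x+1)^s + x^s + 1\right)^{1/(2^i+1)}$, the set $B$ is the complement of the image of $g$. The key elementary fact is that $g$ is (up to the bijective reparametrization $x \mapsto x^{1/(2^i+1)}$) essentially the first-order difference of the Kasami power map, and that the APN / AB property of $x^s$ forces each value $\left((x+1)^s + x^s + 1\right)$ to be attained a controlled number of times. I would then express $\hat{f}_B(\mu)$ as $q - 2\sum_{x \in B}(-1)^{\Tr(\mu x)}$ (using Part (1)-type manipulation as in Lemma \ref{lem:prodof3spec}), rewrite the sum over $B$ as $q$-delta minus a sum over the image of $g$, and substitute the parametrization to turn it into a sum over $\gf(q)$ of $(-1)^{\Tr(\mu \cdot g(x))}$ weighted by preimage counts. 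The arithmetic identity relating $(x+1)^s + x^s + 1$ to a cube — which is where the specific numerology $\gcd(i,n)=1$, $\gcd(3,q-1)=1$, and the exponent $(2^i+1)/3$ enter — then converts this into $\sum_{x}(-1)^{\Tr(x^3 + \mu^{(2^i+1)/3} x)} = \hat{f}_E(\mu^{(2^i+1)/3})$, after checking that the substitution $\mu \mapsto \mu^{(2^i+1)/3}$ is a bijection of $\gf(q)$ (valid since $\gcd(2^i+1,q-1)=1$ and $\gcd(3,q-1)=1$).

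Since this lemma is quoted from \cite[Lemma A1]{DD04}, the honest proof proposal is to cite that reference and record the translation into the present notation: in Dillon--Dobbertin the Walsh spectrum of the Boolean function associated with the Kasami exponent is computed, and the statement here is exactly their result with the bijective substitution on the argument made explicit. So I would write: ``This is \cite[Lemma A1]{DD04}; we only recast it in our notation, noting that $\gcd(2^i+1, q-1) = \gcd(3, q-1) = 1$ guarantees that both $\mu \mapsto \mu^{1/(2^i+1)}$ and $\mu \mapsto \mu^{(2^i+1)/3}$ are well-defined permutations of $\gf(q)$, and that $f_E(x) = \Tr(x^3)$ gives the displayed exponential-sum form of $\hat{f}_E$ directly from the definition of the Walsh transform.''

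The main obstacle, were one to reprove it from scratch rather than cite, is the algebraic identity that exhibits $(x+1)^s + x^s + 1$ as a $(2^i+1)$-st power of a cube-related quantity: this is the technical heart of the Dillon--Dobbertin analysis and relies on delicate manipulations with the Kasami exponent $2^{2i}-2^i+1$ and the coprimality $\gcd(i,n)=1$. Everything else — the bijectivity of the substitutions, the passage between $f_B$ and sums over the image of $g$, and the identification of $\hat{f}_E$ with the stated cubic exponential sum — is routine bookkeeping that follows the pattern already used in Lemmas \ref{lem:one-two=0} and \ref{lem:prodof3spec}.
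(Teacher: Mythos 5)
Your proposal matches the paper's treatment: the paper gives no proof of this lemma and simply quotes it as \cite[Lemma A1]{DD04}, which is exactly what you do, after correctly noting that the second equality is immediate from $f_E(x)=\mathrm{Tr}(x^3)$ and that the coprimality conditions make the exponent substitutions well-defined permutations of $\mathrm{GF}(q)$. Your additional sketch of how a from-scratch derivation would go is reasonable but not needed, since the technical core is indeed the Dillon--Dobbertin computation being cited.
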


 A major result of this paper is the following.

 \begin{theorem}\label{thm:Kassami}
 Let  $n$ be an odd integer and let $i$ be a positive integer with $\mathrm{gcd}(i,n)=1$. Let $B$ be the base block  given by (\ref{eq:B-K}).
 Then the incidence structure $\mathbb{KA}_{n,i}=(\mathrm{GF}(q), \mathrm{GA}_1(q)(B))$ is a $3$-$\left (q, \frac{q}{2}, \frac{q(q-4)}{8} \right )$ design.
 \end{theorem}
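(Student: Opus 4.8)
The plan is to verify the hypotheses of Theorem~\ref{thm:3designs-equations} (and of Theorem~\ref{thm:intersection-blocks}) for the block $B$ in~(\ref{eq:B-K}), and then to compute the parameters $v,k,\lambda$ directly. By Lemma~\ref{lem:Dil-Dob} we have $\hat{f}_B(\mu)=\sum_{x\in\gf(q)}(-1)^{\tr(x^3+\mu^{d}x)}$ with $t=3$ and $d=\frac{2^i+1}{3}$; since $\gcd(3,q-1)=1$ and $\gcd\!\left(\frac{2^i+1}{3},q-1\right)=1$ (because $\gcd(2^i+1,q-1)=1$), we get $\gcd(td,q-1)=\gcd(2^i+1,q-1)=1$, so Theorem~\ref{thm:3designs-equations} applies. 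Hence $(\gf(q),\mathcal B)$ is a $3$-design if and only if
\[
\nu(u):=\left|\left\{x\in\gf(q): \left(u^d x+(1+u)^d\right)^3+x^3+1=0\right\}\right|
\]
is independent of $u\in\gf(q)\setminus\gf(2)$.

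\textbf{The key step} is to show $\nu(u)$ is constant. Set $a=u^d$ and $c=(1+u)^d$, so $a,c\in\gf(q)^*$ and, since $d$ is a bijection on $\gf(q)$, the pair $(a,c)$ ranges over pairs with $a^{1/d}+c^{1/d}=1$, i.e.\ (raising to the $d$-th power) over pairs satisfying a fixed linear-type constraint; in any case $a\ne c$ and both are nonzero. The equation $(ax+c)^3+x^3+1=0$ expands (in characteristic $2$) to a cubic in $x$ whose coefficients depend on $a,c$. The claim to prove is that the number of $\gf(q)$-roots of this cubic is the same for all admissible $(a,c)$. I expect this to follow by relating the cubic to the APN (indeed AB, since $n$ is odd) property of $x^3$: the differential equation $(x+y)^3+x^3=z$ has exactly $0$ or $2$ solutions $x$ for each $(y,z)$ with $y\ne0$, and a suitable substitution should turn $(ax+c)^3+x^3+1=0$ into such a differential equation after dividing through by the leading coefficient and completing an affine change of variable. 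Concretely, after the substitution the cubic should become equivalent to an equation of the form $(x+\gamma)^3+x^3=\delta$ for explicit $\gamma,\delta$ depending on $u$, whose solution count is governed solely by whether $\tr\!\big(\delta/\gamma^3\big)=0$ or $1$ — and one must check this trace is the \emph{same} value for all $u\notin\gf(2)$, or else that the count is forced to be constant for another structural reason. This trace computation, using $\tr(z^2)=\tr(z)$ and the specific shape of $\gamma,\delta$, is the main obstacle and the place where the Kasami exponent $d=\frac{2^i+1}{3}$ (as opposed to an arbitrary $d$) genuinely enters.

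\textbf{For the stabilizer and the parameters:} The function $f_E(x)=\tr(x^3)$ is semi-bent for odd $n$ (the Walsh spectrum of $x^3$ is $\{0,\pm2^{(n+1)/2}\}$, this being the Gold/AB case), so $|\support(\hat{f}_E)|=2^{n-1}$. To apply Theorem~\ref{thm:intersection-blocks} I must check $\support(\hat{f}_E)\ne b\cdot\support(\hat{f}_E)$ for every $b\in\gf(q)\setminus\gf(2)$; here $\support(\hat{f}_E)=\{\mu:\hat{f}_E(\mu)\ne0\}$ and one shows a dilation by $b\notin\gf(2)$ cannot fix this set — for instance because it would force $x^3$ and $(bx)^3=b^3x^3$ to have ``aligned'' supports, contradicting known structure of the Gold-function Walsh support (or, if needed, one argues directly that such a $b$ would contradict the $3$-transitivity already established, since a nontrivial stabilizer would make the design count smaller than allowed). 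Granting this, $|\mathrm{GA}_1(q)_B|=1$, so $b=|\mathcal B|=q(q-1)$. Finally, $k=|B|$: by definition $k=q-|\{((x+1)^s+x^s+1)^{1/(2^i+1)}:x\in\gf(q)\}|$, and since $u\mapsto u^{1/(2^i+1)}$ is a bijection, $k=q-|\{(x+1)^s+x^s+1:x\in\gf(q)\}|$; because $x^s$ is APN (Kasami), the map $x\mapsto(x+1)^s+x^s$ is $2$-to-$1$ onto its image, so $|\image|=q/2$, giving $k=q/2$. Then Theorem~1 of the excerpt yields $\lambda_2=\frac{k(k-1)}{|\mathrm{GA}_1(q)_B|}=\frac{(q/2)(q/2-1)}{1}$, and the standard identity $\lambda=\lambda_2\cdot\frac{k-2}{v-2}=\frac{(q/2)(q/2-1)}{1}\cdot\frac{q/2-2}{q-2}=\frac{q(q-4)}{8}$ gives the block index of the $3$-design. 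Assembling these: $\mathbb{KA}_{n,i}$ is a $3\text{-}\left(q,\frac{q}{2},\frac{q(q-4)}{8}\right)$ design.
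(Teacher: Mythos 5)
Your overall architecture matches the paper's: use Lemma~\ref{lem:Dil-Dob} to write $\hat{f}_B(\mu)=\hat{f}_E(\mu^{d})$ with $t=3$, $d=\frac{2^i+1}{3}$, invoke Theorem~\ref{thm:3designs-equations} to reduce the $3$-design property to the constancy of $\nu(u)=\bigl|\{x:\,(u^dx+(1+u)^d)^3+x^3+1=0\}\bigr|$, use Theorem~\ref{thm:intersection-blocks} for the trivial stabilizer, and compute $k=q/2$ and $\lambda=\frac{q(q-4)}{8}$. But the heart of the theorem is precisely the constancy of $\nu(u)$, and you have not proved it: you only ``expect'' it, and the route you sketch is structurally flawed. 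For $u\notin\gf(2)$ the equation $(u^dx+(1+u)^d)^3+x^3+1=0$ is a genuine cubic in $x$ with leading coefficient $1+u^{3d}=1+u^{2^i+1}\neq 0$ (since $\gcd(2^i+1,q-1)=1$), so no affine change of variable can turn it into the APN differential equation $(x+\gamma)^3+x^3=\delta$, which is quadratic in $x$ (the cubic terms cancel) and has $0$ or $2$ roots; indeed the true answer here is that the cubic has exactly \emph{one} root for every $u$, a count your proposed reduction could never produce. The paper closes this gap with Lemma~\ref{lem:nonsymmtry}: normalize the cubic to $x^3+\sigma_1x^2+\sigma_2x+\sigma_3$, compute $\sigma_2+\sigma_1^2$ and $\sigma_3+\sigma_1\sigma_2$ explicitly, and show via the identity
\[
\frac{(\sigma_2+\sigma_1^2)^3}{(\sigma_3+\sigma_1\sigma_2)^2}=W^2+W,\qquad W=\frac{U(V+1)^2}{(U+V)(UV+1)},\ U=u,\ V=u^{2^i},
\]
that the Berlekamp--Rumsey--Solomon criterion (Lemma~\ref{lem:solom}) gives trace value $1$ for all $u\notin\gf(2)$ (using $n$ odd), hence a unique root. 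This is exactly the ``trace computation'' you flag as the main obstacle; without it the proof is incomplete.

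There is a second, smaller gap. To apply Theorem~\ref{thm:intersection-blocks} you must verify $\mathrm{Supp}(\hat{f}_E)\neq b\cdot\mathrm{Supp}(\hat{f}_E)$ for all $b\in\gf(q)\setminus\gf(2)$, and you only gesture at ``aligned supports'' or a contradiction with ``$3$-transitivity.'' The latter argument is invalid: a nontrivial stabilizer is perfectly compatible with $(\gf(q),\mathcal B)$ being a $3$-design (it merely divides $\lambda$, cf.\ Equation~(\ref{eq:repeat-norep})), so no contradiction arises. The paper instead uses that $\mathrm{Supp}(\hat{f}_E)=\{\mu:\Tr(\mu)=1\}$ is a Singer difference set, so $|b\cdot\mathrm{Supp}(\hat{f}_E)\cap\mathrm{Supp}(\hat{f}_E)|=q/4<q/2$; alternatively, if $T=\{\mu:\Tr(\mu)=1\}$ satisfied $bT=T$ with $b\neq1$, then $\Tr((1+b)\mu)=0$ on all of $T$, forcing $T$ to be the kernel of a nonzero linear form, which is impossible since $0\notin T$. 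Your parameter computations ($k=q/2$ via the $2$-to-$1$ property of the Kasami derivative, and $\lambda$ via Theorem~1 plus the standard identity) are fine once the two gaps above are filled.
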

It is observed that $\mathbb{KA}_{n,i}$ and $\mathbb{KA}_{n,n-i}$ are isomorphic. Thus, we only need to consider
the $3$-design $\mathbb{KA}_{n,n-i}$, where $1\le i \le \frac{n-1}{2}$ and $\mathrm{gcd}(i,n)=1$.

To prove Theorem \ref{thm:Kassami}, we need the following lemmas.
\begin{lemma}\label{lem:solom}
Let $\sigma_1, \sigma_2, \sigma_3 \in \gf{(q)}$ such that  $\sigma_1^2\neq \sigma_2$ and $\sigma_3\neq \sigma_1 \sigma_2$. Then
the cubic equation $x^3+\sigma_1 x^2 + \sigma_2 x + \sigma_3 =0$ has a unique solution
$x\in \mathrm{GF}(2^n)$, if and only if
\[\mathrm{Tr}\left( \frac{(\sigma_2+\sigma_1^2)^3}{(\sigma_3+\sigma_1 \sigma_2)^2}+1\right )=1.\]
\end{lemma}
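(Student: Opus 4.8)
The plan is to reduce the count of solutions of the monic cubic $x^3+\sigma_1x^2+\sigma_2x+\sigma_3=0$ over $\gf(2^n)$ to the classical criterion for the number of roots of a \emph{depressed} cubic $y^3+py+r=0$, via an affine change of variable. First I would substitute $x=y+\sigma_1$, which kills the quadratic term: after expanding $(y+\sigma_1)^3+\sigma_1(y+\sigma_1)^2+\sigma_2(y+\sigma_1)+\sigma_3$ in characteristic $2$ (so that $(y+\sigma_1)^2=y^2+\sigma_1^2$ and the cross terms in the cube collapse), the equation becomes $y^3+(\sigma_2+\sigma_1^2)y+(\sigma_3+\sigma_1\sigma_2)=0$. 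Writing $p=\sigma_2+\sigma_1^2$ and $r=\sigma_3+\sigma_1\sigma_2$, the hypotheses $\sigma_1^2\neq\sigma_2$ and $\sigma_3\neq\sigma_1\sigma_2$ say precisely that $p\neq 0$ and $r\neq 0$. Since $x\mapsto x+\sigma_1$ is a bijection of $\gf(2^n)$, the original cubic has a unique root in $\gf(2^n)$ if and only if $y^3+py+r=0$ does.

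Next I would invoke (or reprove) the standard classification of roots of $y^3+py+r$ over $\gf(2^n)$ with $p,r\neq0$: normalize by setting $y=\sqrt{p}\,z$ (every element of $\gf(2^n)$ has a unique square root since $n$ is finite and the Frobenius is bijective), giving $p^{3/2}z^3+p^{3/2}z+r=0$, i.e. $z^3+z=p^{-3/2}r$, equivalently $z^3+z=c$ with $c=r^2/p^3$ (using $p^{-3/2}r=(r^2/p^3)^{1/2}$ and then replacing $z$ by its square, which is again a bijection). The key classical fact is: for $c\neq0$, the equation $z^3+z=c$ has exactly one solution in $\gf(2^n)$ if and only if $\tr(c^{-1}+1)=1$ — or, in the form better suited here, $z^3+z+c=0$ has a unique root iff $\tr(1/c)=0$ when ... — I would be careful to pin down the exact trace condition and match it to $\tr\!\big(p^3/r^2+1\big)=1$, which is exactly $\tr\!\big((\sigma_2+\sigma_1^2)^3/(\sigma_3+\sigma_1\sigma_2)^2+1\big)=1$, the claimed criterion.

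The main obstacle I anticipate is getting the trace condition exactly right, including the constant-term shift by $1$. The cleanest route is to classify all cases for $z^3+z=c$: if $c=0$ it has three roots (namely $0$ and the two roots of $z^2+z+1$, depending on $n\bmod 2$ — but here $c\neq0$); for $c\neq0$, the polynomial $z^3+z+c$ either is irreducible (no roots), or factors as (linear)$\times$(irreducible quadratic) (exactly one root), or splits completely (three roots). The discriminant-type invariant distinguishing these over $\gf(2^n)$ is governed by $\tr(1/c)$ together with whether a certain quadratic $w^2+cw+1$ (or similar resolvent) is reducible; one shows the "exactly one root" case corresponds precisely to $\tr(1/c)=1$ — or its complement — and then the extra $+1$ inside the trace in the statement accounts for the normalization $c=r^2/p^3$ versus $1/c=p^3/r^2$ and a parity term. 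I would verify the bookkeeping on a small case (say $n=3$ or $n=5$) to fix signs, then present the argument as: (i) depress the cubic, (ii) reduce to $z^3+z=c$, (iii) cite the resolvent/trace classification, (iv) back-substitute to recover $\tr\!\big((\sigma_2+\sigma_1^2)^3/(\sigma_3+\sigma_1\sigma_2)^2+1\big)=1$.

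I would note that a reference for step (iii) — e.g. Williams's theorem on cubics over $\gf(2^n)$, or Berlekamp–Rumsey–Solomon — can be cited directly, in which case the whole proof of Lemma~\ref{lem:solom} collapses to the single affine substitution $x=y+\sigma_1$ plus a quotation; that is almost certainly how the authors intend it (the lemma name "solom" hints at Berlekamp–Rumsey–Solomon). So concretely my proof proposal is: perform the substitution to get $y^3+py+r$ with $p=\sigma_2+\sigma_1^2\neq0$, $r=\sigma_3+\sigma_1\sigma_2\neq0$; observe the root counts of the two cubics coincide; and apply the cited criterion, which states that $y^3+py+r=0$ has a unique root in $\gf(2^n)$ iff $\tr\!\big(p^3/r^2\big)+\tr(1)=\tr\!\big(p^3/r^2+1\big)=1$, i.e. iff $\tr\!\big((\sigma_2+\sigma_1^2)^3/(\sigma_3+\sigma_1\sigma_2)^2+1\big)=1$.
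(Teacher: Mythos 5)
Your proposal is correct and takes essentially the same route as the paper: the paper's entire proof is a citation of Theorem 2 of Berlekamp--Rumsey--Solomon \cite{BRS67}, which is exactly the unique-root criterion $\mathrm{Tr}\left(p^3/r^2+1\right)=1$ you invoke after the substitution $x=y+\sigma_1$ giving $p=\sigma_2+\sigma_1^2$ and $r=\sigma_3+\sigma_1\sigma_2$. Your final formulation of the trace condition (after the hedging in the middle) is the correct one, so the argument goes through as you outline it.
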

\begin{proof}
The desired conclusion  follows from \cite[Theorem 2]{BRS67}.
\end{proof}

 \begin{lemma}\label{lem:nonsymmtry}
 Let  $n$ be an odd integer and $i$ be a positive integer with $\mathrm{gcd}(i,n)=1$. Then the cubic equation $\left (u^d x+(1+u)^d\right )^3+ x^3+1 =0 $
 has a unique solution $x\in \mathrm{GF}(2^n)$, where $d \equiv \frac{2^i+1}{3} \pmod {2^n-1}$ and  $u \in \gf{(q)} \setminus \gf{(2)}$.
 \end{lemma}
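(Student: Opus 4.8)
The plan is to expand the cubic, bring it into the monic normal form treated by Lemma~\ref{lem:solom}, and verify that lemma's three hypotheses. Throughout, write $a=u^{d}$ and $b=(1+u)^{d}$; since $u\in\gf(q)\setminus\gf(2)$ both are nonzero, and from $3d\equiv 2^{i}+1\pmod{2^{n}-1}$ we get $a^{3}=u^{2^{i}+1}$ and $b^{3}=(1+u)^{2^{i}+1}$. All the algebra below is driven by the single Frobenius identity
\[
a^{3}+b^{3}+1=u^{2^{i}+1}+(1+u)^{2^{i}+1}+1=u^{2^{i}}+u=:V,
\]
which follows from $(1+u)^{2^{i}}=1+u^{2^{i}}$ (and, applied to $w:=1+u$, also gives $V=w^{2^{i}}+w$).

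First I would expand $(ax+b)^{3}+x^{3}+1$ in characteristic $2$ to obtain $(a^{3}+1)x^{3}+a^{2}bx^{2}+ab^{2}x+(b^{3}+1)$. Because $\gcd(2^{i}+1,2^{n}-1)=1$ and $u\neq 1$, we have $a^{3}=u^{2^{i}+1}\neq 1$, so dividing by $a^{3}+1$ gives a monic cubic $x^{3}+\sigma_{1}x^{2}+\sigma_{2}x+\sigma_{3}$ with $\sigma_{1}=\frac{a^{2}b}{a^{3}+1}$, $\sigma_{2}=\frac{ab^{2}}{a^{3}+1}$, $\sigma_{3}=\frac{b^{3}+1}{a^{3}+1}$. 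A short computation then yields
\[
\sigma_{1}^{2}+\sigma_{2}=\frac{ab^{2}}{(a^{3}+1)^{2}},\qquad
\sigma_{3}+\sigma_{1}\sigma_{2}=\frac{a^{3}+b^{3}+1}{(a^{3}+1)^{2}}=\frac{V}{(a^{3}+1)^{2}} .
\]
Both are nonzero ($a,b\neq 0$, and $V=u^{2^{i}}+u\neq 0$ since $\gcd(i,n)=1$ forces $u^{2^{i}}=u$ only when $u\in\gf(2)$), which disposes of the two nondegeneracy conditions in Lemma~\ref{lem:solom} and shows
\[
\frac{(\sigma_{2}+\sigma_{1}^{2})^{3}}{(\sigma_{3}+\sigma_{1}\sigma_{2})^{2}}=\frac{a^{3}b^{6}}{(a^{3}+1)^{2}V^{2}} .
\]

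The core of the proof is the trace condition $\mathrm{Tr}\!\left(\frac{a^{3}b^{6}}{(a^{3}+1)^{2}V^{2}}+1\right)=1$. Since $n$ is odd, $\mathrm{Tr}(1)=1$, so it is enough to show $\mathrm{Tr}\!\left(\frac{a^{3}b^{6}}{(a^{3}+1)^{2}V^{2}}\right)=0$. I would use the rearrangement $a^{3}+1=b^{3}+V$, hence $b^{6}=(a^{3}+1)^{2}+V^{2}$, to split
\[
\frac{a^{3}b^{6}}{(a^{3}+1)^{2}V^{2}}=\frac{a^{3}}{V^{2}}+\frac{a^{3}}{(a^{3}+1)^{2}} .
\]
Substituting $a^{3}=(a^{3}+1)+1$ turns the second summand into $\frac{1}{a^{3}+1}+\left(\frac{1}{a^{3}+1}\right)^{2}$, of trace $0$; substituting $a^{3}=V+1+b^{3}$ turns the first into $\left(\frac1V+\frac1{V^{2}}\right)+\frac{b^{3}}{V^{2}}$, whose parenthesised part has trace $0$. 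Finally, with $w=1+u$ (so $b^{3}=w^{2^{i}+1}$ and $V=w^{2^{i}}+w$) one gets $\frac{b^{3}}{V^{2}}=\frac{\xi}{(\xi+1)^{2}}$ with $\xi:=w^{2^{i}-1}$, and $\frac{\xi}{(\xi+1)^{2}}=\eta^{2}+\eta$ for $\eta:=\frac{1}{\xi+1}$; here $\gcd(2^{i}-1,2^{n}-1)=1$ together with $u\notin\gf(2)$ guarantee $\xi\neq 1$, so $\eta$ is well defined and $\mathrm{Tr}(b^{3}/V^{2})=0$. Summing the three contributions gives $\mathrm{Tr}\!\left(\frac{a^{3}b^{6}}{(a^{3}+1)^{2}V^{2}}\right)=0$, so all hypotheses of Lemma~\ref{lem:solom} hold and the cubic has exactly one solution in $\gf(2^{n})$.

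The step I expect to be the main obstacle is precisely this trace identity: the two reductions—peeling off terms of the shape $\theta+\theta^{2}$, and collapsing $b^{3}/V^{2}$ to $\eta^{2}+\eta$ via the substitution $\xi=(1+u)^{2^{i}-1}$—carry all the content, and one must keep careful track of the side conditions ($a^{3}\neq 1$, $V\neq 0$, $\xi\neq 1$), each of which reduces to $\gcd(i,n)=1$ and $u\notin\gf(2)$. Everything else is routine characteristic-$2$ expansion.
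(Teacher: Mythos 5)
Your proposal is correct and follows essentially the same route as the paper: the same expansion, the same reduction to the Berlekamp--Rumsey--Solomon criterion of Lemma~\ref{lem:solom}, and the same expressions for $\sigma_2+\sigma_1^2$ and $\sigma_3+\sigma_1\sigma_2$ with the same nondegeneracy checks. The only difference is cosmetic and lies in the final trace computation: the paper exhibits $\frac{(\sigma_2+\sigma_1^2)^3}{(\sigma_3+\sigma_1\sigma_2)^2}$ directly as $W^2+W$ with $W=\frac{U(V+1)^2}{(U+V)(UV+1)}$, $U=u$, $V=u^{2^i}$, whereas you decompose it into three Artin--Schreier terms of the form $\theta+\theta^2$; both give trace $0$ and then use $\mathrm{Tr}(1)=1$ for odd $n$.
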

\begin{proof}
Let $S=\left (u^d x+(1+u)^d\right )^3+ x^3+1$.
Expanding $\left (u^d x+(1+u)^d\right )^3$, one has
\begin{align*}
S=& \left ( 1+ u^{3d} \right ) x^3 +  u^{2d} (1+u)^d  x^2 +   u^{d} (1+u)^{2d} x+ \left ( (1+u)^{3d} +1  \right ).
\end{align*}
Note that $\mathrm{gcd}(3d,2^n-1)=1$ and $u\neq 1$. Then $1+ u^{3d}\neq 0$ and
\begin{align*}
\frac{S}{1+ u^{3d}}=& x^3 +  \frac{u^{2d} (1+u)^d}{1+ u^{3d}}  x^2 + \frac{  u^{d} (1+u)^{2d}}{1+ u^{3d}} x+ \frac{(1+u)^{3d} +1  }{1+ u^{3d}}\\
=& x^3+\sigma_1 x^2 + \sigma_2 x + \sigma_3.
\end{align*}
One has
\begin{align}\label{eq:sgm2-sgm1}
\sigma_2+\sigma_1^2=& \frac{u^{d} (1+u)^{2d} \left ( 1+ u^{3d}\right )+  u^{4d} (1+u)^{2d}}{\left ( 1+ u^{3d}\right )^2}\nonumber \\
=& \frac{u^{d} (1+u)^{2d} \left ( \left ( 1+ u^{3d}\right )+  u^{3d}  \right )}{\left ( 1+ u^{3d}\right )^2}\nonumber\\
=& \frac{u^{d} (1+u)^{2d}}{\left ( 1+ u^{3d}\right )^2}
\end{align}
and
\begin{align}\label{eq:sgm3-sgm1sgm2}
\sigma_3+\sigma_1 \sigma_2= & \frac{\left( (1+u)^{3d} +1  \right ) \left ( 1+ u^{3d}\right )+ u^{3d} (1+u)^{3d}}{\left ( 1+ u^{3d}\right )^2}\nonumber\\
=&  \frac{ u^{3d} + (1+u)^{3d}+1}{\left ( 1+ u^{3d}\right )^2}\nonumber\\
=& \frac{ u^{2^i+1} + (1+u)^{2^i+1}+1}{\left ( 1+ u^{3d}\right )^2}\nonumber\\
=& \frac{ u^{2^i} + u}{\left ( 1+ u^{3d}\right )^2}.
\end{align}
By Equations (\ref{eq:sgm2-sgm1}) and (\ref{eq:sgm3-sgm1sgm2}) , $\sigma_2+\sigma_1^2\neq 0$ and $\sigma_3+\sigma_1 \sigma_2\neq 0$
since $u\in \gf{(2^n)}\setminus \gf{(2)}$ and $\mathrm{gcd}(i,n)=1$.

Let $A=\frac{(\sigma_2+\sigma_1^2)^3}{(\sigma_3+\sigma_1 \sigma_2)^2}$. Using Equations (\ref{eq:sgm2-sgm1}) and (\ref{eq:sgm3-sgm1sgm2}) again,
one has
\begin{align*}
A=& \frac{u^{3d} (1+u)^{6d} \left ( 1+ u^{3d}\right )^4}{\left ( 1+ u^{3d}\right )^6 \left(  u^{2^i} + u \right)^2}\\
=& \frac{u^{3d} (1+u)^{6d}}{\left ( 1+ u^{3d}\right )^2 \left(  u^{2^i} + u \right)^2}\\
=& \frac{u^{2^i+1} (1+u)^{2^{i+1}+2}}{\left ( 1+ u^{2^i+1}\right )^2 \left(  u^{2^i} + u \right)^2}\\
=& \frac{UV(U+1)^2(V+1)^2}{(U+V)^2(UV+1)^2},
\end{align*}
where $U=u$ and $V=u^{2^i}$.
It is observed that
\begin{align*}
\frac{UV(U+1)^2(V+1)^2}{(U+V)^2(UV+1)^2}=W^2+W,
\end{align*}
where $W=\frac{U(V+1)^2}{(U+V)(UV+1)}$.
Thus, $\mathrm{Tr}\left( \frac{(\sigma_2+\sigma_1^2)^3}{(\sigma_3+\sigma_1 \sigma_2)^2}+1\right )=\mathrm{Tr}(W^2+W+1)=1$.
The desired conclusion then follows from Lemma \ref{lem:solom}.
\end{proof}

\begin{proof}[\textbf{The proof of Theorem \ref{thm:Kassami}}]

Using Lemma  \ref{lem:Dil-Dob}, one has, for all $\mu \in \gf{(q)}$,
\[\hat{f}_B(\mu)=\hat{f}_E\left (\mu^{\frac{2^i+1}{3}} \right )=\sum_{x \in \gf{(q)}} (-1)^{\mathrm{Tr}\left (x^3+\mu^{\frac{2^i+1}{3}} x\right )},\]
where $E=\{x \in \gf{(q)}: \mathrm{Tr}(x^3)=1\}$.
Combining  Lemma \ref{lem:nonsymmtry} and Theorem \ref{thm:3designs-equations},  one has $(\mathrm{GF}(q), \mathrm{GA}_1(q)(B))$ is a $3$-design.

Note that $f_E(x)=\mathrm{Tr}\left (x^3\right)$ is a semi-bent function \cite{Sihem16}, and $\mathrm{Supp}(\hat{f}_E)=\{\mu \in \mathrm{GF}(q): \mathrm{Tr}(\mu)=1\}$, which is  a Singer difference set with parameters $(q-1, q/2, q/4)$ in the group $\mathrm{GF}(q)^*$.
Thus, $|b\cdot \mathrm{Supp}(\hat{f}_E) \cap \mathrm{Supp}(\hat{f}_E)|=q/4$ for any $b\in \mathrm{GF}(q)\setminus \mathrm{GF}(2)$ \cite[Proposition 2]{DD04}.
Then, $|\mathrm{GA}_1(q)(B)|=q(q-1)$ from Theorem \ref{thm:intersection-blocks}.
From Part (1) of  Lemma \ref{lem:one-two=0} and $\hat{f}_B(0)= \hat{f}_E(0) $, one has $|B|=|E|=q/2$.

Hence,  the incidence structure $(\mathrm{GF}(q), \mathrm{GA}_1(q)(B))$ is a $3$-$\left (q, \frac{q}{2}, \frac{q(q-4)}{8} \right )$ design.
This completes the poof of Theorem \ref{thm:Kassami}.

\end{proof}

 \subsection{Another construction of $3$-designs from APN functions}\label{subsec:des-apn}

 Let $x^s$ be an APN function over $\mathrm{GF}(q)$ with $\mathrm{gcd}(s,q-1)=1$.
Define the base block $B_s$  as
\begin{align}\label{eq:Bs}
B_{s}=\{(x+1)^s+x^s: x\in \mathrm{GF}(q)\}.
\end{align}
Since $x^s$ is APN, the function
$(x+1)^s+x^s$ is $2$-to-$1$. Thus, $|B_s|=\frac{q}{2}$.
In this case, we also denote the base block $B_s$ by $AP_{n,s}$.
We shall study the incidence  structure
\[\mathbb{AP}_{n,s}=\left (\gf{(2^n)},  \mathrm{GA}_1(2^n)(AP_{n,s}) \right ).\]

When $s=2^i+1$, we have the following theorem on $3$-designs
$\mathbb{AP}_{n,s}$.
\begin{theorem}
Let $n \geq 4$ and $s=2^i+1$, where  $n/\gcd(i,n)$ is odd. Then the incidence structure
$\mathbb{AP}_{n,s}=(\gf(q), \mathrm{GA}_1(q)(B_s))$ is a $3$-$(q, q/2, (q-4)/4)$.
\end{theorem}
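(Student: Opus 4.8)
The plan is to identify $B_s$ explicitly as an affine hyperplane of $\mathrm{GF}(q)$, viewed as an $n$-dimensional vector space over $\mathrm{GF}(2)$, and then to feed this into the machinery of Section~\ref{sec:characterization} exactly as in the proof of Theorem~\ref{thm:Kassami}. Since $x^s$ is a Gold APN function we have $\mathrm{gcd}(i,n)=1$, and then the hypothesis that $n/\mathrm{gcd}(i,n)$ be odd just says $n$ is odd (this also re-confirms $\mathrm{gcd}(s,q-1)=1$). First I would record the elementary identity
\[ (x+1)^s+x^s = (x^{2^i}+1)(x+1)+x^{2^i+1}=x^{2^i}+x+1, \]
so that $B_s=1+\mathrm{Im}(L)$ with $L(x)=x^{2^i}+x$. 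As $L$ is $\mathrm{GF}(2)$-linear with $\ker L=\mathrm{GF}(2^{\mathrm{gcd}(i,n)})=\mathrm{GF}(2)$, its image is a hyperplane; and since $\mathrm{Tr}(L(x))=\mathrm{Tr}(x^{2^i})+\mathrm{Tr}(x)=0$ for all $x$, a dimension count forces $\mathrm{Im}(L)=\{z\in\mathrm{GF}(q):\mathrm{Tr}(z)=0\}$. Because $n$ is odd, $\mathrm{Tr}(1)=1$, hence $B_s=\{w\in\mathrm{GF}(q):\mathrm{Tr}(w)=1\}$, so $k=|B_s|=q/2$, $f_{B_s}(w)=\mathrm{Tr}(w)$, and $\hat f_{B_s}(\mu)=\sum_{w}(-1)^{\mathrm{Tr}((1+\mu)w)}$, which equals $q$ if $\mu=1$ and $0$ otherwise.

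Next I would invoke Theorem~\ref{thm:3designs-equations} with $E=B_s$ and exponents $t=d=1$ (note $\hat f_{B_s}(\mu)=\sum_{x}(-1)^{\mathrm{Tr}(x+\mu x)}$ and $\mathrm{gcd}(td,q-1)=1$): the structure $\mathbb{AP}_{n,s}=(\mathrm{GF}(q),\mathrm{GA}_1(q)(B_s))$ is a $3$-design if and only if $|\{x\in\mathrm{GF}(q):(ux+(1+u))+x+1=0\}|=|\{x:(1+u)x=u\}|$ is independent of $u\in\mathrm{GF}(q)\setminus\mathrm{GF}(2)$. For such $u$ we have $1+u\neq0$, so this set is the singleton $\{u/(1+u)\}$; the count is constantly $1$, and $\mathbb{AP}_{n,s}$ is a $3$-design. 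Equivalently, since $\hat f_{B_s}$ is supported on $\{1\}$, part~(3) of Theorem~\ref{thm:3designs-pracchar} with $E=B_s$, $d=1$ holds trivially: for $u\notin\mathrm{GF}(2)$ one can never have $\alpha=u\alpha=(1+u)\alpha=1$, so $\sum_{\alpha}\hat f_{B_s}(\alpha)\hat f_{B_s}(u\alpha)\hat f_{B_s}((1+u)\alpha)\equiv 0$.

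It remains to pin down $\lambda$, for which I would compute the stabilizer $\mathrm{GA}_1(q)_{B_s}$ directly (Theorem~\ref{thm:intersection-blocks} does not apply here, since $f_{B_s}=\mathrm{Tr}$ is affine rather than semi-bent). If $\pi_{a,b}(B_s)=B_s$, then $\{w:\mathrm{Tr}(a^{-1}w)=1+\mathrm{Tr}(a^{-1}b)\}=\{w:\mathrm{Tr}(w)=1\}$; since neither affine hyperplane contains $0$, comparing them forces the linear forms $w\mapsto\mathrm{Tr}(a^{-1}w)$ and $w\mapsto\mathrm{Tr}(w)$ to agree, hence $a=1$, and then $\mathrm{Tr}(b)=0$. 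Thus $\mathrm{GA}_1(q)_{B_s}=\{\pi_{1,b}:\mathrm{Tr}(b)=0\}$ has order $q/2$. By the $2$-design theorem recalled in the Introduction, $\mathbb{AP}_{n,s}$ is a $2$-$(q,q/2,\lambda_2)$ design with $\lambda_2=k(k-1)/|\mathrm{GA}_1(q)_{B_s}|=(q/2)(q/2-1)/(q/2)=(q-2)/2$; combining this with the $3$-design property and the standard identity $\lambda_3=\lambda_2(k-2)/(v-2)$ gives $\lambda_3=\frac{q-2}{2}\cdot\frac{(q-4)/2}{q-2}=\frac{q-4}{4}$. Hence $\mathbb{AP}_{n,s}$ is a $3$-$(q,q/2,(q-4)/4)$ design.

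The computations are all elementary, and I do not expect a genuine obstacle: the only points needing care are (i) identifying $\mathrm{Im}(L)$ with the trace-zero hyperplane, so that $B_s$ is determined exactly and not merely ``up to an affine hyperplane'', and (ii) verifying that the orbit of this hyperplane has stabilizer of order exactly $q/2$ (equivalently, that $\mathrm{GA}_1(q)$ acts transitively on all $2(q-1)$ affine hyperplanes while fixing $B_s$ only along a subgroup of index $2$). Note that, unlike the situation in Theorem~\ref{thm:Kassami}, the argument never uses the APN property of a general power function---only the explicit Gold form $s=2^i+1$ with $\mathrm{gcd}(i,n)=1$ enters---and correspondingly the stabilizer here is nontrivial rather than trivial.
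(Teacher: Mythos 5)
Your proposal is correct, and it shares the paper's one key observation: for $s=2^i+1$ the map $(x+1)^s+x^s=x^{2^i}+x+1$ is affine, so $B_s$ is a coset of a hyperplane. Where the paper stops at this remark and defers everything else to Corollary 29 of \cite{DT192}, you give a self-contained completion: you pin down $B_s$ exactly as the trace-one hyperplane $\{w:\Tr(w)=1\}$ (using $\gcd(i,n)=1$ and $\Tr(1)=1$), so the orbit design is just the classical point--hyperplane design of $\AG(n,2)$; you then verify the $3$-design property through the paper's own Theorem \ref{thm:3designs-equations} with $t=d=1$ (the relevant equation $(u+1)x=u$ having exactly one root), and you compute the stabilizer directly, correctly noting that Theorem \ref{thm:intersection-blocks} is unavailable since $f_{B_s}=\Tr$ is affine rather than semi-bent, obtaining $|\GA_1(q)_{B_s}|=q/2$ and hence $\lambda_2=(q-2)/2$ and $\lambda_3=(q-4)/4$. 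This buys a proof that stays inside the paper (no appeal to \cite{DT192}) and makes explicit the contrast with Theorem \ref{thm:Kassami}, where the stabilizer is trivial. One caveat worth recording: you read the hypothesis together with the subsection's standing APN assumption, i.e.\ $\gcd(i,n)=1$ and hence $n$ odd; this is the right reading, since if $\gcd(i,n)=e>1$ (which the literal hypothesis ``$n/\gcd(i,n)$ odd'' would allow) the block has size $2^{n-e}\neq q/2$ and the stated parameters cannot hold, so your restriction is not a loss but a necessary clarification of the statement.
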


\begin{proof}
Note that $(x+1)^s-x^s$ is an affine function. The proof is similar to the proof
of Corollary 29 in \cite{DT192}.
\end{proof}

When $s=2^{2i}-2^i+1$, we need the following lemma to characterize $3$-designs
$\mathbb{AP}_{n,s}$.
\begin{proposition}\label{prop:val-spec}
Let $n=3 i \pm 1$ and $s=2^{2i}-2^i+1$, where $i$ is an even positive integer.
Then,
$$\hat{f}_{B_s}(\mu)= \hat{f}_{E}(\mu^{d})=\sum_{x\in \mathrm{GF}(q)} (-1)^{\mathrm{Tr}\left ( x^{2^i+1} +\mu^d x \right )},$$
where $E=\{\mu: \mathrm{Tr}(x^{2^i+1})=1\}$ and $d\equiv \frac{1}{s} \pmod{2^n-1}$.
\end{proposition}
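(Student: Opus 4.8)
The plan is to strip off two routine reductions and then reduce the statement to a single quadratic Gauss--sum evaluation of the type carried out in \cite{DD04}; that evaluation will be the only real work. First, since $x^s$ is APN, $g(x):=(x+1)^s+x^s$ is $2$-to-$1$ onto $B_s$, so $|B_s|=q/2$, and the computation in the proof of Lemma~\ref{lem:prodof3spec} gives $\hat{f}_{B_s}(\mu)=-2\sum_{y\in B_s}(-1)^{\mathrm{Tr}(\mu y)}=-\sum_{x\in\mathrm{GF}(q)}(-1)^{\mathrm{Tr}(\mu g(x))}$ for $\mu\neq0$, while $\hat{f}_{B_s}(0)=q-2|B_s|=0$. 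Since $i$ is even and $n=3i\pm1$, $n$ is odd and $\gcd(i,n)=1$, hence $\gcd(2^i+1,q-1)=1$; so $\mathrm{Tr}(x^{2^i+1})$ is balanced, $|E|=q/2$, $\hat{f}_E(0)=0$, and (again by the identity in Lemma~\ref{lem:prodof3spec}) $\hat{f}_E(\nu)=-2\sum_{y\in E}(-1)^{\mathrm{Tr}(\nu y)}$ for $\nu\neq0$. Thus the $\mu=0$ case is clear, and for $\mu\neq0$ it remains to show $\sum_{x}(-1)^{\mathrm{Tr}(\mu g(x))}=-\hat{f}_E(\mu^d)$.

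Next I would homogenise. Writing $\mu=\lambda^s$, so $\lambda=\mu^d$, and replacing $x$ by $x/\lambda$ turns $\mu g(x)$ into $(x+\lambda)^s+x^s$; hence $\sum_x(-1)^{\mathrm{Tr}(\mu g(x))}$ is the derivative sum of $F(x)=x^s$ at $\lambda$. Using the identity $\sum_x(-1)^{\mathrm{Tr}(F(x+\lambda)+F(x))}=\frac1q\sum_\gamma\mathcal{W}_F(1,\gamma)^2(-1)^{\mathrm{Tr}(\gamma\lambda)}$ and the fact that the Kasami function $x^s$ is AB for odd $n$ (so $\mathcal{W}_F(1,\gamma)^2\in\{0,2^{n+1}\}$), this equals $2\sum_{\gamma\in V}(-1)^{\mathrm{Tr}(\gamma\lambda)}$, where $V:=\{\gamma\in\mathrm{GF}(q):\mathcal{W}_F(1,\gamma)\neq0\}=\mathrm{Supp}(\hat{f}_D)$ with $D:=\{x:\mathrm{Tr}(x^s)=1\}$ and $|V|=2^{n-1}$ by Parseval. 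Since $\hat{f}_E(\mu^d)=-2\sum_{\gamma\in E}(-1)^{\mathrm{Tr}(\gamma\mu^d)}$ for $\mu\neq0$ and $\mu\mapsto\mu^d$ is a bijection, the Proposition becomes \emph{equivalent} to the set identity
\[
V=E,\qquad\text{i.e.}\qquad\{\gamma\in\mathrm{GF}(q):\mathcal{W}_F(1,\gamma)\neq0\}=\{\gamma\in\mathrm{GF}(q):\mathrm{Tr}(\gamma^{2^i+1})=1\}.
\]

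The hard part is this last identity, and it is exactly where $n=3i\pm1$ is used. From $s(2^i+1)=2^{3i}+1\equiv2^{\pm1}+1\pmod{2^n-1}$, the substitution $x=t^{2^i+1}$ (valid since $\gcd(2^i+1,q-1)=1$), combined with $\mathrm{Tr}(w)=\mathrm{Tr}(w^{2^{-1}})$ to absorb a square root when $n=3i+1$, rewrites $\mathcal{W}_F(1,\gamma)$ as $\sum_t(-1)^{\mathrm{Tr}(t^3+\gamma t^{2^i+1})}$, the Gauss sum of the quadratic form $Q_\gamma(t)=\mathrm{Tr}(t^3+\gamma t^{2^i+1})$. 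One then analyses $Q_\gamma$ through its polar form $\mathrm{Tr}(L_\gamma(t)t')$ for an explicit linearised polynomial $L_\gamma$: the AB property forces the radical $\ker L_\gamma$ to be one-dimensional, say $\{0,r_\gamma\}$, whenever the sum is nonzero, and then the sum vanishes precisely when $Q_\gamma(r_\gamma)\neq0$. The remaining, and only genuinely computational, step is to determine $r_\gamma$ and to prove $Q_\gamma(r_\gamma)=1+\mathrm{Tr}(\gamma^{2^i+1})$ (consistent with $\gamma=0$, where $Q_0=\mathrm{Tr}(t^3)$, $r_0=1$, $Q_0(1)=\mathrm{Tr}(1)=1$ for $n$ odd); this gives $V\subseteq E$, and since $|V|=|E|=2^{n-1}$ a single inclusion suffices. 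This computation is of the same nature as those in \cite{DD04} (compare Lemma~\ref{lem:Dil-Dob}), and I would carry it out along those lines.
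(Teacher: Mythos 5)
Your reductions are correct and in fact coincide with the paper's: your autocorrelation identity
\[
\sum_{x}(-1)^{\mathrm{Tr}(F(x+\lambda)+F(x))}=\frac1q\sum_{\gamma}\mathcal{W}_F(1,\gamma)^2(-1)^{\mathrm{Tr}(\gamma\lambda)},\qquad \lambda=\mu^d,
\]
is exactly the paper's expression $\frac{-1}{q}\sum_{\beta}(-1)^{\mathrm{Tr}(\mu^{1/s}\beta)}\hat{g}(\beta)^2$ with $g(x)=\mathrm{Tr}(x^s)$, since $\hat g(\beta)=\mathcal{W}_F(1,\beta)$, and your treatment of $\mu=0$, of the $2$-to-$1$ map, and the sign bookkeeping are all fine. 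The divergence is at the decisive step. The paper closes the argument by quoting Dillon \cite{Dil99}: the Walsh transform of $\mathrm{Tr}(x^{2^{2i}-2^i+1})$ vanishes off $E=\{\gamma:\mathrm{Tr}(\gamma^{2^i+1})=1\}$ and equals $\pm2^{(n+1)/2}$ on $E$. You instead set out to prove this support identity yourself, via $s(2^i+1)\equiv 2^{\pm1}+1$, the substitution $x=t^{2^i+1}$, and the radical analysis of the quadratic form $Q_\gamma(t)=\mathrm{Tr}(t^3+\gamma t^{2^i+1})$ — and then stop, declaring the determination of the radical element $r_\gamma$ and the evaluation $Q_\gamma(r_\gamma)=1+\mathrm{Tr}(\gamma^{2^i+1})$ to be a routine computation "to be carried out along the lines of \cite{DD04}".

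That is a genuine gap, not a routine omission: under the hypothesis $n=3i\pm1$ this identity \emph{is} the entire content of the proposition, and it is precisely the nontrivial spectral theorem the paper imports from \cite{Dil99} (cf. the appendix computations in \cite{DD04}, which are long). Concretely, the radical of $Q_\gamma$ is the kernel of the linearized polynomial $L_\gamma(t)=t^2+t^{2^{n-1}}+\gamma t^{2^i}+\gamma^{2^{n-i}}t^{2^{n-i}}$; exhibiting its nonzero root $r_\gamma$ in closed form as a function of $\gamma$ and proving $Q_\gamma(r_\gamma)=1+\mathrm{Tr}(\gamma^{2^i+1})$ (even granting that the kernel is one-dimensional whenever the character sum is nonzero) is not a few lines, and nothing in your sketch indicates how it would be done; the consistency check at $\gamma=0$ does not substitute for it. There is also a glossed detail in the case $n=3i+1$, where "absorbing a square root" replaces $\gamma$ by $\gamma^2$ inside the form and one must note Frobenius-invariance of the trace to recover the stated set $E$ — harmless, but symptomatic of the sketchiness. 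Had you simply invoked the known Kasami--Welch spectrum (as the paper does), your argument would be complete and essentially identical to the paper's; as written, the proof is incomplete exactly at the point where the hypotheses $n=3i\pm1$, $i$ even, do their work.
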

\begin{proof}
If $\mu=0$, then from Part (1) of Lemma \ref{lem:one-two=0} one has $\hat{f}_{B_s}(\mu)= \hat{f}_{E}(\mu^{d})$.

Suppose $\mu\in \mathrm{GF}(q)^*$. Then $\sum_{x \in \mathrm{GF}(q)} (-1)^{ \mathrm{Tr}(\mu x)}=0$ and
\begin{align*}
\hat{f}_{B_s}(\mu)=& \sum_{x \in \mathrm{GF}(q)} (-1)^{f_{B_s}(x)+ \mathrm{Tr}(\mu x)}\\
=& \sum_{x \in \mathrm{GF}(q)} \left ((-1)^{f_{B_s}(x)}-1 \right )(-1)^{ \mathrm{Tr}(\mu x)}\\
=& \sum_{x \in B_s}  (-2)\cdot  (-1)^{ \mathrm{Tr}(\mu x)}.
\end{align*}
Since $(x+1)^s+x^s$ is a two-to-one function from $\mathrm{GF}(q)$ to $B_s$, one has
\begin{align*}
\hat{f}_{B_s}(\mu)=& \sum_{x \in \mathrm{GF}(q)}  (-1)\cdot  (-1)^{ \mathrm{Tr}(\mu  ((x+1)^s+x^s))}\\
=& \frac{-1}{q} \sum_{x, y \in \mathrm{GF}(q)}   (-1)^{ \mathrm{Tr}(\mu (x^s+y^s))} \sum_{\beta  \in \mathrm{GF}(q) } (-1)^{\mathrm{Tr}(\beta (x+y+1))}\\
=& \frac{-1}{q} \sum_{\beta  \in \mathrm{GF}(q) } (-1)^{\mathrm{Tr(\beta)}} \left ( \sum_{x \in \mathrm{GF}(q)} (-1)^{ \mathrm{Tr}(\mu  x^s+\beta x)} \right )^2.
\end{align*}
Note that $\mathrm{gcd}(s, 2^n-1)=1$.
Then $\sum_{x \in \mathrm{GF}(q)} (-1)^{ \mathrm{Tr}(\mu  x^s+\beta x)}= \sum_{x \in \mathrm{GF}(q)} (-1)^{ \mathrm{Tr}(x^s+\mu^{-\frac{1}{s}}\beta x)} $, and
\begin{align}
\hat{f}_{B_s}(\mu)=&  \frac{-1}{q} \sum_{\beta  \in \mathrm{GF}(q) } (-1)^{\mathrm{Tr}(\beta)}  (\hat{g}(\mu^{-\frac{1}{e}}\beta))^2,\nonumber \\
=&   \frac{-1}{q} \sum_{\beta  \in \mathrm{GF}(q) } (-1)^{\mathrm{Tr}( \mu^{\frac{1}{s}} \beta)}  (\hat{g}(\beta))^2,
\end{align}
where $g(x)=\mathrm{Tr}(x^s)$.
From \cite{Dil99}, one has
\begin{align*}
\hat{g}(\beta)=
\left\{
  \begin{array}{ll}
    \pm 2^{\frac{n+1}{2}}, & \beta \in E,\\
 0, & \text{otherwise},
  \end{array}
\right.
\end{align*}
 where $E=\{\mu \in  \mathrm{GF}(2^n):  \mathrm{Tr}(\mu^{2^i+1})=1\}$. Then
\begin{align}\label{eq:J-E}
\hat{f}_{B_s}(\mu)=&   -2  \sum_{x  \in E } (-1)^{\mathrm{Tr}( \mu^{\frac{1}{s}} x)}.
\end{align}
Using Equation (\ref{eq:J-E}), one gets
\begin{align*}
\hat{f}_{B_s}(\mu)=& \sum_{x\in \mathrm{GF}(2^n)} \left ((-1)^{f_E(x)}-1 \right )(-1)^{\mathrm{Tr}( \mu^{\frac{1}{s}} x)}\\
=& \sum_{x\in \mathrm{GF}(2^n)} (-1)^{\mathrm{Tr}( f_E(x)+\mu^{\frac{1}{s}} x)}.
\end{align*}
The desired conclusion then follows.
\end{proof}

We have the  following proposition on the incidence structure
$\mathbb{AP}_{n,s}$, where $s=2^{2i}-2^i+1$.
\begin{proposition}\label{APns-equation}
Let $n=3 i \pm 1$ and $s=2^{2i}-2^i+1$, where $i$ is an even positive integer. Let $d\equiv \frac{1}{s} \pmod{2^n-1}$.
Then, the incidence structure $\mathbb{AP}_{n,s}=(\mathrm{GF}(q), \mathrm{GA}_1(q)(B_s))$ is a $3$-$\left (q, \frac{q}{2}, \frac{q(q-4)}{8} \right )$ design, if and only if,
$$\left |\left \{x\in \mathrm{GF}(2^n): \left (u^d x+(1+u)^d\right )^{2^i+1}+ x^{2^i+1}+1 =0\right  \} \right |$$
is independent of $u$, where $u\in \mathrm{GF}(q)\setminus \mathrm{GF}(2)$.
\end{proposition}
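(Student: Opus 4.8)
The plan is to read the proposition off from Theorem \ref{thm:3designs-equations} via the Walsh evaluation of Proposition \ref{prop:val-spec}, and then to pin down the parameters with Theorem \ref{thm:intersection-blocks}, exactly as in the proof of Theorem \ref{thm:Kassami}.

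First I would clear away the arithmetic. Since $i$ is even, $n=3i\pm1$ is odd, and for odd $n$ one has $\gcd(2^j+1,2^n-1)=1$ for every $j$ (the multiplicative order of $2$ modulo a common prime divisor would be an even divisor of the odd integer $n$). Hence $\gcd(2^i+1,2^n-1)=1$, and since $2^{3i}+1=s(2^i+1)$ also $\gcd(s,2^n-1)\mid\gcd(2^{3i}+1,2^n-1)=1$; thus $d\equiv\frac1s\pmod{2^n-1}$ is well defined and $\gcd\bigl((2^i+1)d,\,2^n-1\bigr)=1$. Also $\gcd(i,n)=\gcd(i,\pm1)=1$, so $x^{2^{2i}-2^i+1}$ is a Kasami APN power function on $\mathrm{GF}(q)$; consequently $(x+1)^s+x^s$ is two-to-one, $|B_s|=q/2$, and $|B_s|\ge3$ since $i\ge2$ forces $n\ge5$. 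Now Proposition \ref{prop:val-spec} gives $\hat f_{B_s}(\mu)=\sum_{x\in\mathrm{GF}(q)}(-1)^{\mathrm{Tr}(x^{2^i+1}+\mu^d x)}$ for all $\mu$, so Theorem \ref{thm:3designs-equations} applies verbatim with $t=2^i+1$ and this $d$, yielding: $(\mathrm{GF}(q),\mathrm{GA}_1(q)(B_s))$ is a $3$-design if and only if $\bigl|\{x\in\mathrm{GF}(2^n):(u^dx+(1+u)^d)^{2^i+1}+x^{2^i+1}+1=0\}\bigr|$ is independent of $u\in\mathrm{GF}(q)\setminus\mathrm{GF}(2)$.

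It then only remains to see that whenever this incidence structure is a $3$-design, its parameters are forced to be $\bigl(q,\frac q2,\frac{q(q-4)}8\bigr)$, which requires no condition on $u$. Put $E=\{x\in\mathrm{GF}(q):\mathrm{Tr}(x^{2^i+1})=1\}$, so $f_E(x)=\mathrm{Tr}(x^{2^i+1})$ and $\hat f_{B_s}(\mu)=\hat f_E(\mu^d)$. As $n$ is odd and $\gcd(i,n)=1$, $x^{2^i+1}$ is an AB function over $\mathrm{GF}(q)$, hence $f_E$ is semi-bent and by Parseval $|\mathrm{Supp}(\hat f_E)|=2^{n-1}$. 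If $b\cdot\mathrm{Supp}(\hat f_E)=\mathrm{Supp}(\hat f_E)$ for some $b\in\mathrm{GF}(q)^*$, then multiplication by $b$ permutes $\mathrm{Supp}(\hat f_E)$ in orbits of length $\ord(b)$, so $\ord(b)\mid 2^{n-1}$; but $\ord(b)$ also divides the odd number $2^n-1$, forcing $b=1$. Thus $\mathrm{Supp}(\hat f_E)\ne b\cdot\mathrm{Supp}(\hat f_E)$ for every $b\in\mathrm{GF}(q)\setminus\mathrm{GF}(2)$, so Theorem \ref{thm:intersection-blocks} makes $\mathrm{GA}_1(q)_{B_s}$ trivial and $|\mathrm{GA}_1(q)(B_s)|=q(q-1)$. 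A double count of incident (triple, block) pairs then gives $\lambda\binom q3=q(q-1)\binom{q/2}{3}$, i.e.\ $\lambda=\frac{q(q-4)}8$, and both implications follow.

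By design this is close to a corollary, so I do not foresee a genuine obstacle; the only inputs that really use the exponent $s$ are the Walsh evaluation (already provided by Proposition \ref{prop:val-spec}) and the semi-bentness of $\mathrm{Tr}(x^{2^i+1})$ used in Theorem \ref{thm:intersection-blocks}, the latter being where the argument could fail for other exponents, but it holds here for the same reason as for $s=3$ in the proof of Theorem \ref{thm:Kassami}. The one thing needing care is tracking the several $\gcd$ hypotheses of Theorems \ref{thm:3designs-equations} and \ref{thm:intersection-blocks}, which I settled in the first step.
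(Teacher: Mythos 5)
Your proposal is correct and follows the paper's route: Proposition \ref{prop:val-spec} to transfer the Walsh transform of $f_{B_s}$ to the Gold-type function, Theorem \ref{thm:intersection-blocks} to force a trivial stabilizer and hence $q(q-1)$ blocks, and Theorem \ref{thm:3designs-equations} with $t=2^i+1$ to get the solution-count criterion. The one place you genuinely diverge is the verification of the hypothesis $\mathrm{Supp}(\hat f_E)\neq b\cdot\mathrm{Supp}(\hat f_E)$ for $b\in\mathrm{GF}(q)\setminus\mathrm{GF}(2)$: the paper identifies $\mathrm{Supp}(\hat f_E)=\{\mu:\mathrm{Tr}(\mu)=1\}$ and invokes the Singer difference set property (so that $|b\cdot\mathrm{Supp}(\hat f_E)\cap\mathrm{Supp}(\hat f_E)|=q/4<q/2$), whereas you use only $|\mathrm{Supp}(\hat f_E)|=2^{n-1}$ and the parity of multiplicative orders: if $b\cdot S=S$ then $\mathrm{ord}(b)$ divides both $|S|=2^{n-1}$ and the odd number $2^n-1$, forcing $b=1$. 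Your argument is more elementary and does not require knowing the support explicitly, at the cost of one point you left implicit: for all orbits of $\langle b\rangle$ on $S$ to have length exactly $\mathrm{ord}(b)$ you need $0\notin S$, which does hold here since $x\mapsto x^{2^i+1}$ permutes $\mathrm{GF}(q)$, so $|E|=q/2$ and $\hat f_E(0)=q-2|E|=0$. You also spell out details the paper leaves tacit (the gcd conditions $\gcd(i,n)=1$, $\gcd((2^i+1)d,2^n-1)=1$ needed for Theorem \ref{thm:3designs-equations}, and the double count fixing $\lambda=\frac{q(q-4)}{8}$ once the block count is $q(q-1)$), which is a welcome tightening rather than a deviation.
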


\begin{proof}
By Proposition \ref{prop:val-spec}, one has
$$\hat{f}_{B_s}(\mu)= \hat{f}_{E}(\mu^{d})=\sum_{x\in \mathrm{GF}(q)} (-1)^{\mathrm{Tr}\left ( x^{2^i+1} +\mu^d x \right )},$$
where $E=\{\mu: \mathrm{Tr}(x^{2^i+1})=1\}$.
Note that $f_E(x)=\mathrm{Tr}\left (x^{2^i+1}\right)$ is a semi-bent function \cite{Sihem16}, and
$\mathrm{Supp}(\hat{f}_E)=\{\mu \in \mathrm{GF}(q): \mathrm{Tr}(\mu)=1\}$
, which is  a Singer difference set with parameters $(q-1, q/2, q/4)$ in the group $\mathrm{GF}(q)^*$.
Thus, for any $b\in \mathrm{GF}(q)\setminus \gf{(2)}$, $|b\cdot \mathrm{Supp}(\hat{f}_E) \cap \mathrm{Supp}(\hat{f}_E)|=q/4$.
By Theorem \ref{thm:intersection-blocks}, $|\mathrm{GA}_1(q)(B_s)|=q(q-1)$.
The desired conclusions follow from Theorem \ref{thm:3designs-equations}.
\end{proof}

In accordance with Proposition \ref{APns-equation}, we propose the following conjecture,
which is useful for obtaining $3$-designs.

\begin{conj}\label{conj:3x+1}
Let $n=3 i \pm 1$ and $s=2^{2i}-2^i+1$, where $i$ is an even positive integer.  Let $u\in \mathrm{GF}(q)\setminus \gf{(2)}$.
Then, the equation
\[\left (u^d x+(1+u)^d\right )^{2^i+1}+ x^{2^i+1}+1 =0\]
has a unique solution $x\in \mathrm{GF}(2^n)$, where  $d\equiv \frac{1}{s} \pmod{2^n-1}$.
\end{conj}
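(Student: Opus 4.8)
The plan is to reduce Conjecture \ref{conj:3x+1} to the same kind of computation that succeeds in Lemma \ref{lem:nonsymmtry}, where the exponent is $3$ instead of $2^i+1$. Concretely, set $y = x$ and consider the polynomial
\[
S(x) = \left(u^d x + (1+u)^d\right)^{2^i+1} + x^{2^i+1} + 1 \in \gf(q)[x].
\]
Since $2^i+1$ has binary weight $2$, the map $x \mapsto x^{2^i+1}$ is quadratic: $(\alpha x + \beta)^{2^i+1} = \alpha^{2^i+1} x^{2^i+1} + \alpha^{2^i}\beta x^{2^i} + \alpha\beta^{2^i} x + \beta^{2^i+1}$. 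Expanding $S$ this way makes it a \emph{linearized-plus-constant} equation in the two ``variables'' $x$ and $x^{2^i}$, so after dividing by the leading coefficient $1 + u^{d(2^i+1)} = 1+u$ (nonzero since $u \ne 1$) one obtains an equation of the form $c_1 x^{2^i} + c_0 x = c$ with coefficients $c_0, c_1, c$ explicit rational functions of $u$. The number of solutions of such an $\gf(2)$-affine equation is either $0$ or a power of $2$, and it equals $1$ precisely when the associated $\gf(2)$-linear map $x \mapsto c_1 x^{2^i} + c_0 x$ is a bijection and the inhomogeneous term lies in its image (automatic when it is a bijection).

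The key steps, in order, are: (i) expand $S(x)$ using the quadratic expansion above and divide through by $1+u$, recording $c_0 = c_0(u)$, $c_1 = c_1(u)$, $c = c(u) \in \gf(q)$; (ii) show the linear operator $L_u(x) = c_1 x^{2^i} + c_0 x$ is a bijection of $\gf(2^n)$ for every $u \in \gf(q)\setminus\gf(2)$; (iii) conclude the equation $L_u(x) = c$ has a unique solution, hence so does $S(x) = 0$. For step (ii), the standard criterion is that a binomial $c_1 X^{2^i} + c_0 X$ (viewed as an $\gf(2)$-linear endomorphism) is invertible iff its kernel is trivial, i.e. iff $x^{2^i - 1} = c_0/c_1$ has no solution in $\gf(2^n)$, equivalently $c_0/c_1 \notin \{ t^{2^i-1} : t \in \gf(q)^* \}$. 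When $n$ is odd $\gcd(2^i-1, 2^n-1)$ is odd but need not be $1$; when $n = 3i\pm1$ one has $\gcd(2^i-1, 2^n-1) = 2^{\gcd(i,n)} - 1 = 1$, so in fact $x \mapsto x^{2^i-1}$ is a bijection on $\gf(q)^*$ and the kernel criterion reads $c_0/c_1 \ne$ (that unique $(2^i-1)$-th root) — but this forces me instead to check that $c_0 = 0$ cannot be the only obstruction; more robustly, I would verify $N(c_0/c_1) \ne 1$ where $N$ is a suitable norm, or directly that $L_u$ has trivial kernel by a trace/resultant argument paralleling the manipulations \eqref{eq:sgm2-sgm1}--\eqref{eq:sgm3-sgm1sgm2} in Lemma \ref{lem:nonsymmtry}.

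The main obstacle I expect is step (ii): proving $L_u$ is bijective for \emph{all} $u \in \gf(q) \setminus \gf(2)$ simultaneously. In Lemma \ref{lem:nonsymmtry} the analogous fact was extracted cleanly because the cubic criterion of Lemma \ref{lem:solom} packaged everything into a single trace condition $\Tr(W^2 + W + 1) = 1$, which is identically true; here there is no off-the-shelf ``$2^i+1$ analogue'' of the Berlekamp--Rumsey--Solomon cubic formula, so I would need either (a) a Dickson-matrix / associated-polynomial computation showing $\gcd\!\big(c_1 X^{2^i} + c_0 X,\, X^{2^n} - X\big)$ has degree $1$, or (b) an explicit factorization of $c_0/c_1$ as a non-$(2^i-1)$-th power in $\gf(2^n)$. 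Given the constraint $n = 3i \pm 1$ with $i$ even, $\gcd(i,n) = 1$, so $x \mapsto x^{2^i-1}$ is a permutation of $\gf(q)^*$ and $L_u$ is bijective iff $c_0/c_1$ is not equal to the unique element $\xi$ with $\xi^{1/(2^i-1)} $ lying in the kernel line — i.e. iff $c_0 \ne \xi c_1$; ruling this single algebraic identity out for every admissible $u$ is exactly the hard core, and I would attack it by writing $c_0, c_1$ over a common denominator, clearing, and reducing modulo $x^{2^n}-x$ to a low-degree polynomial identity in $u$ that can be checked to have no roots in $\gf(q)\setminus\gf(2)$ — plausibly again via a $W^2+W+1$-type substitution. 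If that substitution does not materialize in closed form, the fallback is to prove it computationally for the relevant residue classes, which is consistent with the paper's own framing of the statement as a conjecture verified by Magma.
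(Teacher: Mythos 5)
Your step (i) is where the argument breaks down, and nothing after it can be repaired. Expanding with $(\alpha x+\beta)^{2^i+1}=\alpha^{2^i+1}x^{2^i+1}+\alpha^{2^i}\beta x^{2^i}+\alpha\beta^{2^i}x+\beta^{2^i+1}$ leaves the term $\bigl(1+u^{d(2^i+1)}\bigr)x^{2^i+1}$ in $S(x)$, and this coefficient does not cancel: $d(2^i+1)\equiv (2^i+1)s^{-1}\pmod{2^n-1}$ is neither $0$ nor $1$ (it would equal $1$ only if $2^{i-1}\equiv 1\pmod{2^n-1}$), so the identity $1+u^{d(2^i+1)}=1+u$ you use is false, and since $\gcd(2^i+1,2^n-1)=1$ here, the coefficient is nonzero for every $u\in\mathrm{GF}(q)\setminus\mathrm{GF}(2)$. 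Hence the equation is \emph{not} linearized-plus-constant in $x$ and $x^{2^i}$: the monomial $x^{2^i+1}=x\cdot x^{2^i}$ is quadratic over $\mathrm{GF}(2)$ and survives. Your claim that the number of solutions is $0$ or a power of $2$ is incompatible both with the conjecture itself (a unique solution) and with the Helleseth--Kholosha theorem quoted in the paper, which says $x^{2^i+1}+x+a$ has $0$, $1$ or $3$ zeros. Worse, even granting your reduction to $c_1x^{2^i}+c_0x=c$ with $c_0c_1\neq 0$, your own observation that $\gcd(2^i-1,2^n-1)=2^{\gcd(i,n)}-1=1$ (since $n=3i\pm1$) shows the kernel of $x\mapsto c_1x^{2^i}+c_0x$ always contains exactly one nonzero element, so this map is $2$-to-$1$ and the equation would have $0$ or $2$ solutions, never exactly one; step (ii) is therefore impossible in principle, and your framework would ``disprove'' a statement that Magma confirms for $n\in\{5,7,11,13\}$.

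For context, the paper does not prove this statement at all: it is posed as a conjecture, with computational confirmation and an explicit pointer to the intended route, which parallels Lemma \ref{lem:nonsymmtry}. That route keeps the $x^{2^i+1}$ term: divide by $1+u^{d(2^i+1)}$, remove the $x^{2^i}$ term by an affine substitution and rescale so the equation takes the form $P_{a(u)}(y)=y^{2^i+1}+y+a(u)=0$ with $\gcd(i,n)=1$, and then invoke the Helleseth--Kholosha criterion: a unique zero occurs exactly when $\mathrm{Tr}\bigl(R_{n,i}(a(u)^{-1})+1\bigr)=1$. The genuinely open work --- the analogue of the $W^2+W$ identity that closes the Kasami/cubic case via Lemma \ref{lem:solom} --- is to establish this trace identity for all $u\in\mathrm{GF}(q)\setminus\mathrm{GF}(2)$; your proposal never reaches that point, and its fallback to machine verification only reproduces the evidence the paper already reports.
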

Conjecture \ref{conj:3x+1} was confirmed by Magma for $n \in \{5, 7, 11, 13\}$. If  Conjecture \ref{conj:3x+1} is true, the base block $B_{s}\subseteq \mathrm{GF}(2^n)$ supports  a $3$-design, where $n=6 i \pm 1$ and $s=2^{4i}-2^{2i}+1$.

The equation $\left (u^d x+(1+u)^d\right )^{2^i+1}+ x^{2^i+1}+1 =0$ may be reduced to $P_a(x)=x^{2^i+1}+x+a=0$, which has been considered in \cite{HK08,HK10,KM19}.
When $i$ is coprime to $n$, denote $i'= \frac{1}{i} \pmod{n}$. Recall the following sequences of polynomials that were introduced by Dobbertin in \cite{Dob99}:
\begin{align*}
A_1(x)=&x,\\
A_2(x)=&x^{2^i+1},\\
A_{j+2}(x)=&x^{2^{i(j+1)}} A_{j+1}(x)+x^{2^{i(j+1)}-2^{ij}}A_{j}(x), \text{ for } j\ge 1,\\
B_1(x)=&0,\\
B_2(x)=&x^{2^i-1},\\
B_{j+2}(x)=&x^{2^{i(j+1)}} A_{j+1}(x)+x^{2^{i(j+1)}-2^{ij}}A_{j}(x), \text{ for } j\ge 1.
\end{align*}
Define the polynomial $R_{n,i}(x)$ as
\begin{align*}
R_{n,i}(x)=\sum_{j=1}^{i'} A_j(x) +B_{i'}(x).
\end{align*}
For $i'=1, 2, 3$,  the  polynomials $R_{n, \frac{1}{i'}}(x)$ is given by
\begin{align*}
R_{n,1}(x)=& x,\\
R_{n,\frac{1}{2}}(x)=& x^{2^{i}+1}+x^{2^{i}-1}+x,\\
R_{n,\frac{1}{3}}(x)=& x^{2^{2i}+2^i+1}+x^{2^{2i}+2^i-1}+x^{2^{2i}-2^i+1}+x^{2^{i}+1}+x,
\end{align*}
where $\frac{1}{i'}$ is  the smallest positive integer $i$ such that $i i' \equiv 1 \pmod{n}$.
To solve  Conjecture \ref{conj:3x+1}, one may need the following results \cite{HK08}.

\begin{theorem}
For any $a\in \mathrm{GF}(2^n)^*$ and a positive integer $i<n$ with $\mathrm{gcd}(n,i)=1$, the  polynomial $P_a(x)=x^{2^i+1}+x+a$ has either none, one, or three zeros in $\mathrm{GF}(2^n)$.
Further, $P_a(x)$ has exactly one zero in $\mathrm{GF}(2^n)$ if and only if  $\mathrm{Tr}\left (R_{n,i}(a^{-1})+1 \right )=1$.
\end{theorem}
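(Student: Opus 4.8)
The statement is the theorem of Helleseth and Kholosha; the plan is to establish the two assertions in turn.

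\textbf{The trichotomy.} Since $a\in \mathrm{GF}(2^n)^*$, $0$ is not a zero of $P_a$. Suppose $x_0\in \mathrm{GF}(2^n)$ is a zero. Every other zero has the form $x_0+z$ with $z\neq 0$, and substituting into $P_a$ and subtracting $P_a(x_0)=0$ shows that such $z$ satisfy
\[z^{2^i}+x_0z^{2^i-1}=(x_0+1)^{2^i},\qquad\text{i.e.}\qquad z^{2^i-1}(z+x_0)=(x_0+1)^{2^i}.\]
Here $z\neq x_0$, since $x_0=1$ would force $a=0$, so $(x_0+1)^{2^i}\neq 0$. The idea is to iterate this relation: raising it to the power $2^i$, eliminating $z^{2^i}$ with the relation itself and eliminating $x_0^{2^i+1}$ via the defining identity $x_0^{2^i+1}=x_0+a$, one obtains by induction on $j\ge 1$ an identity of the shape
\[z^{2^{ji}-1}=\frac{(x_0+1)^{2^i+2^{2i}+\cdots+2^{ji}}}{L_j(z)},\]
where $L_j$ is an affine polynomial in $z$ whose coefficients are polynomials in $a$ and $x_0$ (one checks $L_1(z)=z+x_0$, $L_2(z)=z+x_0+a$, and so on). Since $\mathrm{gcd}(i,n)=1$, we may choose $j=i'$ with $ii'\equiv 1\pmod n$; then $z^{2^{ji}-1}=z$ in $\mathrm{GF}(2^n)$, so the displayed identity becomes $zL_{i'}(z)=(x_0+1)^{E}$ for a fixed exponent $E$, i.e.\ a \emph{quadratic} equation in $z$ with nonzero constant term. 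A quadratic over $\mathrm{GF}(2^n)$ with nonzero constant term has $0$ or $2$ zeros (it has no repeated zero here), so $P_a$ has $1$ or $3$ zeros when it has any; hence the number of zeros is $0$, $1$, or $3$. The case $i'=1$, i.e.\ $i\equiv 1\pmod n$, reduces to the classical statement that a separable cubic $x^3+x+a$ over $\mathrm{GF}(2^n)$ has $0$, $1$, or $3$ roots because its roots sum to $0$; a few coefficient degeneracies must be checked directly to rule out the putative ``exactly $2$'' case.

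\textbf{The trace criterion.} We now organise the iteration above. After a normalising substitution relating $x$ to $a^{-1}$, the resolvent equation $zL_{i'}(z)=(x_0+1)^{E}$ can be rewritten, using $x_0^{2^i+1}=x_0+a$ to eliminate $x_0$, in the Artin--Schreier form $w^2+w=c(a)$ for a rational function $c(a)$ of $a$; such an equation has a solution in $\mathrm{GF}(2^n)$ if and only if $\mathrm{Tr}(c(a))=0$. The purpose of the Dobbertin recursions defining $A_j,B_j$ is exactly that the affine polynomials $L_j$ and the telescoping numerators are governed by them, so that the accumulated data after $i'$ steps assembles into $\sum_{j=1}^{i'}A_j+B_{i'}=R_{n,i}$, and one computes $c(a)=R_{n,i}(a^{-1})+1$. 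Moreover, when this trace equals $1$ the Dobbertin construction furnishes an explicit zero of $P_a$ as a rational expression in $a$, so in that case $P_a$ has exactly one zero — neither zero nor three. Combining the two directions, $P_a$ has exactly one zero in $\mathrm{GF}(2^n)$ if and only if $\mathrm{Tr}\!\left(R_{n,i}(a^{-1})+1\right)=1$.

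\textbf{The main obstacle.} The real content is the inductive bookkeeping: proving by induction on $j$ that the numerators and the affine denominators $L_j$ produced by the iteration are precisely the Dobbertin polynomials, that the loop closes after exactly $i'$ steps (this is where $\mathrm{gcd}(i,n)=1$ is used), and that the discriminant of the resolvent quadratic equals $R_{n,i}(a^{-1})$ up to an additive constant absorbed by the trace. One must also dispose of the degenerate sub-cases ($x_0=1$, $z=x_0$, or a vanishing leading coefficient of $L_{i'}$) to confirm that the resolvent is a genuine separable quadratic; this is where the linear coefficient of $P_a$ being exactly $1$, rather than an arbitrary field element, enters. These calculations are carried out in detail in \cite{HK08}.
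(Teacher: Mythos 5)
First, a point of comparison: the paper does not prove this statement at all---it is quoted from Helleseth and Kholosha \cite{HK08} as a known auxiliary result---so there is no in-paper argument to measure your proposal against. Judged on its own, your write-up is a roadmap of the Helleseth--Kholosha argument rather than a proof, and the places where it defers are exactly where the content lies. The inductive claim that iterating $z^{2^i-1}(z+x_0)=(x_0+1)^{2^i}$ (using $x_0^{2^i+1}=x_0+a$ to eliminate $x_0^{2^i+1}$) produces $z^{2^{ji}-1}$ as a fixed power of $(x_0+1)$ divided by an affine polynomial $L_j(z)$, that the loop closes at $j=i'$ into a separable quadratic resolvent, and that its Artin--Schreier constant equals $R_{n,i}(a^{-1})+1$ independently of which zero $x_0$ one starts from, is asserted with ``one checks'' and ``one computes'' and then explicitly delegated to \cite{HK08}. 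Since the theorem \emph{is} the statement that this bookkeeping assembles into Dobbertin's $A_j$, $B_j$ and $R_{n,i}$, nothing of substance has been verified. A smaller slip along the way: a quadratic over $\mathrm{GF}(2^n)$ with nonzero constant term can have exactly one zero (when its linear coefficient vanishes, e.g. $z^2+c$), so the ``$0$ or $2$ zeros'' step needs the separability you postpone; and your derivation is one-directional, so you would also need to argue that every solution $z$ of the resolvent actually yields a zero of $P_a$.

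Second, and independently of the bookkeeping: the resolvent analysis presupposes that a zero $x_0$ exists, so at best it gives ``if $P_a$ has a zero, then it has exactly one if and only if $\mathrm{Tr}\left(R_{n,i}(a^{-1})+1\right)=1$''. The ``if'' direction of the theorem additionally requires that $\mathrm{Tr}\left(R_{n,i}(a^{-1})+1\right)=1$ forces $P_a$ to have a zero at all (ruling out the ``none'' case); you cover this with the sentence that ``the Dobbertin construction furnishes an explicit zero'', but no such construction is exhibited or verified, so this direction is simply missing. Within the paper, citing \cite{HK08} (as the author does) is the intended treatment; but as a standalone proof, relying on \cite{HK08} for precisely the steps to be proven makes the argument circular, and the existence step together with the identification of the trace constant are genuine gaps.
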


Next, we present some conjectures on $3$-designs. These conjectures have been confirmed by Magma for $n \in \{5, 7\}$.
\begin{conj}\label{conj:K}
Let $n \geq 5$ be odd. Let $s=2^{2i}-2^i+1$ with $\mathrm{gcd}(3i,n)=1$. Then
the incidence structure $\mathbb{AP}_{n,s}=(\mathrm{GF}(q), \mathrm{GA}_1(q)(B_s))$
is a $3$-$\left(q, \frac{q}{2}, \frac{q(q-4)}{8} \right )$ design.
\end{conj}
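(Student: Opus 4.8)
The plan is to carry out, for the block $B_s$ of the second construction, the four steps that already establish Theorem~\ref{thm:Kassami} for $KA_{n,i}$ and underlie Proposition~\ref{APns-equation}: (i) put $\hat f_{B_s}$ into the form $\hat f_{B_s}(\mu)=\hat f_E(\mu^{d})$ with $d\equiv 1/s\pmod{q-1}$ and $E=\{x\in\gf(q):\Tr(x^{2^i+1})=1\}$; (ii) feed this into Theorem~\ref{thm:3designs-equations} to replace the $3$-design property by a statement about the number of roots of a monomial equation; (iii) show this number does not depend on $u$, hence equals $1$; (iv) trivialise the stabiliser via Theorem~\ref{thm:intersection-blocks}, after which the parameters follow by a block count. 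Two preliminary remarks. Up to isomorphism one may assume $1\le i\le\tfrac{n-1}2$, since replacing $s$ by another exponent of its cyclotomic coset modulo $q-1$ only conjugates the $\mathrm{GA}_1(q)$-orbit of $B_s$ by a Frobenius automorphism. Also, as stated the conjecture is false whenever $s=2^{2i}-2^i+1$ is equivalent modulo $q-1$ to a Gold exponent $2^j+1$ --- most plainly for $i=1$, where $s=2^1+1$, $(x+1)^s+x^s=x^2+x+1$, and $B_s$ is an affine hyperplane, so $\mathbb{AP}_{n,s}$ is the trivial hyperplane design, a $3$-$(q,\tfrac q2,\tfrac{q-4}4)$ design; so it should be read for genuine (non-Gold) Kasami exponents.

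For step (i) I would reuse the opening of the proof of Proposition~\ref{prop:val-spec}. Because $x^s$ is APN the map $(x+1)^s+x^s$ is $2$-to-$1$ onto $B_s$, and splitting the character sum through one auxiliary variable gives, for $\mu\in\gf(q)^*$,
\[
\hat f_{B_s}(\mu)=-\frac1q\sum_{\gamma\in\gf(q)}(-1)^{\Tr(\mu^{1/s}\gamma)}\bigl(\hat g(\gamma)\bigr)^2,\qquad g(x)=\Tr(x^s).
\]
For odd $n$ with $\gcd(i,n)=1$ the Kasami function $x^s$ is AB, so $g$ is semi-bent: $\hat g(\gamma)^2\in\{0,2^{n+1}\}$ and $|\mathrm{Supp}(\hat g)|=2^{n-1}$; hence $\hat f_{B_s}(\mu)=-2\sum_{\gamma\in\mathrm{Supp}(\hat g)}(-1)^{\Tr(\mu^{1/s}\gamma)}$, and the step reduces to identifying $\mathrm{Supp}(\hat g)$. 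For the Kasami--Welch exponents $n=3i\pm1$ this is precisely Dillon's identity $\mathrm{Supp}(\hat g)=\{x:\Tr(x^{2^i+1})=1\}$ used inside Proposition~\ref{prop:val-spec}; under the hypothesis $\gcd(3i,n)=1$ I would try to prove the same description for all odd $n$, adapting the linearised-polynomial and difference-set techniques of Dillon--Dobbertin \cite{DD04}, who already treat the companion block $KA_{n,i}$ for general odd $n$ (Lemma~\ref{lem:Dil-Dob}). Granting $\mathrm{Supp}(\hat g)=E:=\{x:\Tr(x^{2^i+1})=1\}$, one obtains $\hat f_{B_s}(\mu)=\sum_x(-1)^{\Tr(x^{2^i+1}+\mu^dx)}$; since $\gcd(2^i+1,q-1)=\gcd(s,q-1)=1$ here, we have $\gcd((2^i+1)d,q-1)=1$, so Theorem~\ref{thm:3designs-equations} applies.

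By step (i) and Theorem~\ref{thm:3designs-equations}, $\mathbb{AP}_{n,s}$ is a $3$-design iff $\bigl|\{x\in\gf(q):(u^dx+(1+u)^d)^{2^i+1}+x^{2^i+1}+1=0\}\bigr|$ is independent of $u\in\gf(q)\setminus\gf(2)$ --- the content of Conjecture~\ref{conj:3x+1}, now required past the range $n=3i\pm1$. To settle it, expand the left side to
\[
(1+u^{(2^i+1)d})x^{2^i+1}+u^{2^id}(1+u)^dx^{2^i}+u^d(1+u)^{2^id}x+\bigl((1+u)^{(2^i+1)d}+1\bigr):
\]
the leading coefficient is nonzero (as $u\neq 1$ and $\gcd((2^i+1)d,q-1)=1$), and dividing out, then removing the $x^{2^i}$ term by $x\mapsto x+c_1$, then rescaling $x$ reduces it to $P_a(x)=x^{2^i+1}+x+a=0$ for an explicit $a=a(u)\in\gf(q)^*$ --- the reduction noted in the paper, with the degenerate sub-cases to be excluded for $u\notin\gf(2)$. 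By the quoted result of Helleseth--Kholosha \cite{HK08}, $P_a$ has exactly one zero iff $\Tr\bigl(R_{n,i}(a^{-1})+1\bigr)=1$; so one computes $a(u)^{-1}$, substitutes it into $R_{n,i}$, and verifies --- in the spirit of Lemma~\ref{lem:nonsymmtry}, where the analogous cubic quantity collapsed to $W^2+W+1$ --- that $\Tr\bigl(R_{n,i}(a(u)^{-1})+1\bigr)=1$ holds identically in $u$, so the root count is the constant $1$. (If the correct $E$ in step (i) turns out instead to be $\{x:\Tr(x^3)=1\}$ with $d\equiv(2^i+1)/3$, then Lemma~\ref{lem:solom} and the verbatim computation of Lemma~\ref{lem:nonsymmtry} complete step (iii).)

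For step (iv), $f_E(x)=\Tr(x^{2^i+1})$ is semi-bent \cite{Sihem16} and $\mathrm{Supp}(\hat f_E)=\{\nu:\Tr(\nu)=1\}$, a $(q-1,\tfrac q2,\tfrac q4)$ difference set in $\gf(q)^*$, so $|b\cdot\mathrm{Supp}(\hat f_E)\cap\mathrm{Supp}(\hat f_E)|=\tfrac q4\neq|\mathrm{Supp}(\hat f_E)|$ for every $b\in\gf(q)\setminus\gf(2)$; hence Theorem~\ref{thm:intersection-blocks} gives $\mathrm{GA}_1(q)_{B_s}=\{1\}$ and $|\mathrm{GA}_1(q)(B_s)|=q(q-1)$. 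Together with $|B_s|=\tfrac q2$ and the constant root count this forces $\lambda_3=q(q-1)\binom{q/2}{3}\big/\binom q3=\tfrac{q(q-4)}8$, as claimed. The genuine obstacle is step (i): evaluating $\mathrm{Supp}\bigl(\widehat{\Tr(x^{2^{2i}-2^i+1})}\bigr)$ for all odd $n$ with $\gcd(3i,n)=1$, not just for the Kasami--Welch exponents where it is a theorem of Dillon --- this is exactly the point at which the paper resorts to Magma. A secondary gap is the clean reduction to $P_a$ and the resulting trace identity in step (iii), which is the general-$n$ analogue of Lemma~\ref{lem:nonsymmtry} and has not been carried out.
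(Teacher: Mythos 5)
This statement is one of the paper's open conjectures: the author gives no proof of Conjecture~\ref{conj:K}, only Magma verification for $n\in\{5,7\}$, and even the adjacent Proposition~\ref{APns-equation} is restricted to the Kasami--Welch range $n=3i\pm 1$ and there only yields an equivalence, conditional on Conjecture~\ref{conj:3x+1}. Your proposal does not close this gap; it is a roadmap that reduces the conjecture to two statements you explicitly leave unproved, and those are precisely the points where the paper itself stops. Concretely: (a) step (i) needs $\mathrm{Supp}(\hat g)=\{x:\Tr(x^{2^i+1})=1\}$ for $g(x)=\Tr(x^{2^{2i}-2^i+1})$ for \emph{all} odd $n$ with $\gcd(3i,n)=1$, whereas the only identification available (Dillon \cite{Dil99}, as used in Proposition~\ref{prop:val-spec}) covers $n=3i\pm1$; semi-bentness of $g$ alone gives the size of the support, not its description, and without that description Theorem~\ref{thm:3designs-equations} does not produce the stated root-counting criterion. (b) Step (iii), the reduction to $P_a(x)=x^{2^i+1}+x+a$ together with the identity $\Tr\bigl(R_{n,i}(a(u)^{-1})+1\bigr)=1$ for all $u\notin\gf(2)$, is the general-$n$ analogue of Lemma~\ref{lem:nonsymmtry} and is not carried out; note that $R_{n,i}$ has $O(n)$ terms of growing degree, so there is no reason to expect the computation to collapse as cleanly as the cubic case did to $W^2+W+1$. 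So the proposal is a correct organization of the problem (and steps (ii) and (iv) would indeed follow once (a) and (b) are in place), but it contains no proof of the conjecture.

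Two of your side remarks are worth keeping. The observation that the statement as written fails for $i=1$ is correct: there $s=3$ is a Gold exponent, $B_s=\{x:\Tr(x)=1\}$ is an affine hyperplane, the orbit has only $2(q-1)$ blocks, and one gets a $3$-$(q,\tfrac q2,\tfrac{q-4}4)$ design as in Theorem~11 of the paper, not $\lambda=\tfrac{q(q-4)}8$; so the conjecture must be read with $s$ not equivalent to a Gold exponent (equivalently $i\ge 2$ in the normalized range). The reduction to $1\le i\le\tfrac{n-1}2$ via Frobenius conjugation of the orbit is also fine. But neither remark substitutes for the missing ingredients (a) and (b), which is where the actual mathematical content of the conjecture lies.
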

\begin{conj}\label{conj:2pow+3}
Let $n \geq 5$ be odd and $s=2^{\frac{n-1}{2}}+3$. Then
the incidence structure $\mathbb{AP}_{n,s}=(\mathrm{GF}(q), \mathrm{GA}_1(q)(B_s))$
is a $3$-$\left(q, \frac{q}{2}, \frac{q(q-4)}{8} \right )$ design.
\end{conj}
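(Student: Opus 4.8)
The plan is to imitate the proof of Theorem~\ref{thm:Kassami} together with Propositions~\ref{prop:val-spec}--\ref{APns-equation}, now for the Welch exponent $s=2^m+3$ with $m=\frac{n-1}{2}$. The preliminary facts to record are: for odd $n$ the Welch power map is a permutation of $\gf(q)$, so $\gcd(s,q-1)=1$ and $d:=s^{-1}\bmod(q-1)$ exists with $\gcd(d,q-1)=1$; and (as noted just after \eqref{eq:Bs}) the APN property makes $(x+1)^s+x^s$ two-to-one, whence $|B_s|=q/2$. The first real step is to compute $\hat f_{B_s}$ exactly as in the proof of Proposition~\ref{prop:val-spec}: with $g(x)=\Tr(x^s)$ one obtains
\[
\hat f_{B_s}(\mu)=-\frac1q\sum_{\beta\in\gf(q)}(-1)^{\Tr(\mu^{d}\beta)}\,\hat g(\beta)^2 ,
\]
and since the Welch function is AB one has $\hat g(\beta)^2\in\{0,2^{n+1}\}$ with $|\mathrm{Supp}(\hat g)|=2^{n-1}$, so this collapses to
\[
\hat f_{B_s}(\mu)=-2\sum_{\beta\in\mathrm{Supp}(\hat g)}(-1)^{\Tr(\mu^{d}\beta)}=\hat f_E(\mu^{d}),\qquad E:=\mathrm{Supp}(\hat g)
\]
(the value at $\mu=0$ checks out since $\hat f_{B_s}(0)=q-2|B_s|=0=\hat f_E(0)$).

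With $E$ and $d$ now in hand, Theorem~\ref{thm:3designs-pracchar}(4) reduces Conjecture~\ref{conj:2pow+3} to the assertion that $N_E(u^d,(1+u)^d,1)$ is independent of $u\in\gf(q)\setminus\gf(2)$. To make this computable, I would try to identify the Walsh support $E=\mathrm{Supp}(\hat g)$ of the Welch function as a set of the form $\{x\in\gf(q):\Tr(x^t)=1\}$ for a suitable ``dual exponent'' $t$ with $\gcd(t,q-1)=1$ — the role played by $t=3$ in the Kasami case via Dillon's theorem. Granting such a description, $f_E(x)=\Tr(x^t)$ and Theorem~\ref{thm:3designs-equations} reduces the conjecture to showing that
\[
\bigl|\{x\in\gf(q):\ (u^{d}x+(1+u)^{d})^{t}+x^{t}+1=0\}\bigr|
\]
is independent of $u\in\gf(q)\setminus\gf(2)$, the exact analogue of Lemma~\ref{lem:nonsymmtry} and Conjecture~\ref{conj:3x+1}. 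For the parameters I would then check that $f_E$ is semi-bent and that $\mathrm{Supp}(\hat f_E)$ is a Singer difference set (equivalently, not fixed by multiplication by any $b\in\gf(q)\setminus\gf(2)$); Theorem~\ref{thm:intersection-blocks} then yields $|\GA_1(q)(B_s)|=q(q-1)$, and together with $|B_s|=q/2$ a standard double count of point-triples in blocks gives $\lambda=\frac{q(q-4)}{8}$.

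The hard part — and the reason this is only a conjecture — is precisely the input that came for free in the Kasami case. There, $\mathrm{Supp}(\hat g)$ was described by Dillon's theorem (the dual of the Gold power function is the Kasami power function), and the resulting equation was a \emph{cubic}, so Lemma~\ref{lem:solom} (the Berlekamp--Rumsey--Solomon criterion) settled the uniqueness of its solution cleanly. For the Welch exponent, although the Walsh spectrum of $\Tr(x^{2^m+3})$ is three-valued (the Welch cross-correlation conjecture, proved by Canteaut--Charpin--Dobbertin and by Hollmann--Xiang), its support does not appear to admit a simple description of the shape $\{x:\Tr(x^t)=1\}$; and even if such a $t$ existed, the associated equation $(u^{d}x+(1+u)^{d})^{t}+x^{t}+1=0$ would have degree of order $\sqrt q$, far beyond the cubic regime. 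One would therefore need either a new identification of the Welch dual, or a direct argument — an exact point count on the curve cut out by the equation, or a Walsh-sum manipulation establishing Theorem~\ref{thm:3designs-pracchar}(2) for this $E$ — proving that $N_E(u^d,(1+u)^d,1)$ is constant. Absent that, the statement has only been confirmed by Magma for $n\in\{5,7\}$, consistent with its being listed as a conjecture.
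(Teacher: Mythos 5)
The statement you were asked to prove is one of the paper's \emph{conjectures}: the paper offers no proof of Conjecture~\ref{conj:2pow+3}, only Magma verification for $n\in\{5,7\}$, so there is no internal argument to compare against. Your proposal is, correctly, not a proof either, and you say so; what you give is the reduction the paper itself would use. That reduction is faithfully carried out: the Walsh computation for $\hat f_{B_s}$ follows the proof of Proposition~\ref{prop:val-spec} verbatim (the AB property of the Welch function replacing Dillon's result for Gold/Kasami, with the value at $\mu=0$ checked), the identity $\hat f_{B_s}(\mu)=\hat f_E(\mu^d)$ with $E=\mathrm{Supp}(\hat g)$ and $d\equiv s^{-1}$ is right, Theorem~\ref{thm:3designs-pracchar}(4) then reduces the conjecture to the constancy of $N_E\left(u^d,(1+u)^d,1\right)$, and the parameter count $\lambda=\frac{q(q-4)}{8}$ from $|B_s|=q/2$ and orbit length $q(q-1)$ is correct. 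This mirrors exactly how the paper handles the Kasami case of the second construction (Proposition~\ref{APns-equation} and Conjecture~\ref{conj:3x+1}).

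The genuine gap is the one you name: in the Kasami/Gold situation the support $E$ has the explicit form $\{x:\Tr(x^t)=1\}$ (so Theorem~\ref{thm:3designs-equations} applies) and the resulting equation is a cubic settled by Lemma~\ref{lem:solom}, whereas for the Welch exponent no such description of $\mathrm{Supp}(\hat g)$ is available and no analogue of Lemma~\ref{lem:nonsymmtry} is proved; your proposal leaves precisely this step open, which is exactly why the paper states the result only as a conjecture. One additional caveat worth flagging: to invoke Theorem~\ref{thm:intersection-blocks} you need $f_E$ (the indicator of the Welch Walsh support) to be semi-bent and $\mathrm{Supp}(\hat f_E)$ not to be multiplicatively self-similar; in the Kasami case this came for free from $f_E(x)=\Tr(x^3)$ and the Singer-difference-set property, but for the Welch support these facts would themselves have to be established (e.g.\ via the Dillon--Dobbertin difference-set results), and you list them only as ``checks''. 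So the proposal is a sound and honest reduction plan consistent with the paper's framework, but, like the paper, it does not prove the statement.
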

\begin{conj}\label{conj:2p+2p-1}
Let $n \geq 5$ be odd. Then the incidence structure $\mathbb{AP}_{n,s}=(\mathrm{GF}(q), \mathrm{GA}_1(q)(B_s))$ is a $3$-$\left(q, \frac{q}{2}, \frac{q(q-4)}{8} \right )$   design, where
\begin{itemize}
\item  $s=2^{\frac{n-1}{2}}+2^{\frac{n-1}{4}}-1$ with $n\equiv 1 \pmod{4}$;
 \item $s=2^{\frac{n-1}{2}}+2^{\frac{3n-1}{4}}-1$ with $n\equiv 3 \pmod{4}$ .
\end{itemize}
\end{conj}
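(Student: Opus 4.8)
\noindent\textbf{Discussion of a possible proof of Conjecture \ref{conj:2p+2p-1}.}

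The plan is to run, for the Niho exponent $s$, the same three steps that settled the Kasami case in Theorem \ref{thm:Kassami} and Proposition \ref{APns-equation}: (a) pin down the Walsh transform of $f_{B_s}$; (b) use Theorem \ref{thm:intersection-blocks} to show the stabiliser $\mathrm{GA}_1(q)_{B_s}$ is trivial, so that $|\mathrm{GA}_1(q)(B_s)|=q(q-1)$, which together with $|B_s|=q/2$ forces $\lambda=q(q-4)/8$ once the design property is known; and (c) reduce the $3$-design property to a root-counting identity through Theorem \ref{thm:3designs-equations}. Since $n$ is odd, the Niho exponent $s$ is an AB exponent (as recalled before the two constructions), so $g(x):=\mathrm{Tr}(x^{s})$ is semi-bent, i.e.\ $\hat{g}(\beta)\in\{0,\pm 2^{(n+1)/2}\}$ for every $\beta$; this is what makes step (a) run.

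For step (a), since $x^{s}$ is APN the map $x\mapsto (x+1)^{s}+x^{s}$ is $2$-to-$1$, and the Fourier computation in the proof of Proposition \ref{prop:val-spec} carries over word for word up to the point where the support is named, giving
\[
\hat{f}_{B_s}(\mu)\;=\;-2\sum_{\beta\in\mathrm{Supp}(\hat{g})}(-1)^{\mathrm{Tr}(\mu^{d}\beta)}\;=\;\hat{f}_{E}(\mu^{d}),
\]
where $d\equiv\tfrac{1}{s}\pmod{2^{n}-1}$ (so $\gcd(d,q-1)=1$) and $E:=\mathrm{Supp}(\hat{g})$ has size $q/2$. The new ingredient, replacing Dillon's identity \cite{Dil99} used in the Kasami case, is an \emph{explicit} description of $E$: one wants to prove that $E=\{x\in\mathrm{GF}(q):\mathrm{Tr}(x^{t})=1\}$ for some exponent $t$ with $\gcd(td,q-1)=1$, exploiting the Niho property $s\equiv 2^{j}\pmod{2^{(n-1)/2}-1}$ together with the analysis of the Niho case going back to Dobbertin. (For $n=5$, where $s=5=2^{2}+1$ is a Gold exponent, one simply takes $t=5$ and $E=\{x:\mathrm{Tr}(x)=1\}$, after checking $\gcd(t,q-1)=1$.) Granting such a $t$, $f_{E}(x)=\mathrm{Tr}(x^{t})$ is semi-bent and $\mathrm{Supp}(\hat{f}_{E})$ is the Singer difference set $\{x:\mathrm{Tr}(x)=1\}$, so $|b\cdot\mathrm{Supp}(\hat{f}_{E})\cap\mathrm{Supp}(\hat{f}_{E})|=q/4\neq q/2$ for every $b\in\mathrm{GF}(q)\setminus\mathrm{GF}(2)$, and Theorem \ref{thm:intersection-blocks} delivers step (b).

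For step (c), Theorem \ref{thm:3designs-equations} then shows that $\mathbb{AP}_{n,s}$ is a $3$-$(q,q/2,q(q-4)/8)$ design if and only if
\[
\bigl|\{x\in\mathrm{GF}(q):(u^{d}x+(1+u)^{d})^{t}+x^{t}+1=0\}\bigr|
\]
is independent of $u\in\mathrm{GF}(q)\setminus\mathrm{GF}(2)$. For the Kasami exponent this equation was cubic and was handled by the Berlekamp--Rumsey--Solomon criterion (Lemmas \ref{lem:solom} and \ref{lem:nonsymmtry}). Here $t$ is larger, so after normalisation one should aim to bring the equation to the shape $P_{a}(y)=y^{2^{i}+1}+y+a$ studied in \cite{HK08,HK10,KM19} (cf.\ the discussion after Conjecture \ref{conj:3x+1}) for an explicit rational function $a=a(u)$ of $u$; the Helleseth--Kholosha root count then makes the number of roots of $P_{a(u)}$ a function of $\mathrm{Tr}(R_{n,i}(a(u)^{-1})+1)$, and the whole content of the conjecture becomes the assertion that this trace is constant in $u$.

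The hard part will be step (c): even with the explicit form of $E$ at hand, showing that a trace of a fixed rational function of $u$ is constant appears to require either an algebraic identity of the type used in Lemma \ref{lem:nonsymmtry}---rewriting the relevant discriminant-like expression as $W^{2}+W+(\text{const})$---or a genuinely new manipulation of the Dobbertin polynomials $R_{n,i}$. A secondary but also non-routine difficulty is to establish the description of $\mathrm{Supp}(\hat{g})$ for the Niho exponents in the first place; it is presumably this combined burden that currently restricts the verification to $n\in\{5,7\}$.
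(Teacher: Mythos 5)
This statement is an open conjecture in the paper: the author gives no proof of Conjecture \ref{conj:2p+2p-1}, only Magma verification for $n\in\{5,7\}$, so there is no argument of the paper to compare yours against. Your submission is, by your own admission, a plan rather than a proof, and the two places where it stops are exactly the places where a proof would have to happen. First, your step (a)/(b) requires not merely that $g(x)=\mathrm{Tr}(x^{s})$ be semi-bent (true, since Niho exponents are AB for odd $n$), but an explicit identification $\mathrm{Supp}(\hat g)=\{x:\mathrm{Tr}(x^{t})=1\}$ with $\gcd(td,q-1)=1$; this is needed both to put $\hat f_{B_s}$ in the Gauss-sum form demanded by Theorem \ref{thm:3designs-equations} and to verify the hypotheses of Theorem \ref{thm:intersection-blocks} (semi-bentness of $f_E$ and the Singer-difference-set property of $\mathrm{Supp}(\hat f_E)$). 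In the Kasami case this is precisely Dillon--Dobbertin's Lemma A1 (Lemma \ref{lem:Dil-Dob}) resp.\ Dillon's result quoted in Proposition \ref{prop:val-spec}; no analogue is known for the Welch/Niho exponents, and you give no argument for it. Second, even granting such a $t$, the constancy in $u$ of $|\{x:(u^{d}x+(1+u)^{d})^{t}+x^{t}+1=0\}|$ is the entire content of the conjecture, and you offer only the hope that it reduces to $P_a(y)=y^{2^i+1}+y+a$ and a constant trace of $R_{n,i}(a(u)^{-1})$. So the proposal is a (reasonable) reduction outline in the spirit of Theorem \ref{thm:Kassami} and Proposition \ref{APns-equation}, not a proof.

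One concrete step in your sketch is actually false as written: the $n=5$ parenthetical. For $n=5$ the Niho exponent degenerates to the Gold exponent $s=5$, and then $(x+1)^5+x^5=x^4+x+1$, so $B_s=\{z\in\mathrm{GF}(32):\mathrm{Tr}(z)=1\}$, a coset of a hyperplane. Its stabilizer in $\mathrm{GA}_1(q)$ has order $2^{n-1}$, so $|\mathrm{GA}_1(q)(B_s)|=2(q-1)$, not $q(q-1)$; correspondingly the relevant set $E$ is (a coset of) a hyperplane, its indicator is not semi-bent, and Theorem \ref{thm:intersection-blocks} is not applicable. This is consistent with the paper's own Theorem 9 (the $s=2^i+1$ case gives $\lambda=(q-4)/4$) and shows that your step (b) cannot be claimed uniformly for all odd $n\ge 5$: the Niho case only becomes a genuinely new (and nondegenerate) instance from $n=7$ on, and any eventual proof must either exclude or treat separately the values of $n$ where the exponent collapses to a Gold exponent.
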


\subsection{A comparison with other constructions}

Let $\mathbb{KA}_{n,i}$ and $\mathbb{AP}_{n,s}$ be the incidence structures constructed in Subsections \ref{subsec:des-Kassami} and \ref{subsec:des-apn},
respectively. The incidence structures  $\mathbb{KA}_{n,i}$ and $\mathbb{AP}_{n,s}$ are both constructed from APN functions, which are a special type of polynomials over a finite fields. By a special polynomial we mean a polynomial either of special form or with special property. For instance, monomials and permutation polynomials are special polynomials. Special polynomials have interesting applications in combinatorial designs.  The Dickson polynomials $x^5+ax^3+a^2x$ over $\gf(3^m)$ led to
a $70$-year breakthrough in searching for new skew Hadamard difference sets \cite{DY06}.
Recently, Ding and Tang  \cite{DT192} presented two constructions of $t$-designs with oval polynomials (o-polynomials) over $\gf(q)$ and obtained some families of $2$-designs and $3$-designs.

A polynomial $F$ from $\mathrm{GF}(2^n)$  to itself is called an \emph{o-polynomial} if
$F(x)+ax$ is $2$-to-$1$ for every $a\in \gf{(2^n)}^*$ \cite{CM11}.
If an o-polynomial $F(x)$ is monomial, $F(x)$ is also called an o-monomial.
Known o-monomials over $\mathrm{GF}(2^n)$ are listed:
\begin{itemize}
\item $\mathrm{Trans}_{n,i}(x)=x^{2^i}$, $\mathrm{gcd}(i,n)=1$.
\item $\Segre_{n}(x)=x^{6}$, $n$ odd.
\item $\Glynnone_n(x)=x^{3\times 2^{(n+1)/2}+4}$, $n$ odd.
\item $
\Glynntwo(x)= \left\{
\begin{array}{r}
x^{2^{(n+1)/2}+2^{(3n+1)/4}}   \mbox{ if } n \equiv 1 \pmod{4}, \\
x^{2^{(n+1)/2}+2^{(n+1)/4}}    \mbox{ if } n \equiv 3 \pmod{4}  .
\end{array}
\right.
$
\end{itemize}
For any permutation polynomial $F(x)$ over $\gf(q)$, we define $\overline{F}(x)=xF(x^{q-2})$.
If $F$ is an o-polynomial over $\gf(q)$, then $\overline{F}$ is also an o-polynomial.

Let $x^s$ be an o-monomial over $\gf{(2^n)}$. Let $\mathbb{OV}_{n,s}$ be the incidence structure
$( \gf{(2^n)}$, $\mathrm{GA}_1(q)(OV_{n,s}))$, where
\[OV_{n,s}=\{x^s+x: x \in \gf{(2^n)}\}. \]
It follows from Lemma 3 of \cite{DT192} and its proof that $\mathbb{OV}_{n,s}$ is the same as the incidence
structure $\mathbb{D}(x^s,2^{n-1})=\left (\mathrm{GF}(2^n), \mathcal{B}_{x^s,2^{n-1}} \right )$ introduced in \cite{DT192}.
In \cite{DT192}, the authors proved that $\mathbb{OV}_{n,s}$ is a $3$-$\left (2^n, 2^{n-1}, 2^{n-3}(2^n-4) \right )$ design
if $x^s$ is the o-monomial $\Segre_{n}(x)$ or $\Glynntwo(x)$, and conjectured that the same conclusion is still true
 if $x^s$ is any o-monomial with $x^s\neq \mathrm{Trans}_{n,i}(x)$.

 In general, it is extremely difficult to solve the isomorphy problem of $t$-designs theoretically.
 We have done an isomorphic  classification for the following set of $3$-designs in \cite{DT192} and this paper  for the case $n=5$ with Magma:
 \[\left \{\mathbb{KA}_{5,1}, \mathbb{KA}_{5,2}, \mathbb{AP}_{5,5}, \mathbb{AP}_{5,7} ,\mathbb{AP}_{5,13}, \mathbb{OV}_{5,6},
 \mathbb{OV}_{5,26},\mathbb{OV}_{5,28},\mathbb{OV}_{5,4},\mathbb{OV}_{5,24},\mathbb{OV}_{5,8} \right \}. \]

 Designs $\mathbb{AP}_{5,5}$, $\mathbb{AP}_{5,7}$ and $\mathbb{AP}_{5,13}$ are the
 $3$-designs from Niho APN function, Welch APN function and Kassami APN function, respectively.
 Designs $\mathbb{OV}_{5,6}$,
 $\mathbb{OV}_{5,26}$,$\mathbb{OV}_{5,28}$,
 $\mathbb{OV}_{5,4}$,
 $\mathbb{OV}_{5,24}$ and $\mathbb{OV}_{5,8}$
 are $3$-designs from o-monomials $\Segre_{5}(x)$, $\overline{\Segre}_{5}(x)$,  $\Glynnone_5(x)$, $\overline {\Glynnone}_5(x)$, $\Glynntwo_5(x)$
 and $\overline {\Glynntwo}_5(x)$, respectively.
 These $3$-designs are divided into seven distinct equivalence classes:
 $\{\mathbb{KA}_{5,1}\}$,
 $\{\mathbb{KA}_{5,2}\}$,
 $\{\mathbb{AP}_{5,7}\}$,
$\{\mathbb{OV}_{5,24}\}$,
  $\{\mathbb{OV}_{5,28}\}$,
 $\{\mathbb{AP}_{5,5}$, $\mathbb{OV}_{5,4}$, $\mathbb{OV}_{5,8}\}$,
 $\{\mathbb{AP}_{5,13}$, $\mathbb{OV}_{5,6}$, $\mathbb{OV}_{5,26}\}$.
Based on the above discussion, we propose the following conjecture, which have been confirmed by Magma for $n \in \{5, 7\}$.
\begin{conj}\label{conj:phi/2}
Let $n\ge 5$ be odd and $i\in \left \{i: 1\le i \le \frac{n-1}{2}, \mathrm{gcd}(i,n)=1\right \}$. Let $\phi(n)$
denote the Euler's totient function.
Then the $\frac{\phi(n)}{2}$ $3$-designs $\mathbb{KA}_{n,i}$ are pairwise non-isomorphic. Further, they
are  not equivalent to  any designs presented in Subsection \ref{subsec:des-apn} from APN power functions,
and are slso not equivalent to any designs introduced in \cite{DT192} from o-monomials.
\end{conj}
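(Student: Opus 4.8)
The plan is to pin down the full automorphism group of each design in sight and then reduce an isomorphism to an equivalence of the underlying power-function exponents.

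\emph{Step 1 (automorphism groups).} A short computation in characteristic $2$ shows that the Frobenius map $x\mapsto x^2$ fixes the block $B$ of (\ref{eq:B-K}) (indeed $((x+1)^s+x^s+1)^2=(y+1)^s+y^s+1$ with $y=x^2$, so the excluded set is Frobenius-stable), and likewise fixes the blocks $B_s$ of the $\mathbb{AP}_{n,s}$ and the blocks of the $\mathbb{OV}_{n,s}$. Together with Theorem \ref{thm:intersection-blocks} this gives $\mathrm{A\Gamma L}_1(q)\le\Aut(\mathbb{KA}_{n,i})$ with block stabiliser exactly $\langle x\mapsto x^2\rangle$, so $\Aut(\mathbb{KA}_{n,i})$ is a $2$-transitive group of degree $2^n$ containing $\mathrm{A\Gamma L}_1(q)$. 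By the classification of finite $2$-transitive groups, the almost simple groups of degree $2^n$ (essentially $\PSL_2(2^n-1)$ when $2^n-1$ is a prime power) have dihedral Sylow $2$-subgroups and hence cannot contain the elementary abelian translation group, so $\Aut$ is of affine type; and by Hering's classification of transitive linear groups, the only candidates strictly above $\mathrm{A\Gamma L}_1(q)$ are $\F_2^n\rtimes H$ with $\GaL_{n/e}(2^e)\trianglelefteq H$ for a proper divisor $e\mid n$ (including $e=1$, i.e.\ $\AGL_n(2)$) together with finitely many sporadic cases. I would rule these out by showing that no element outside $\mathrm{A\Gamma L}_1(q)$ stabilises $B$, using Lemma \ref{lem:Dil-Dob}, which exhibits $\hat{f}_B$ as the Walsh spectrum of the rigid cubic semi-bent function $\Tr(x^3)$ re-parametrised by $\mu\mapsto\mu^{(2^i+1)/3}$ (the same being true for the $\mathbb{AP}_{n,s}$ via Proposition \ref{prop:val-spec} and for the $\mathbb{OV}_{n,s}$ via \cite{DT192}). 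Carrying out this elimination \emph{uniformly in odd $n$} is, I expect, the principal obstacle.

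\emph{Step 2 (from designs to exponents).} Granting $\Aut=\mathrm{A\Gamma L}_1(q)$ for all these designs, any isomorphism between two of them normalises $\mathrm{A\Gamma L}_1(q)$ in $\Sym(q)$; since the translation subgroup is characteristic and the normaliser of a Singer cycle in $\GL_n(2)$ equals $\GaL_1(2^n)$, one gets $N_{\Sym(q)}(\mathrm{A\Gamma L}_1(q))=\mathrm{A\Gamma L}_1(q)$. Hence the isomorphism lies in $\mathrm{A\Gamma L}_1(q)$ and, after composing with a suitable element of $\GA_1(q)$, we may assume $KA_{n,j}=\{ax^{2^\ell}+b:x\in KA_{n,i}\}$ for some $a\in\gf(q)^*$, $b\in\gf(q)$, $0\le\ell<n$ (and similarly for an $\mathbb{AP}$- or $\mathbb{OV}$-block). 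Passing to Walsh transforms, $\hat{f}_{KA_{n,j}}(\mu)=(-1)^{\Tr(b\mu)}\hat{f}_{KA_{n,i}}((a\mu)^{2^{-\ell}})$; comparing \emph{supports} and using $\mathrm{Supp}(\hat{f}_E)=\{\mu:\Tr(\mu)=1\}$ for $f_E=\Tr(x^3)$ (as in the proof of Theorem \ref{thm:Kassami}), after the substitution $\xi=\mu^{(2^i+1)/3}$ the support condition becomes $\Tr(\xi^{c})\equiv\Tr(a^{(2^i+1)/3}\xi)$ with $c=2^{-\ell}(2^j+1)/(2^i+1)$; since $\Tr(\xi^c)$ is $\F_2$-linear in $\xi$ only when $c$ is a power of $2$ modulo $q-1$, this forces $a=1$ and $2^j+1\equiv 2^m(2^i+1)\pmod{q-1}$ for some $m$. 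The analogous comparisons with the $\mathbb{AP}_{n,s}$ and $\mathbb{OV}_{n,s}$ blocks yield congruences relating $(2^i+1)/3$ to $1/(2^{2k}-2^k+1)$, to $2^k+1$, to $6$, to $3\cdot2^{(n+1)/2}+4$, and to the Glynn~II, Welch, Niho and Dobbertin exponents.

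\emph{Step 3 (the exponent congruences).} A binary-support argument (comparing bit-patterns modulo $n$) --- the one underlying $\mathbb{KA}_{n,i}\cong\mathbb{KA}_{n,n-i}$ --- shows $2^j+1\equiv2^m(2^i+1)\pmod{q-1}$ holds iff $j\in\{i,n-i\}$, so the $\frac{\phi(n)}{2}$ designs $\mathbb{KA}_{n,i}$ with $1\le i\le\frac{n-1}{2}$ are pairwise non-isomorphic; the remaining congruences have no solution because $(2^i+1)/3$ has a different binary signature, which on the function side reflects the known CCZ-inequivalence of the Kasami functions to the Gold, Welch, Niho, Dobbertin and o-monomial functions for the relevant $n$. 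A more computational but robust alternative to Steps 1--3 is to compare genuine combinatorial invariants --- the $\F_2$-dimensions ($2$-ranks) of the block codes, or the multisets of $4$-subset block-counts --- of $\mathbb{KA}_{n,i}$, $\mathbb{AP}_{n,s}$ and $\mathbb{OV}_{n,s}$; exhibiting pairwise distinct values would settle the conjecture with no group theory at all. The two hard points --- presumably why this is stated only as a conjecture, confirmed by Magma for $n\in\{5,7\}$ --- are (i) the upper bound $\Aut=\mathrm{A\Gamma L}_1(q)$ uniformly in odd $n$, and (ii) making the exponent (or rank) comparisons uniform in $n$.
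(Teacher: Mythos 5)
The statement you are trying to prove is not a theorem of the paper at all: it is stated there only as a conjecture, supported by Magma computations for $n\in\{5,7\}$ and by the isomorphism classification of the eleven designs listed for $n=5$. So there is no paper proof to compare against, and your submission, by your own admission, is a research plan rather than a proof. The two places where it remains a plan are exactly the places where a proof would have to do real work. First, Step 1 never establishes the upper bound $\Aut(\mathbb{KA}_{n,i})=\mathrm{A\Gamma L}_1(q)$: you correctly observe that Frobenius fixes the base block and that Theorem \ref{thm:intersection-blocks} kills the $\GA_1(q)$-stabiliser, which gives the lower bound $\mathrm{A\Gamma L}_1(q)\le\Aut$, but the elimination of the larger affine $2$-transitive groups (e.g.\ $\AGL_n(2)$ and the $\GaL_{n/e}(2^e)$ cases) is only described as ``the principal obstacle.'' Without that bound, Step 2's normaliser argument has no starting point, since an isomorphism only conjugates the full automorphism group of one design onto that of the other.

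Second, and more substantively, the cross-family part of Step 3 is argued by appeal to CCZ-inequivalence of Kasami functions with Gold, Welch, Niho, Dobbertin and o-monomial functions. That inference is not valid: inequivalence of the defining functions does not imply non-isomorphism of the resulting orbit designs, and the paper's own $n=5$ data refutes the implicit converse direction of your dictionary --- the designs $\mathbb{AP}_{5,5}$, $\mathbb{OV}_{5,4}$, $\mathbb{OV}_{5,8}$ arise from a Niho APN monomial and two distinct o-monomials yet fall in a single isomorphism class, as do $\mathbb{AP}_{5,13}$, $\mathbb{OV}_{5,6}$, $\mathbb{OV}_{5,26}$. So the passage from ``design isomorphism'' to ``exponent congruence'' (and back) must be proved, not inferred from function-equivalence folklore; your Walsh-support computation in Step 2 is the right kind of tool for this, but as written it compares supports of Boolean functions that are only known up to the reparametrisation $\mu\mapsto\mu^{(2^i+1)/3}$ resp.\ $\mu\mapsto\mu^{1/s}$, and the step ``this forces $a=1$ and $2^j+1\equiv2^m(2^i+1)$'' needs a careful argument that set equality of $\{\xi:\Tr(\xi^c)=1\}$ with an affine-hyperplane-type set forces $c$ to be a power of $2$, uniformly in $n$. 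Your fallback suggestion (separating the designs by $2$-ranks or $4$-subset block-count multisets) is reasonable but is again only verified computationally for small $n$, which is precisely the status the conjecture already has in the paper.
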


\subsection{Linear codes from the $3$-designs of this paper}

A $t$-design $\mathbb  D=(\mathcal P, \mathcal B)$ induces a linear code over GF($p$) for
any prime $p$.
Let $\mathcal P=\{p_1, \dots, p_{\nu}\}$. For any block $B\in \mathcal B$,   the \emph{characteristic vector} of $B$ is defined
by the vector $\mathbf{c}_{B}
=(c_1, \dots, c_{\nu})\in \{0,1\}^{\nu}$, where
\begin{align*}
c_i=
\left\{
  \begin{array}{ll}
    1, & \text{if}~ p_i \in B, \\
    0, & \text{if}~ p_i \not \in B.
  \end{array}
\right.
\end{align*}
For a prime $p$, a \emph{linear code} $\mathsf{C}_{p}(\mathbb D)$ over the prime field $\mathrm{GF}(p)$ from the design $\mathbb D$ is spanned by the characteristic vectors of the blocks of $\mathbb B$, which is
 the subspace $\mathrm{Span}\{\mathbf{v}_{B}: B\in \mathcal B\}$ of the vector space $\mathrm{GF}(p)^{\nu}$.
Linear codes $\mathsf{C}_{p}(\mathbb D)$ from designs $\mathbb D$ have been studied and documented in the literature
\cite{AK92,Ding15,Ton98,Ton07}.
For the codes of $3$-designs in this paper, we propose the following two conjectures.
\begin{conj}\label{conj:code3-APN}
Let $n\ge 5$ be odd. Then  the binary linear code $\mathsf{C}_2(\mathbb{KA}_{n,1})$ has parameters $[2^{n}, 2n+1, 2^{n-1} -2^{\frac{n-1}{2}}]$ and weight enumerator
\[1+uz^{2^{n-1} -2^{\frac{n-1}{2}}}+vz^{2^{n-1}}+uz^{2^{n-1} +2^{\frac{n-1}{2}}}+z^{2^n},\]
where $u=2^{2n-1}-2^{n-1}$ and $v=2^{2n}+2^{n}-2$. In addition, the dual code $\mathsf{C}_2(\mathbb{KA}_{n,1})^{\perp}$ has parameters $[2^n,2^n-2n-1,6]$.
\end{conj}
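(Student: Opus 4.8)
The plan is to first make the block $KA_{n,1}$ completely explicit and then to recognise $\mathsf{C}_2(\mathbb{KA}_{n,1})$ as a classical code. I would start from Lemma~\ref{lem:Dil-Dob}: for $i=1$ one has $\frac{2^i+1}{3}=1$, so $\hat f_B(\mu)=\hat f_E(\mu)$ for every $\mu$, where $E=\{x\in\gf(q):\tr(x^3)=1\}$. Since the Walsh transform is injective on real-valued functions, this forces $f_B=f_E$, i.e. $KA_{n,1}=\{x\in\gf(q):\tr(x^3)=1\}$. Consequently the characteristic vector of a block $\pi_{a,b}(B)$ is the evaluation over $\gf(q)$ of the Boolean function $y\mapsto\tr\!\big(a^{-3}(y+b)^3\big)$; expanding the cube in characteristic~$2$ and using $\tr(z^2)=\tr(z)$ turns this into $\tr(cy^3)+\tr(\ell y)+\varepsilon$ for suitable $c=a^{-3}\in\gf(q)^{*}$, $\ell=\ell(a,b)\in\gf(q)$ and $\varepsilon=\varepsilon(a,b)\in\gf(2)$. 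Hence $\mathsf{C}_2(\mathbb{KA}_{n,1})\subseteq W$, where $W$ is the $\gf(2)$-span of $\{(\tr(cy^3))_{y}:c\in\gf(q)\}$, $\{(\tr(\ell y))_{y}:\ell\in\gf(q)\}$ and $\{\bone\}$.

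\textbf{The code equals $W$.} Taking $b=0$ puts every $\tr(cy^3)$ in the code, and then $\tr(c(y+b)^3)+\tr(cy^3)=\tr\!\big(((cb)^{1/2}+cb^2)y\big)+\tr(cb^3)$ lies in the code for all $c\in\gf(q)^{*}$, $b\in\gf(q)$. After the substitution $b=d^2$ the map $d\mapsto(cd^2)^{1/2}+cd^4$ is $\gf(2)$-linear with a $2$-element kernel, so its image is a hyperplane of $\gf(q)$; for $c=1$ this hyperplane is $\ker\tr$, and for $c\ne1$ a trace computation shows it is a different hyperplane, so the linear coefficients $(cb)^{1/2}+cb^2$ already span $\gf(q)$ over $\gf(2)$. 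Moreover the choice $b\ne0$, $c=b^{-3}$ annihilates the linear part while leaving the constant $\tr(1)=1$ (here $n$ odd is essential), so $\bone$ is in the code as well. Therefore $\mathsf{C}_2(\mathbb{KA}_{n,1})=W$. Since the $2$-cyclotomic cosets of $1$ and $3$ are distinct and of full size $n$ for odd $n\ge5$, and $\bone$ does not vanish at $0$ whereas the other generators do, the three families are $\gf(2)$-independent, so $\dim W=2n+1$.

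\textbf{Weight enumerator and minimum distance.} I would then use the standard dictionary between Hamming weight and character sums. A generic codeword is $g_{c,\ell,\varepsilon}(y)=\tr(cy^3+\ell y)+\varepsilon$, of weight $2^{n-1}-\tfrac12(-1)^{\varepsilon}S(c,\ell)$ with $S(c,\ell)=\sum_{y}(-1)^{\tr(cy^3+\ell y)}$. For $c=0$ the weight is $0$ or $2^n$ when $\ell=0$ and $2^{n-1}$ otherwise. For $c\ne0$, cubing is a permutation ($n$ odd), so $S(c,\ell)=\hat f_E(\ell c^{-1/3})$, and since $\tr(x^3)$ is semi-bent $S\in\{0,\pm2^{(n+1)/2}\}$; Parseval together with $\sum_{\mu}\hat f_E(\mu)=q$ give the spectral multiplicities $N_0=2^{n-1}$ and $N_{+}+N_{-}=2^{n-1}$. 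Summing over all $(c,\ell,\varepsilon)$ (the bit $\varepsilon$ merely interchanging the two weights $2^{n-1}\pm2^{(n-1)/2}$) yields exactly one word of weight $0$, one of weight $2^n$, $v=(2^n-1)(2^n+2)=2^{2n}+2^n-2$ words of weight $2^{n-1}$, and $u=(2^n-1)2^{n-1}=2^{2n-1}-2^{n-1}$ words of each of the weights $2^{n-1}\pm2^{(n-1)/2}$; in particular the minimum distance is $2^{n-1}-2^{(n-1)/2}$, and the identity $2+2u+v=2^{2n+1}$ re-confirms $\dim W=2n+1$.

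\textbf{The dual and the main obstacle.} From the three defining parity checks, $W^{\perp}=\{v\in\gf(2)^{\gf(q)}:\sum_{y}v_y=\sum_{y}v_yy=\sum_{y}v_yy^3=0\}$, which is exactly the extension by an overall parity bit of the primitive narrow-sense double-error-correcting BCH code of length $2^n-1$; hence $\dim W^{\perp}=2^n-2n-1$. Because $\bone\in W$, every word of $W^{\perp}$ has even weight; weight $2$ is impossible (it would force two equal coordinates), and weight $4$ is impossible since for distinct $a,b,c$ with $d=a+b+c$ one has $a^3+b^3+c^3+d^3=(a+b)(b+c)(c+a)\ne0$. Thus $d(W^{\perp})\ge6$, and equality will follow from the classical fact that the length-$(2^n-1)$ double-error-correcting BCH code has true minimum distance $5$ for odd $n\ge5$, equivalently from exhibiting six distinct elements of $\gf(2^n)$ whose first and third power sums both vanish. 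The only genuinely nonroutine steps are this last point — pinning the dual distance to exactly $6$ — and the surjectivity statement in the second paragraph (the behaviour of $b\mapsto(cb)^{1/2}+cb^2$ and the production of $\bone$); everything else is bookkeeping with quadratic/semi-bent Walsh spectra.
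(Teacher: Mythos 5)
The statement you are proving is not proved in the paper at all: it is stated as Conjecture~\ref{conj:code3-APN} and only verified by Magma for $n\in\{5,7\}$. Your argument, by contrast, essentially settles it, and the route is sound. The identification $KA_{n,1}=\{x:\tr(x^3)=1\}$ is correct (one can also see it directly: for $i=1$ one has $(x+1)^3+x^3+1=x^2+x$, whose image is $\ker\tr$, and cubing is a bijection for odd $n$), so the blocks are exactly the supports of the Boolean functions $y\mapsto\tr(a^{-3}(y+b)^3)$, and your expansion shows the code sits inside the span $W$ of $\bone$, the linear traces and the functions $\tr(cy^3)$ --- i.e.\ the dual of the extended double-error-correcting BCH code. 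Your two self-flagged steps do check out: for the spanning step, the image of the $\gf(2)$-linear map $b\mapsto (cb)^{1/2}+cb^2$ is precisely the hyperplane $\{x:\tr(c^{-1/3}x)=0\}$ (take the trace against $\alpha$ and solve $\alpha^2c=(\alpha c)^{1/2}$), so distinct $c$ give distinct hyperplanes, the linear parts span $\gf(q)$, and the choice $c=b^{-3}$ produces $\bone$ since $\tr(1)=1$ for odd $n$; hence the code equals $W$ and has dimension $2n+1$ by the cyclotomic-coset count. The weight enumerator then follows exactly as you say from the semi-bent (Gold) spectrum of $\tr(y^3)$ --- only $N_0=2^{n-1}$ and $N_++N_-=2^{n-1}$ are needed because $\varepsilon$ symmetrizes --- and the counts reproduce $u$ and $v$ of the conjecture. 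For the dual, the parity/power-sum description, the identity $a^3+b^3+c^3+(a+b+c)^3=(a+b)(b+c)(c+a)$, and evenness give $d\ge 6$; the only ingredient you leave to the literature is that the primitive narrow-sense BCH code of length $2^n-1$ with designed distance $5$ actually contains weight-$5$ words for $n\ge 4$ (equivalently $A_6>0$ in the extension), which is classical --- the full weight distribution of these codes and their extensions is known for odd $n$ --- so a precise citation (e.g.\ MacWilliams--Sloane, Ch.~9, or the known weight distribution of the $2$-error-correcting BCH codes) should be added rather than the phrase ``classical fact.'' With that citation supplied, your argument upgrades Conjecture~\ref{conj:code3-APN} to a theorem, and it also immediately explains why $\mathsf{C}_2(\mathbb{KA}_{n,1})$ coincides with a well-understood three-weight code rather than something new.
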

\begin{conj}\label{conj:code-APN}
Let $n\ge 5$ be odd and $i\in \left \{i: 1\le i \le \frac{n-1}{2}, \mathrm{gcd}(i,n)=1\right \}$. Then
the $\frac{\phi(n)}{2}$ binary linear codes $\mathsf{C}_2(\mathbb{KA}_{n,i})$ are pairwise inequivalent.
\end{conj}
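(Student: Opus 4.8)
The plan is to reduce code‑equivalence to an isomorphism question for an auxiliary design and then to a rank invariant. \emph{Step 1: identify the code.} Since the blocks of $\mathbb{KA}_{n,i}$ form a single $\GA_1(2^n)$‑orbit and $\GA_1(2^n)=\AGL_1(2^n)$ is transitive on $\gf(2^n)$, the code $\mathsf{C}_2(\mathbb{KA}_{n,i})$ is affine‑invariant; after identifying the point set with $\gf(2^n)$ and puncturing the coordinate $0$ it becomes a binary cyclic code of length $2^n-1$ whose defining set is a union of $2$‑cyclotomic cosets readable off from Lemma~\ref{lem:Dil-Dob}. One should first settle the parameters: I expect (as for $i=1$ in Conjecture~\ref{conj:code3-APN}, and by analogy with the o‑monomial case) that every $\mathsf{C}_2(\mathbb{KA}_{n,i})$ has the same length $2^n$, dimension $2n+1$ and five‑weight distribution, because $\mu\mapsto\mu^{d_i}$ with $d_i\equiv(2^i+1)/3$ is a bijection and so $f_{KA_{n,i}}$ is semi‑bent with the same Walsh spectrum as $f_E=\Tr(x^3)$; if the parameters already differed the conjecture would be immediate for such pairs, so the substantive case is when they agree, and then the weight enumerator cannot separate the codes.

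\emph{Step 2: reduce to an auxiliary design.} Any coordinate permutation $\sigma$ with $\sigma(\mathsf{C}_2(\mathbb{KA}_{n,i}))=\mathsf{C}_2(\mathbb{KA}_{n,j})$ maps minimum‑weight codewords to minimum‑weight codewords, hence carries the $\gf(2)$‑span $C'_i$ of the minimum‑weight words of $\mathsf{C}_2(\mathbb{KA}_{n,i})$ onto $C'_j$; since (conjecturally) $\mathsf{C}_2(\mathbb{KA}_{n,i})=C'_i\oplus\langle\bonee\rangle$ and $\bonee$ is fixed by every coordinate permutation, the converse holds too, so $\mathsf{C}_2(\mathbb{KA}_{n,i})\sim\mathsf{C}_2(\mathbb{KA}_{n,j})$ if and only if $C'_i\sim C'_j$. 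Moreover the minimum‑weight supports form a $\GA_1(2^n)$‑invariant family of blocks of a fixed size $2^{n-1}-2^{(n-1)/2}$ (because $\GA_1(2^n)\subseteq\Aut(\mathsf{C}_2(\mathbb{KA}_{n,i}))$), hence a $2$‑design $\mathbb{D}_i$ with $\mathsf{C}_2(\mathbb{D}_i)=C'_i$, so $C'_i\sim C'_j$ forces $\mathbb{D}_i\cong\mathbb{D}_j$; it then suffices to prove the designs $\mathbb{D}_i$ pairwise non‑isomorphic, and for this the paper's own machinery (Lemmas~\ref{lem:I-N} and \ref{lem:lam-I} and Theorem~\ref{thm:3designs-pracchar}) gives a computational handle. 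Establishing this also re‑proves the $\mathbb{KA}$‑part of Conjecture~\ref{conj:phi/2}, since isomorphic designs give equivalent codes.

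\emph{Step 3: the invariant, and the obstacle.} The natural separating invariant is a $\gf(2)$‑rank: the $2$‑rank of $\mathbb{D}_i$, i.e.\ $\dim C'_i$, and if this is already $i$‑dependent we are done; otherwise one passes to a higher invariant, such as the rank of the second‑difference incidence matrix with $(u,x)$‑entry the indicator of $\big(u^{d_i}x+(1+u)^{d_i}\big)^{3}+x^{3}+1=0$ over $u\in\gf(2^n)\setminus\gf(2)$ — the analogue of the Edel--Pott $\Sigma$‑rank for the power function hidden in $f_{KA_{n,i}}$ — which simultaneously bounds $\Aut(\mathsf{C}_2(\mathbb{KA}_{n,i}))$ from above and, once it is pinned to $\mathrm{A\Gamma L}_1(2^n)$, lets one finish by a cyclotomic‑coset computation using the known isomorphism $\mathbb{KA}_{n,i}\cong\mathbb{KA}_{n,n-i}$. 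The hard part is carrying any of this out \emph{uniformly in $n$}: all the power functions involved are almost bent, so the differential and Walsh spectra coincide across the family and one is forced onto the much more delicate rank invariants, which must be expressed in closed form as functions of $n$ and $i$ and shown to take distinct values for the $\phi(n)/2$ admissible $i$ — exactly the kind of computation that made the inequivalence of the Dillon--Dobbertin Singer difference sets difficult. The confirmations by Magma for $n\in\{5,7\}$ are consistent with this picture and supply base cases from which to guess the rank formula.
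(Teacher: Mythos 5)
This statement is one of the paper's open conjectures: the author offers no proof at all, only Magma verification for $n\in\{5,7\}$, so there is no ``paper proof'' for your argument to match, and the question is simply whether your proposal settles the conjecture. It does not. Step 1 already rests on an unproved claim (that all the codes $\mathsf{C}_2(\mathbb{KA}_{n,i})$ share the length-$2^n$, dimension-$(2n+1)$, five-weight profile — the paper conjectures this only for $i=1$, in Conjecture~\ref{conj:code3-APN}); Step 2 hinges on a decomposition you yourself flag as conjectural, namely $\mathsf{C}_2(\mathbb{KA}_{n,i})=C_i'\oplus\langle\bonee\rangle$, and on the unverified assertion that the minimum-weight supports are precisely the blocks of a design whose code is $C_i'$; and Step 3 openly concedes that the separating invariant (a $2$-rank or a ``second-difference'' rank, in closed form as a function of $n$ and $i$) has not been computed, which is exactly the mathematical content of the conjecture. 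So the proposal is a research plan, not a proof: every genuinely hard step is deferred or assumed.

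There is also a structural concern with the reduction itself. Even if Steps 1--2 were fully justified, you would have traded the inequivalence of the codes $\mathsf{C}_2(\mathbb{KA}_{n,i})$ for the non-isomorphism of auxiliary designs $\mathbb{D}_i$ built from minimum-weight supports, about which even less is known than about $\mathbb{KA}_{n,i}$; this is not obviously progress, and it is the reverse of the only implication the paper records (code inequivalence would \emph{imply} non-isomorphism of the $\mathbb{KA}_{n,i}$, which is how Conjecture~\ref{conj:code-APN} feeds Conjecture~\ref{conj:phi/2}, not the other way around). As you note, the Walsh and differential spectra of the Kasami exponents coincide across the family, so any proof must produce a finer invariant and evaluate it uniformly in $n$ and $i$; until such a computation is carried out (or some other distinguishing argument is found), the conjecture remains exactly as open as it is in the paper, with $n\in\{5,7\}$ as the only evidence.
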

Conjecture \ref{conj:code3-APN} was confirmed by Magma for $n \in \{5, 7\}$.
If Conjecture \ref{conj:code3-APN} is true, then $\mathsf{C}_2(\mathbb{KA}_{n,1})$ holds three $3$-designs, and $\mathsf{C}_2(\mathbb{KA}_{n,1})^{\perp}$
 holds exponentially many $3$-designs (see \cite{DL17} for detail).

Conjecture \ref{conj:code-APN} was confirmed by Magma for $n \in \{5, 7\}$. If Conjecture \ref{conj:code-APN} is true, then
the $\frac{\phi(n)}{2}$ designs $\mathbb{KA}_{n,i}$ are pairwise non-isomorphic.

The codes $\mathsf{C}_2(\mathbb{KA}_{5,1})$, $\mathsf{C}_2(\mathbb{KA}_{5,2})$ and $\mathsf{C}_2(\mathbb{KA}_{7,1})$ have parameters $[32,11, 12]$, $[32,21,6]$ and $[128, 15, 56]$, respectively.
These codes are optimal. The binary code $\mathsf{C}_2(\mathbb{KA}_{7,2})$ is a self-dual linear code with parameters $[128,64,16]$.
The examples of codes above demonstrate that it is worthwhile to  study $3$-designs $\mathbb{KA}_{n,i}$ and their codes $\mathsf{C}_2(\mathbb{KA}_{n,i})$ ,
as these designs
 may yield optimal linear codes or self-dual binary codes.

\section{Summary and concluding remarks}\label{sec:contribution}

The main contributions of this paper are the following:
\begin{itemize}
\item The first one is the characterization of those base blocks supporting $3$-designs by the characteristic functions
of the base blocks,
the Walsh transforms of their characteristic functions and the number of solutions of some special equations associated with the base blocks.
\item The second is the sufficient condition for the stabilizer  of a base block under the action of the general affine group to be trivial, which was used to determine the parameters of some $3$-designs.
\item The third is the two general constructions of $3$-designs with APN functions over $\gf(2^n)$.
The first construction has produced infinite families of
      $3$-designs from Kassami APN functions over $\gf(2^n)$.
      Magma programs show that the second construction yields also $3$-designs from APN power functions over $\gf(2^n)$.
\end{itemize}

There may be other ways to select bases blocks supporting $3$-designs, a lot of work can be done
in this direction. The reader is warmly invited to attack the conjectures presented in this paper.

\noindent
{\bf Acknowledgements.} The author  thanks  Prof. Cunsheng Ding and Prof. Maosheng Xiong
for helpful discussions. The author is grateful to Prof.  Maosheng Xiong for hosting him
as a post-doctoral fellow at the Department of Mathematics, The Hong Kong University of
Science and Technology. The author acknowledges support  from  National Natural Science Foundation of China (Grant No. 11871058), 
and China West Normal University (14E013, CXTD2014-4 and the Meritocracy Research Funds).

\end{document}